\tikzset{>=latex}
\renewcommand*{\backref}[1]{}
\renewcommand*{\backrefalt}[4]{%
  \ifcase #1%
  \or [Page~#2.]%
  \else [Pages~#2.]%
  \fi%
}
\theoremstyle{plain}
\newtheorem{lemma}{Lemma}
\newtheorem{proposition}[lemma]{Proposition}
\newtheorem{theorem}[lemma]{Theorem}
\newtheorem{corollary}[lemma]{Corollary}
\theoremstyle{definition}
\newtheorem{definition}[lemma]{Definition}
\newcommand{\g}{\mathfrak{g}}
\renewcommand{\d}{\partial}
\newcommand{\gl}{\mathfrak{gl}}
\newcommand{\so}{\mathfrak{so}}
\newcommand{\be}{\boldsymbol{e}}
\newcommand{\bv}{\boldsymbol{v}}
\newcommand{\bzero}{\boldsymbol{0}}
\newcommand{\F}{\mathcal{C}^\infty}
\newcommand{\eH}{\mathcal{H}}
\newcommand{\eV}{\mathcal{V}}
\newcommand{\eL}{\mathcal{L}}
\newcommand{\eX}{\mathcal{X}}
\newcommand{\eG}{\mathcal{G}}
\newcommand{\eC}{\mathcal{C}}
\newcommand{\eA}{\mathcal{A}}
\newcommand{\eB}{\mathcal{B}}
\newcommand{\Ad}{\operatorname{Ad}}
\newcommand{\id}{\operatorname{id}}
\newcommand{\tr}{\operatorname{tr}}
\renewcommand{\div}{\operatorname{div}}
\newcommand{\RR}{\mathbb{R}}
\newcommand{\V}{\mathbb{V}}
\newcommand{\W}{\mathbb{W}}
\newcommand{\ZZ}{\mathbb{Z}}
\newcommand{\GL}{\operatorname{GL}}
\newcommand{\SO}{\operatorname{SO}}
\newcommand{\Ann}{\operatorname{Ann}}
\newcommand{\End}{\operatorname{End}}
\newcommand{\Hom}{\operatorname{Hom}}
\newcommand{\coker}{\operatorname{coker}}
\newcommand{\im}{\operatorname{im}}
\newcommand{\Xbar}{\overline{X}}
\newcommand{\Ybar}{\overline{Y}}
\newcommand{\yes}{{\color{dkgr}{\boldsymbol{\checkmark}}}}
\newcommand{\pder}[2]{\frac{\partial #1}{\partial #2}}
\definecolor{dkgr}{rgb}{0,0.6,0}
\definecolor{gris}{rgb}{0.5,0.5,0.5}
\numberwithin{equation}{section}
\begin{document}

\title{On the intrinsic torsion of spacetime structures}

\author{José Figueroa-O'Farrill}
\address{Maxwell Institute and School of Mathematics, The University
  of Edinburgh, James Clerk Maxwell Building, Peter Guthrie Tait Road,
  Edinburgh EH9 3FD, Scotland, United Kingdom}
\email{\href{mailto:j.m.figueroa@ed.ac.uk}{j.m.figueroa@ed.ac.uk}}
\begin{abstract}
  We briefly review the notion of the intrinsic torsion of a
  $G$-structure and then go on to classify the intrinsic torsion of
  the $G$-structures associated with spacetimes: namely, galilean (or
  Newton--Cartan), carrollian, aristotelian and bargmannian.  In the case
  of galilean structures, the intrinsic torsion classification
  agrees with the well-known classification
  into torsionless, twistless torsional and torsional Newton--Cartan
  geometries.  In the case of carrollian structures, we find that
  intrinsic torsion allows us to classify Carroll manifolds into four
  classes, depending on the action of the Carroll vector field on the
  spatial metric, or equivalently in terms of the nature of the null hypersurfaces of a
  lorentzian manifold into which a carrollian geometry may embed.
  By a small refinement of the results for galilean and carrollian
  structures, we show that there are sixteen classes of aristotelian
  structures, which we characterise geometrically.  Finally, the bulk
  of the paper is devoted to the case of bargmannian structures, where we
  find twenty-seven classes which we also characterise geometrically
  while simultaneously relating some of them to the galilean and
  carrollian structures.
\end{abstract}
\dedicatory{Dedicated to Dmitri Vladimirovich Alekseevsky on his eightieth birthday}
\thanks{EMPG-20-14, \href{https://orcid.org/0000-0002-9308-9360}{ORCID: 0000-0002-9308-9360}}
\maketitle
\tableofcontents

\section{Introduction}
\label{sec:introduction}

What are the possible geometries of space and time?  An answer to this
question was given (subject to some assumptions) by Bacry and
Lévy-Leblond \cite{MR0238545}, who pioneered the classification of
kinematical symmetries.  Later work of Bacry and Nuyts \cite{MR857383}
relaxed some of the assumptions in the earlier work and classified
kinematical Lie algebras in four space-time dimensions.  Taking these
ideas to their logical conclusion, Stefan Prohazka and I classified
(simply-connected, spatially isotropic) homogeneous kinematical
spacetimes in arbitrary dimension \cite{Figueroa-OFarrill:2018ilb}.
We found that such spacetimes are of one of several classes: lorentzian,
galilean (a.k.a. Newton--Cartan), carrollian and
aristotelian.\footnote{The classification also gives some riemannian
  spaces and in two dimensions also some spacetimes without any
  discernable structure.}  The geometry of such homogeneous
kinematical spacetimes was further studied in
\cite{Figueroa-OFarrill:2019sex}, together with Ross Grassie.

Being homogeneous, these spacetimes serve as Klein models for more
realistic spacetime geometries, in the same way that Minkowski
spacetime serves as a model for the lorentzian spacetimes of General
Relativity.  Technically, the realistic spacetimes are Cartan
geometries modelled on the kinematical Klein geometries.  A closer
analysis of the Klein geometries reveals that they fall into far fewer
classes than their number might suggest: all galilean homogeneous
spacetimes, for example, are Klein models for the same Cartan
geometry and the same is true for aristotelian and carrollian (with the
exception of the lightcone) spacetimes.  Hence it makes sense to study
galilean, carrollian, aristotelian structures without reference to a
particular homogeneous model.

One way to do this is to re-interpret the relevant structure as a
$G$-structure; that is, as a principal $G$-subbundle of the frame
bundle or, more prosaically, as a consistent way to restrict to moving
frames which are related by local $G$-transformations; such as
orthonormal frames in a riemannian manifold.  There is a notion of
affine connection adapted to a $G$-structure.  Typically these
connections have torsion and there exists a component of the torsion
which is independent of the adapted connection.  This is the intrinsic
torsion of the $G$-structure and it is the subject of the present
paper.  The intrinsic torsion is the first obstruction to the
integrability of the $G$-structure, which roughly speaking says that
there exists a coordinate atlas to the manifold whose transition
functions take values in $G$.

Studying the intrinsic torsion might seem a strange approach coming
from the direction of General Relativity because in lorentzian
geometry and in the absence of any additional structure, the intrinsic
torsion of a metric connection vanishes -- that being essentially the
Fundamental Theorem of riemannian geometry.  That this is not the end
of the story can be gleaned from the emergence of natural connections
other than the Levi-Civita connection in the context of $1/c$
expansions of General Relativity
\cite{VandenBleeken:2017rij,Hansen:2019svu} and in particular from the
emergence of (torsional) Newton--Cartan geometry in that limit.
Indeed, we will see that for the non-lorentzian $G$-structures, the
intrinsic torsion will give us some information. For example, we will
see that the classification of galilean $G$-structures \cite{MR334831}
by intrinsic torsion coincides with the classification of
Newton--Cartan geometries into torsionless (NC), twistless torsional
(TTNC) and torsional (TNC) \cite{Christensen:2013lma}. For carrollian,
aristotelian and indeed bargmannian structures, their classifications via
intrinsic torsion seem to be novel. We will see that there are four
classes of carrollian $G$-structures, sixteen classes of aristotelian
$G$-structures and twenty-seven classes of bargmannian $G$-structures,
which we will classify and characterise geometrically in terms of the
tensor fields which characterise the $G$-structure.

This paper is organised as follows.  In
Section~\ref{sec:intrinsic-torsion-g} we review the very basic notions
about $G$-structures and their intrinsic torsion.  In
Section~\ref{sec:g-structures} we review the useful language of
associated vector bundles, which we use implicitly in much of the
paper, discuss adapted connections in
Section~\ref{sec:adapted-connections} and the intrinsic torsion of a
$G$-structure in Section~\ref{sec:intrinsic-torsion}.  The rest of the
paper consists of four worked out examples of increasing complexity of
the classification of the intrinsic torsions of a $G$-structure.
For each one of the spacetime $G$-structures (galilean, carrollian,
aristotelian and bargmannian) we first work out the group $G$ and
identify the characteristic tensor fields which define and are defined
by the $G$-structure, work out the lattice of $G$-submodules where the intrinsic
torsion lives and hence classify the distinct classes of
$G$-structures, and then we characterise them geometrically in terms
of the characteristic tensor fields of the $G$-structure.

Section~\ref{sec:gal-g-struct} is devoted to galilean structures.  A
galilean structure is defined by a nowhere-vanishing ``clock''
one-form $\tau$ spanning the kernel of a positive-semidefinite
cometric $\gamma$.  Proposition~\ref{prop:gal-d-tau} shows that the
intrinsic torsion of a galilean structure is captured by $d\tau$.
Theorem~\ref{thm:gal} then shows that there are three types of
galilean structures, depending on whether or not $\tau \wedge d\tau$
is zero and, if so, whether or not $d\tau = 0$.

Section~\ref{sec:car-g-struct} is devoted to carrollian structures,
which are defined by a nowhere-vanishing vector field $\xi$ spanning
the kernel of a positive-semidefinite metric $h$.
Proposition~\ref{prop:L-xi-h} shows that the intrinsic torsion is
captured by $\eL_\xi h$ and in Theorem~\ref{thm:car} we show that
there are four types of carrollian structures: depending on whether or
not $\eL_\xi h = 0$, and if not, whether the symmetric tensor $\eL_\xi
h$  is traceless or pure trace or neither.  Recently a different
approach to the study of carrollian geometry has been proposed
\cite{Ciambelli:2019lap}, exhibiting the carrollian geometry as a
principal line bundle (with structure group the one-dimensional group
generated by $\xi$) over a riemannian manifold with metric $h$.  It
would be interesting to relate our two approaches.

Section~\ref{sec:ari-g-struct} is devoted to aristotelian structures.
An aristotelian geometry admits simultaneously a galilean structure
and a carrollian structure and hence we can re-use and refine the
results in the previous two sections to arrive at
Theorem~\ref{thm:ari}, which lists the sixteen types of aristotelian
structures.

Section~\ref{sec:bar-g-struct} is the longest and is devoted to
bargmannian structures.  A bargmannian structure consists of a lorentzian
manifold $(M,g)$ and a nowhere-vanishing null vector field $\xi$.  As
advocated in \cite{Duval:2014uoa}, bargmannian structures serve as a
bridge between galilean and carrollian structures and one can recover
the results of Sections~\ref{sec:gal-g-struct} and
\ref{sec:car-g-struct} as special cases.
Proposition~\ref{prop:intr-tors-is-nabla} shows that the intrinsic
torsion of a bargmannian structure is captured by $\nabla^g \xi$ -- the
covariant derivative of the null vector field relative to the
Levi-Civita connection of $g$.  We find that, perhaps surprisingly,
there are twenty-seven types of bargmannian structures, as described in
Theorem~\ref{thm:bar}.  These structures defined a partially ordered
set which is depicted in Figure~\ref{fig:bar-hasse-submodules}.  In
deriving the results on bargmannian structures we found the need to
extend the theory of null hypersurfaces (e.g.,
\cite{MR886772,MR1777311}) to non-involutive null distributions.  In
Section~\ref{sec:corr-betw-bargm} we relate them to galilean and
carrollian structures.  We will find that all three classes of
galilean structures can arise as null reductions of bargmannian
manifolds, whereas all four classes of carrollian structures can
arise as embedded null hypersurfaces in bargmannian manifolds.  This
then allows us to rephrase the carrollian classification in terms of
the classification of null hypersurfaces in a lorentzian manifold.
The rôle of null hypersurfaces in carrollian geometry was already
emphasised in \cite{Hartong:2015xda}.

The paper ends with some conclusions and two appendices.
Appendix~\ref{sec:hypers-orth}, included for completeness, contains a
proof of a result concerning hypersurface orthogonality which is used
often in Section~\ref{sec:bar-g-struct}.  The result is often quoted,
but hardly ever proved.  Finally, Appendix~\ref{sec:some-spec-dimens}
treats some special dimensions.  In the bulk of the paper we work with
generic $n$-dimensional galilean, carrollian and aristotelian
structures and ($n+1$)-dimensional bargmannian structures and the results
hold for $n>2$ and $n\neq 5$.  When $n=2$ there is no distinction
between carrollian and galilean structures and hence we will need to
look again at the classifications.  This is done in
Appendix~\ref{sec:two-dimens-galil}, which also treats the
two-dimensional aristotelian structures. We find that there are now
two galilean, two carrollian and four aristotelian structures in two
dimensions. When $n=2$ the classification of bargmannian structures also simplifies and this is
described in Appendix~\ref{sec:three-dimens-bargm}.  There are now
only eleven three-dimensional bargmannian structures.  When $n=5$ we
find in the galilean, aristotelian and bargmannian cases,
$\so(4)$-submodules of type $\wedge^2\RR^4$, which are not
irreducible, leading to a refinement of the classifications.  This is
described briefly in Appendices~\ref{sec:five-dimens-galil} for
galilean structures, \ref{sec:five-dimens-arist} for aristotelian
structures and \ref{sec:six-dimens-bargm} for bargmannian structures.  We
find that there are 5 galilean structures, 32 aristotelian structures
and 47 bargmannian structures in these dimensions.

\section{The intrinsic torsion of a $G$-structure}
\label{sec:intrinsic-torsion-g}

In this section we briefly review the language associated to
$G$-structures, adapted connections and their intrinsic torsion.  It
sets the stage for the calculations in the remaining sections.
Readers familiar with this language may simply skim for notation and
go directly to the calculations starting in the next section.  I do
not include any proofs, which can be found in, say,
\cite{MR1336823,MR532831}.

\subsection{$G$-structures}
\label{sec:g-structures}

Let $M$ be an $n$-dimensional smooth manifold and let $p \in M$.  By a
\textbf{frame} at $p$ we mean a vector space isomorphism $u : \RR^n
\to T_p M$.  Since $\RR^n$ has a distinguished basis (the elementary
vectors $\be_i$), its image under $u$ is a basis $(u(\be_1),
u(\be_2),\dots,u(\be_n))$  for $T_p M$.  If $u, u'$ are two frames at $p$ then
$g := u^{-1} \circ u  \in \GL(n,\RR)$.  Rewriting this as $u' = u \circ
g$ defines a right action of $\GL(n,\RR)$ on the set $F_p(M)$
of frames at $p$.  This action is transitive and free, making $F_p(M)$
into a torsor (a.k.a. principal homogeneous space) of $\GL(n,\RR)$.

The disjoint union $F(M) = \bigsqcup_{p\in M} F_p(M)$ can be made into the
total space of a principal $\GL(n,\RR)$-bundle called the
\textbf{frame bundle} of $M$.  In particular, we have a smooth free
right-action  of $\GL(n,\RR)$; that is, a diffeomorphism $R_g : F(M)
\to F(M)$ for every $g \in \GL(n,\RR)$, where $R_g u = u \circ g$ for
every frame $u \in F(M)$.  Let $\pi : F(M) \to M$ be the smooth map
sending a frame $u \in F_p(M)$ to $p \in M$.  It follows that $\pi
\circ R_g = \pi$ for all $g \in \GL(n,\RR)$, since $\GL(n,\RR)$ acts
on the frames at $p$.  A local section $s : U \to F(M)$, where $U
\subset M$, defines a \textbf{moving frame} (or \textbf{vielbein})
$(X_1,\dots,X_n)$ in $U$, where $(X_i)_p = s(p)(\be_i)$ for all $p \in
U$.

Moving frames exist on $M$ by virtue of it being a smooth manifold.
Indeed, if $(U,x^1,\dots,x^n)$ is a local coordinate chart, then
$\left(\pder{~}{x^1},\dots,\pder{~}{x^n} \right)$ is a moving frame in
$U$.  If $(V,y^1,\dots,y^n)$ is an overlapping coordinate chart, then
in the overlap $U \cap V$, the moving frames are related by a local
$\GL(n,\RR)$ transformation $g_{UV}: U \cap V \to \GL(n,\RR)$: namely,
the jacobian matrix of the change of coordinates.

It may happen, though, that we can restrict ourselves to distinguished
moving frames which are related on overlaps by local
$G$-transformations, for some subgroup $G < \GL(n,\RR)$.  For example,
we may endow $M$ with a riemannian metric and restrict ourselves to
orthonormal moving frames, which are related on overlaps by local
$O(n)$ transformations.  For every $p\in M$ let $P_p \subset F_p(M)$
denote the set of orthonormal frames at $p$.  Then $\gamma \in O(n)$
acts on $P_p$ by sending an orthonormal frame $u$ to $u' := u \circ
\gamma$, which is also an orthonormal frame.  The disjoint union $P =
\bigsqcup_{p \in M} P_p$ defines a principal $O(n)$-subbundle of $F(M)$.
We call $P \subset F(M)$ an \textbf{$O(n)$-structure} on $M$.

More generally, a \textbf{$G$-structure} on $M$ is a principal
$G$-subbundle $P \subset F(M)$.  As in the riemannian example just
considered, a $G$-structure on $M$ can be defined in terms of certain
characteristic tensor fields on $M$.  In order to explain this, we
have to briefly recall the concepts of an associated vector bundle to
a $G$-structure and of the soldering form.

Let $P \subset F(M)$ be a $G$-structure on $M$.  Then $P \to M$ is a
principal $G$-bundle.  Let $\rho : G \to \GL(\V)$ be a representation
of $G$ on some finite-dimensional vector space $\V$.  The group $G$
acts on $P \times \V$ on the right:
\begin{equation}
  \label{eq:right-G-action-PxV}
  (u,v) \cdot g := (u \circ g, \rho(g^{-1})v).
\end{equation}
Since $G$ acts freely on $P$, this action is free and the quotient $(P
\times \V)/G$ is the total space of an associated vector bundle $P
\times_G \V \to M$.  Sections of $P\times_G \V$ may be identified with
$G$-equivariant functions $P \to \V$.  More precisely, there is an
isomorphism of $C^\infty(M)$-modules
\begin{equation}
  \label{eq:avb-sections-iso}
  \Gamma(P \times_G \V) \cong C^\infty_G(P,\V) := \left\{\sigma : P
    \to \V ~\middle |~ R_g^*\sigma = \rho(g^{-1}) \circ \sigma \right\}.
\end{equation}
(We observe that $\pi : P \to M$ allows us to pull-back smooth
functions on $M$ to $P$ and view $C^\infty(M)$ as the $G$-invariant
functions
$C^\infty_G(P) = \left\{f \in C^\infty(P)~\middle |~ R_g^*f =
  f\quad\forall g \in G\right\}$.  Hence any $C^\infty(P)$-module
becomes a $C^\infty(M)$-module by restricting scalars.)  If $\W$ is
another representation, a $G$-equivariant linear map
$\phi : \V \to \W$ defines a bundle map
$\Phi: P\times_G \V \to P \times_G \W$, whose corresponding map on
sections sends $\sigma \in C^\infty_G(P,\V)$ to
$\phi \circ \sigma \in C^\infty_G(P,\W)$.

The preceding discussion holds for any principal $G$-bundle, but in
the case of a $G$-structure we have an additional structure not
present in a general principal bundle: namely, an $\RR^n$-valued
one-form $\theta$ on $P$.  To define it, suppose that $X_u \in T_u P$
is tangent to $P$ at $u \in P_p$.  Then $\theta_u (X_u) := u^{-1}
(\pi_* X_u)$, where $\pi : P \to M$ is the restriction to $P$ of the
map sending a frame $u$ at $p$ to $p\in M$.  In words, $\theta_u(X_u)$
is the coordinate vector of $\pi_* X_u \in T_pM$ relative to the frame
$u : \RR^n \to T_p M$.  The components of $\theta$ relative to the
standard basis $(\be_1,\dots,\be_n)$ for $\RR^n$ are one-forms
$\theta^i \in \Omega^1(P)$.  If $s = (X_1,\dots,X_n) : U \to P$ is a
local moving frame, then the pull-backs
$(s^*\theta^1,\dots,s^*\theta^n)$ make up the local coframe on $U$
canonically dual to $s$: that is, $s^*(\theta^i)(X_j) = \delta^i_j$.
We call $\theta \in \Omega^1(P,\RR^n)$ the \textbf{soldering form} of
the $G$-structure.

The soldering form defines an isomorphism $TM \cong P \times_G \RR^n$,
where $G$ acts on $\RR^n$ via the defining representation $G <
\GL(n,\RR)$.  In general, the soldering form allows us to identify
tensor bundles over $M$ with the corresponding associated vector
bundles $P \times_G \V$.  We will use this often and tacitly in this
paper.

Let $\rho : G \to \GL(\V)$ be a representation and let $0 \neq v \in
\V$ be $G$-invariant: namely, $\rho(g)v = v$ for all $g \in G$.  Then
the constant function $\sigma_v : P \to \V$ sending $u \mapsto v$
obeys $\sigma_v(u \circ g) = \rho(g^{-1}) \sigma_v(u)$ and therefore
gives a (nowhere-vanishing) section of the associated vector bundle $P
\times_G \V$.  If $\V$ is a tensor representation of $\RR^n$, then the
soldering form allows us to view $\sigma_v$ as a (nowhere-vanishing)
tensor field on $M$.

For example, if $\V = \odot^2 (\RR^n)^*$ is the
space of symmetric bilinear forms on $\RR^n$, then $\delta \in \V$
defined by $\delta(\be_i,\be_j)= \delta_{ij}$ is $O(n)$-invariant.
In fact, $O(n)$ is precisely the subgroup of $\GL(n,\RR)$ which leaves
$\delta$ invariant.  If $P \to M$  is an $O(n)$-structure, the
constant function $\sigma_\delta : P \to \V$ sending $u \mapsto
\delta$ defines a section of $P \times_G \V$.  The soldering form
induces an isomorphism $P \times_G \V \cong \odot^2 T^*M$ and hence
$\sigma_\delta$ defines a section $g \in \Gamma(\odot^2 T^*M)$, which
relative to a local moving frame $s = (X_1,\dots,X_n) : U \to P$
satisfies $g(X_i,X_j) = \delta_{ij}$.  Equivalently, $g = \delta_{ij}
s^*\theta^i s^*\theta^j$ (using Einstein summation convention here and
from now on).  In other words, $g$ is the riemannian metric which
defines the $O(n)$-structure.  The group $O(n)$ is not connected and
it may happen that a $O(n)$-structure further reduces to an
$\SO(n)$-structure.  In that case, there is an additional invariant
tensor: namely the volume form of the riemannian metric.

In this paper we shall be interested in several different types of
$G$-structures on an $n$-dimensional smooth manifold.  Each such group
$G$ can be defined as the subgroup of $GL(n,\RR)$ which leaves
invariant one or more tensors of the defining representation.  These
$G$-invariant tensors will then give rise to a set of
\textbf{characteristic tensor fields} on $M$ in the manner
illustrated above in the case of a riemannian structure.

\subsection{Adapted connections}
\label{sec:adapted-connections}

From now on we shall write $V$ for $\RR^n$.  In other words, $V$ is
not an abstract vector space but simply our notation for $\RR^n$.
We shall also write $\GL(V)$ for $\GL(n,\RR)$ and $\gl(V)$ for its Lie
algebra.  If $G < \GL(V)$ we shall let $\g < \gl(V)$ denote its Lie
algebra.  Let $\pi: P \to M$ be a $G$-structure and let $\theta \in
\Omega^1(P,V)$ be the soldering form.

If $u \in P$ is a frame at $p = \pi(u)$, then $(\pi_*)_u : T_u P \to
T_pM$ is a surjective linear map, whose kernel $\eV_u = \ker
(\pi_*)_u$ is called the \textbf{vertical subspace} of $T_uP$.  The
rank theorem says that $\dim \eV_u = \dim \g$.  The disjoint union
$\eV = \bigsqcup_{u \in P} \eV_u$ defines a G-invariant distribution
$\eV \subset TP$.  Indeed $\pi \circ R_g = \pi$ implies that
$(R_g)_*$ preserves the kernel of $\pi_*$.  The distribution $\eV$ is
also involutive and the leaves of the corresponding foliation of $P$
are the fibres $\pi^{-1}(p)$.

By an \textbf{Ehresmann connection} on $P$ we mean a $G$-invariant
distribution $\eH \subset TP$ complementary to $\eV$.  At every frame
$u \in P_p$, $T_u P = \eV_u \oplus \eH_u$ and $(\pi_*)_u$ restricts to
an isomorphism $\eH_u \cong T_pM$.  We will let
$h_u : T_u P \to \eH_u$ denote the horizontal projector along $\eV_u$.
Equivalently, we may define an Ehresmann connection via a
\textbf{connection one-form} $\omega \in \Omega^1(P,\g)$ defined
uniquely by the properties:
\begin{equation}
  \label{eq:conn-one-form}
  \ker \omega_u = \eH_u \qquad\text{and}\qquad \omega(\xi_X) =
  X\quad\forall X \in \g,
\end{equation}
where $\xi_X \in \eX(P)$ is the fundamental vector field corresponding
to $X \in \g$ and defined by $(\xi_X)_u = \left.\frac{d}{dt} \left(  u \circ e^{t X}
  \right)\right|_{t=0}$. It follows that
\begin{equation}
  \label{eq:G-equiv-conn}
  R_g^* \omega = \Ad(g^{-1}) \circ \omega,
\end{equation}
where $\Ad : G \to \GL(\g)$ is the adjoint representation.

An Ehresmann connection allows us to extend the $C^\infty(M)$-module
isomorphism \eqref{eq:avb-sections-iso} to differential forms.  Let us
define
\begin{equation}
  \label{eq:basic-forms}
  \Omega^p_G(P,\V) := \left\{ \varphi \in \Omega^p(P,\V) ~ \middle | ~
  R_g^* \varphi = \rho(g^{-1}) \circ \varphi\quad\text{and}\quad
  h^*\varphi = \varphi\right\},
\end{equation}
where $h^*\varphi(Y_1,\dots,Y_p) = \varphi(h Y_1,\dots,h Y_p)$, with
$h$ the horizontal projector.  The condition $R_g^* \varphi =
\rho(g^{-1}) \circ \varphi$ says that $\varphi$ is \textbf{invariant},
whereas the condition $h^*\varphi = \varphi$ says that it is
\textbf{horizontal}.  A form $\varphi \in \Omega^p_G(P,\V)$ is said to
be \textbf{basic} because it defines a $p$-form on $M$ with values in the
associated bundle $P \times_G \V$.  Indeed, we have a
$C^\infty(M)$-module isomorphism
\begin{equation}
  \label{eq:forms-avb-iso}
  \Omega^p_G(P,\V) \cong \Omega^p(M,P\times_G \V).
\end{equation}

An Ehresmann connection on $P$ defines a Koszul connection on any
associated vector bundle.  Its expression is particularly transparent
in terms of the equivariant functions $C^\infty_G(P,\V)$, where the
Koszul connection defines a covariant derivative operator:
\begin{equation}
  \label{eq:cov-der-P}
  \nabla : C^\infty_G(P,\V) \to \Omega^1_G(P,\V)
  \qquad\text{with}\qquad \nabla \sigma := h^* d\sigma.
\end{equation}
In calculations, it is more convenient to use the equivalent
expression $\nabla \sigma = d\sigma + \rho_*(\omega) \circ \sigma$,
where $\rho_* : \g \to \gl(\V)$ is the representation of $\g$ induced
by $\rho: G \to \GL(\V)$.

It is easy to see that the soldering form is actually basic: $\theta
\in \Omega^1_G(P,V)$ and hence it defines a one-form on $M$ with values
in $P \times_G V$; that is, a section of $\Hom(TM,P \times_G V)$.  This
is none other but the isomorphism $TM \cong P \times_G V$.
Functorially, it induces isomorphisms between the bundle of
$(r,s)$-tensors on $M$ and $P \times _G T^r_S(V)$, with $T^r_s(V) =
(V^{\otimes r}) \otimes (V^*)^{\otimes s}$.

The Koszul connection on $P\times_G V$ induces an affine connection
(also denoted $\nabla$) on $TM$, which is said to be \textbf{adapted to the
$G$-structure} $P$:
\begin{equation}
  \begin{tikzcd}
    \Gamma(P \times_G V) \ar[r,"\nabla"] \ar[d,"\cong"] & \Omega^1(M,P\times_G V) \ar[d,"\cong"]\\
    \eX(M) \ar[r,"\nabla"] & \Omega^1(M,TM).
  \end{tikzcd}
\end{equation}
Every characteristic tensor field on $M$ is parallel relative to this
affine connection.  This is particularly simple to see using the
description on $P$ in terms of equivariant functions.  Indeed, if $S
\in T^r_s(V)$ is a $G$-invariant tensor, then the section $\sigma_S
\in C^\infty_G(P, T^r_s(V))$ is constant, so that $d\sigma_S = 0$ and,
in particular, so is its horizontal component $\nabla\sigma_S = h^*d\sigma_S$.

\subsection{Intrinsic torsion}
\label{sec:intrinsic-torsion}

The torsion tensor $T^\nabla \in \Omega^2(M,TM)$ of an adapted affine
connection $\nabla$ is defined as usual by
\begin{equation}
  \label{eq:torsion-tensor}
  T^\nabla(X,Y) = \nabla_X Y - \nabla_Y X - [X,Y]\qquad\forall X,Y \in
  \eX(M).
\end{equation}
On $P$, the torsion tensor is represented by the \textbf{torsion
  two-form} $\Theta \in \Omega^2_G(P,V)$ defined by
$\Theta = h^*d\theta$, or equivalently by the \textbf{first structure
  equation}
\begin{equation}
  \label{eq:fss}
  \Theta = d\theta + \omega \wedge \theta,
\end{equation}
where the second term in the RHS involves also the action of $\g$ on
$V$ via the embedding $\g < \gl(V)$; that is, for all $X,Y \in
\eX(P)$, we have
\begin{equation}
  \label{eq:torsion-2-form}
  \Theta(X,Y) = d\theta(X,Y) + \omega(X) \theta(Y) - \omega(Y)\theta(X).
\end{equation}

Let us now investigate how the torsion changes when we change the
connection.  Let $\eH' \subset TP$ be a second Ehresmann connection on
$P$ with connection one-form $\omega' \in \Omega^1(P,\g)$.  Let
$\kappa = \omega' - \omega \in \Omega^1(P,\g)$.  Since $\omega$ and
$\omega'$ are invariant, so is $\kappa$; but since $\omega$ and
$\omega'$ agree on vertical vectors, $\kappa$ is now also horizontal.
Therefore $\kappa \in \Omega^1_G(P,\g)$ and hence it descends to a
one-form with values in $\Ad P := P \times_G \g$.

In general, the difference $\nabla' - \nabla$ between two affine
connections belongs to $\Omega^1(M,\End TM)$, but if the connections
are adapted to the $G$-structure, then $\nabla' - \nabla$ is a
one-form with values in the sub-bundle of $\End TM$ corresponding to
$\Ad P$ via the soldering form.

Let $\Theta'$ be the torsion two-form of $\eH'$.  From the first
structure equation \eqref{eq:fss}, we see that
\begin{equation}
  \Theta'  - \Theta = \kappa \wedge \theta
\end{equation}
or, equivalently, for all $X,Y \in \eX(P)$,
\begin{equation}
  (\Theta' - \Theta)(X,Y) = \kappa(X) \theta(Y) - \kappa(Y) \theta(X)
\end{equation}
The passage from $\kappa$ to $\Theta' - \Theta$ defines a
$C^\infty(M)$-linear map
\begin{equation}
  \Omega^1(M,P\times_G \g) \xrightarrow{} \Omega^2(M,P \times_G V)
\end{equation}
which is induced from a bundle map
\begin{equation}
  P\times_G (\g \otimes V^*) \xrightarrow{} P \times_G (V \otimes
  \wedge^2 V^*),
\end{equation}
which is in turn induced from a $G$-equivariant linear map, a special
instance of a \textbf{Spencer differential},
\begin{equation}
  \label{eq:spencer}
  \Hom(V,\g)  \xrightarrow{\d} \Hom(\wedge^2V, V)  \qquad\text{defined
    by}\qquad (\d \kappa)(v,w) = \kappa_v w - \kappa_w v, 
\end{equation}
for all $v,w \in V$ and where $\kappa: V \to \g$ sends $v \mapsto \kappa_v$.

We may summarise this discussion as follows.

\begin{proposition}
  Let $P \xrightarrow{\pi} M$ be a $G$-structure and
  $\omega \in \Omega^1(P,\g)$ the connection one-form of an Ehresmann
  connection with torsion two-form $\Theta \in \Omega^2_G(P,V)$.  If
  $\omega' = \omega + \kappa$ is another Ehresmann connection, then
  its torsion two-form $\Theta' = \Theta + \d \kappa$, where
  $\d : \Omega^1_G(P,\g) \to \Omega^2_G(P,V)$ is induced from the
  Spencer differential
  \begin{equation}
    \d : \Hom(V,\g) \to \Hom(\wedge^2 V, V)
  \end{equation}
  defined by $\d\kappa(v,w) = \kappa_v w - \kappa_w v$ for all $v,w
  \in V$.
\end{proposition}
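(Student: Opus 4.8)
The plan is to make precise the three observations already sketched in the discussion preceding the statement: (i) the difference $\kappa := \omega' - \omega$ is a \emph{basic} $\g$-valued one-form on $P$, so that it descends to a one-form on $M$ with values in $\Ad P = P\times_G\g$; (ii) the first structure equation \eqref{eq:fss}, applied to $\omega$ and to $\omega'$ and subtracted, gives $\Theta' - \Theta = \kappa\wedge\theta$; and (iii) the algebraic operation $\kappa\mapsto\kappa\wedge\theta$ is exactly the bundle map induced by the Spencer differential $\d$. Combining the three steps yields $\Theta' = \Theta + \d\kappa$.

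First I would check that $\kappa\in\Omega^1_G(P,\g)$. Invariance, $R_g^*\kappa = \Ad(g^{-1})\circ\kappa$, is immediate from \eqref{eq:G-equiv-conn} applied to both $\omega$ and $\omega'$. For horizontality, recall from \eqref{eq:conn-one-form} that a connection one-form restricts on the vertical subspace $\eV_u$ to the canonical isomorphism $\eV_u\cong\g$ sending the fundamental vector field $\xi_X$ to $X\in\g$; since this holds for both $\omega$ and $\omega'$, their difference $\kappa$ annihilates every vertical vector, i.e.\ $h^*\kappa=\kappa$. Hence $\kappa$ is basic and, by \eqref{eq:forms-avb-iso}, defines a one-form on $M$ with values in $\Ad P$; via the soldering form this is precisely the earlier remark that $\nabla'-\nabla$ is a one-form with values in the sub-bundle of $\End TM$ modelled on $\g$.

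Next I would subtract the two first structure equations: writing \eqref{eq:fss} for $\omega$ and $\omega'$ and using that $d\theta$ is common to both, one gets $\Theta' - \Theta = (\omega'-\omega)\wedge\theta = \kappa\wedge\theta$; evaluated on $X,Y\in\eX(P)$ this reads $(\Theta'-\Theta)(X,Y) = \kappa(X)\theta(Y) - \kappa(Y)\theta(X)$, with the products using $\g<\gl(V)$ acting on $V$. As $\kappa$ and $\theta$ are both basic, so is $\kappa\wedge\theta\in\Omega^2_G(P,V)$, and it therefore descends to an honest two-form on $M$ valued in $P\times_G V\cong TM$. Finally, at the level of fibres $\kappa$ is an element of $\Hom(V,\g)$ and $\theta$ plays the rôle of the tautological $\id_V\in\Hom(V,V)$, so the expression $v,w\mapsto\kappa_v w-\kappa_w v$ is manifestly antisymmetric, hence lands in $\Hom(\wedge^2V,V)$, and is $G$-equivariant because the $G$-actions on all the spaces in sight are by composition (and conjugation), which commutes with evaluation and antisymmetrisation; this is the map $\d$ of \eqref{eq:spencer}, and the induced bundle map and map on sections are exactly $\kappa\mapsto\kappa\wedge\theta$ under the soldering-form identifications.

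The whole argument is essentially bookkeeping with the definitions, and the only genuinely delicate point — hence the step I expect to be the main obstacle — is the last one: verifying that the fibrewise operation $(\d\kappa)(v,w)=\kappa_v w-\kappa_w v$ is well defined on the associated-bundle quotients $P\times_G(-)$ (that is, that it is $G$-equivariant) and that, transported by the soldering form, it really reproduces the difference of the two torsion tensors on $M$. Everything else is a direct consequence of \eqref{eq:conn-one-form}, \eqref{eq:G-equiv-conn} and the first structure equation \eqref{eq:fss}.
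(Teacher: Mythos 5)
Your proof is correct and follows essentially the same approach the paper uses: the paper itself states this proposition as a ``summary'' of the discussion immediately preceding it (basicness of $\kappa$, subtraction of the first structure equations to obtain $\Theta'-\Theta=\kappa\wedge\theta$, and identification of $\kappa\mapsto\kappa\wedge\theta$ with the map induced by the Spencer differential), and your argument simply fills in the details of that discussion, including the $G$-equivariance check that the paper leaves implicit.
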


Under the isomorphisms $\Hom(V,\g) \cong \g \otimes V^*$ and
$\Hom(\wedge^2 V, V) = V \otimes \wedge^2 V^*$, the Spencer
differential is the composition
\begin{equation}
  \label{eq:spencer-compo}
  \begin{tikzcd}
    \g \otimes V^* \ar[r,"i \otimes \id_{V^*}"] & V \otimes V^*
    \otimes V^* \ar[r, "\id_V \otimes \wedge"] & V \otimes \wedge^2 V^*
  \end{tikzcd}
\end{equation}
where $i : \g \to V \otimes V^*$ is the embedding $\g < \gl(V)$
composed with the isomorphism $\gl(V) \cong V \otimes V^*$, and
$\wedge : V^* \otimes V^* \to \wedge^2 V^*$ is skew-symmetrisation.

To the linear map $\d : \g \otimes V^* \to V \otimes \wedge^2 V^*$
there is associated an exact sequence:
\begin{equation}
  \label{eq:exact-seq}
  \begin{tikzcd}
    0 \ar[r] & \ker \d \ar[r] & \g \otimes V^* \ar[r,"\d"] & V \otimes
    \wedge^2 V^* \ar[r] & \coker \d \ar[r] & 0,
  \end{tikzcd}
\end{equation}
where $\coker\d = (V \otimes \wedge^2V^*)/\im\d$.  Since these maps
are $G$-equivariant, we obtain an exact sequence of associated vector
bundles:
\begin{equation}
  \begin{tikzcd}
    0 \ar[r] & P \times_G \ker \d \ar[r] & P \times_G (\g \otimes V^*) \ar[r,"\d"] & P\times_G(V \otimes
    \wedge^2 V^*) \ar[r] & P \times_G \coker \d \ar[r] & 0.
  \end{tikzcd}
\end{equation}
These bundles have the following interpretation:
\begin{itemize}
\item the torsion of (adapted) affine connections are sections of
  $P\times_G (V \otimes \wedge^2V^*) \cong TM \otimes \wedge^2 T^*M$;
\item the contorsions (i.e., the differences between adapted
  affine connections) are sections of $P\times_G (\g \otimes V^*)
  \cong \Ad P \otimes T^*M$;  
\item the contorsions which do not alter
  the torsion are sections of $P \times_G \ker \d$; and
\item the ``intrinsic torsion'' (see below) of
  an adapted connection is a section of $P \times_G
  \coker\d$.
\end{itemize}
Since $T^{\nabla'} - T^\nabla = \d(\nabla' - \nabla)$, we see that the
image $[T^\nabla] \in \Gamma(P \times_G \coker\d)$ of the torsion is
\emph{independent} of the connection and is an intrinsic property of
the $G$-structure.  We say $[T^\nabla]\in \Gamma(P \times_G \coker\d)$
is the \textbf{intrinsic torsion} of the $G$-structure.

As an example, consider a lorentzian $G$-structure. It is customary
here to label the standard basis of $V = \RR^n$ as
$(\be_0,\be_1,\dots,\be_{n-1})$ with canonical dual basis
$(\alpha^0,\alpha^1,\dots,\alpha^{n-1})$ for $V^*$.  Then $G < \GL(V)$ is the
subgroup leaving invariant the lorentzian inner product
\begin{equation}
  \eta = -(\alpha^0)^2 + \sum_{i=1}^{n-1} (\alpha^i)^2.
\end{equation}
The Lie algebra $\g = \so(V)$ is the space of $\eta$-skew-symmetric
endomorphisms of $V$.  As we now show, the Spencer differential is an
isomorphism in this case.

\begin{lemma}\label{lem:spencer-iso-soV}
  The Spencer differential
  \begin{equation*}
    \d : \so(V) \otimes V^* \to V \otimes \wedge^2 V^*
  \end{equation*}
  is an isomorphism.
\end{lemma}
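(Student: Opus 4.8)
The plan is to prove that $\partial$ is injective and then conclude by a dimension count, since
\[
\dim\bigl(\so(V)\otimes V^*\bigr) = \tfrac{1}{2}n(n-1)\cdot n = \dim\bigl(V\otimes\wedge^2 V^*\bigr),
\]
so that any injective linear map between these two spaces of equal finite dimension is automatically an isomorphism.

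For injectivity I would take $\kappa \in \Hom(V,\so(V))$, written $v \mapsto \kappa_v$, with $\partial\kappa = 0$, which by the definition in \eqref{eq:spencer} means $\kappa_v w = \kappa_w v$ for all $v,w \in V$. Writing $\langle-,-\rangle$ for the inner product $\eta$, the condition $\kappa_v \in \so(V)$ says $\langle\kappa_v w, u\rangle = -\langle\kappa_v u, w\rangle$, and alternating this skew-symmetry with the symmetry $\kappa_a b = \kappa_b a$ produces the six-term Koszul cycle
\[
\langle\kappa_v w, u\rangle
= -\langle\kappa_v u, w\rangle
= -\langle\kappa_u v, w\rangle
= \langle\kappa_u w, v\rangle
= \langle\kappa_w u, v\rangle
= -\langle\kappa_w v, u\rangle
= -\langle\kappa_v w, u\rangle .
\]
Hence $\langle\kappa_v w, u\rangle = 0$ for all $u,v,w \in V$, and since $\eta$ is non-degenerate this forces $\kappa_v w = 0$ for all $v,w$, i.e.\ $\kappa = 0$.

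I do not expect any genuine obstacle: the argument is purely formal, the only points requiring care being the bookkeeping in the cyclic chain and the final appeal to non-degeneracy of $\eta$. It is worth noting that the Lorentzian signature plays no role; the identical argument shows that the Spencer differential attached to $\so(V)$ for \emph{any} non-degenerate symmetric bilinear form on $V$ is an isomorphism. This is precisely the linear-algebraic content of the fundamental theorem of (pseudo-)riemannian geometry: an adapted connection exists, is unique (the Levi-Civita connection), and the intrinsic torsion of a (pseudo-)riemannian $G$-structure therefore vanishes — which, as remarked in the introduction, is why one must look to the non-lorentzian structures for interesting intrinsic torsion.
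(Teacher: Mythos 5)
Your proposal is correct and follows essentially the same route as the paper: a dimension count reduces the claim to injectivity, and injectivity is established by the same six-term alternation between the $\so(V)$ skew-symmetry and the symmetry forced by $\d\kappa=0$, concluding via non-degeneracy of $\eta$. Your remark that the signature is irrelevant is accurate and consistent with the paper's use of the same lemma for both lorentzian and euclidean inner products.
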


\begin{proof}
  Notice that $\dim (\so(V) \otimes V^*) = \dim(V \otimes \wedge^2
  V^*)$, so the result will follow if we show that $\ker \d = 0$.
  Let $\kappa \in \so(V) \otimes V^*$ so that $\d\kappa(v,w) =
  \kappa_v w - \kappa_w v$.  Introduce the notation $T(v,w,z):=
  \eta(\kappa_v w,z)$.  Since $\kappa_v \in \so(V)$, $T(v,w,z) = -
  T(v,z,w)$ and if $\d\kappa = 0$ then also $T(v,w,z) = T(w,v,z)$, so
  that for all $v,w,z \in V$,
  \begin{equation*}
    T(v,w,z) = T(w,v,z) = - T(w,z,v) = - T(z,w,v) = T(z,v,w) =
    T(v,z,w) = - T(v,w,z) \implies T = 0.
  \end{equation*}
  Since $\eta$ is non-degenerate, it follows that $\kappa=0$ and hence $\ker\d = 0$.
\end{proof}

It follows therefore that $\coker \d = 0$ and hence any adapted
connection (here any metric connection) can be modified to be
torsionless, and since $\ker \d = 0$, there is a unique such
modification. In other words, we have rederived the Fundamental
Theorem of riemannian geometry: the existence of a unique torsionless
metric connection; namely, the Levi-Civita connection.

We close this short review with two observations.  Firstly, if $\g <
\so(V)$, then since $\d$ is the restriction to $\g$ of the map in
Lemma~\ref{lem:spencer-iso-soV}, it is still the case that $\ker \d =
0$ and hence the exact sequence \eqref{eq:exact-seq} becomes short
exact:
\begin{equation}
  \begin{tikzcd}
    0 \ar[r] & \g \otimes V^* \ar[r,"\d"] & V \otimes \wedge^2 V^*
    \ar[r] & \coker \d \ar[r] & 0,
  \end{tikzcd}
\end{equation}
and in particular $\dim \coker \d = n \left(  \binom{n}2 - \dim \g
\right)$.

The second observation is that $P\times_G(V \otimes \wedge^2 V^*)$ is
the bundle of which the torsion of \emph{any} connection is a section.
It is not clear that any section of that bundle can be identified with
the torsion tensor of an \emph{adapted} connection.  This is not
unrelated to the fact that the classification of $G$-structures by
their intrinsic torsion may result in classes which may not actually
be realised geometrically.  For example, it is well-known that in the
case of $G_2 < \SO(7)$ structures in a $7$-manifold, only 15 of the
possible 16 structures are realised \cite{FernandezGray,CMS}.

The rest of the paper consists in the calculation of $\coker\d$ for
four types of $G$-structures relevant to spacetime geometries:
galilean, carrollian, aristotelian and bargmannian.  Our strategy will be
the following.  For each such type of geometry we will first determine
the corresponding subgroup $G < \GL(V)$ and determine $\coker\d$ as a
$G$-module.  This usually allows us to interpret $\coker\d$, which is
a quotient module, as a certain tensor module and hence will allow us
to determine which expression in terms of the characteristic tensors
of the $G$-structure captures the intrinsic torsion.  We will then
classify the $G$-submodules of $\coker \d$ and in this way
characterise them geometrically in terms of properties of the
characteristic tensors of the $G$-structure.

\section{Galilean $G$-structures}
\label{sec:gal-g-struct}

Galilean $G$-structures were first discussed by Hans-Peter Künzle
\cite{MR334831}, who proved, among other things, that they are of
infinite type.  Some of the results in this section can already be
found in \cite{MR334831}: the determination of the group $G$ and of
the characteristic tensors and the identification of the intrinsic
torsion (which is termed the ``first structure function'') with the
exterior derivative $d\tau$ of the clock one-form.  The main deviation
from \cite{MR334831} is that we claim that there is an additional
``distinguished condition'' other than ``flatness'' which can be
imposed on the torsion of an adapted connection, which follows from
our more detailed analysis of the $G$-module structure of $\coker\d$.
Later papers on the subject of adapted connections to a galilean
structure are \cite{Bernal:2002ph,Bekaert:2014bwa,Bekaert:2015xua}.

\subsection{The group $G$ of a galilean structure}
\label{sec:gal-group}

Let $V = \RR^n$.  We will use a suggestive notation
for the standard basis for V: namely, $(H,P_1,\dots,P_{n-1})$ with
canonical dual basis $(\eta,\pi^1,\dots,\pi^{n-1})$ for $V^*$.
Indices $a,b,\dots$ will run from $1$ to $n-1$ and we will write $P_a$
and $\pi^a$.  Let $G < \GL(V)$ be the subgroup which leaves invariant
$\eta \in V^*$ and $\delta^{ab} P_a P_b \in \odot^2 V$.  It
is not hard to show that
\begin{equation}
  \label{eq:G-gal}
  G = \left\{
    \begin{pmatrix}1 & \bzero^T \\ \bv & A\end{pmatrix} ~ \middle | ~  \bv
  \in \RR^{n-1},~A \in O(n-1)\right\} < \GL(n,\RR),
\end{equation}
with Lie algebra
\begin{equation}
  \label{eq:g-gal}
  \g = \left\{
    \begin{pmatrix}0 & \bzero^T \\ \bv & A\end{pmatrix} ~ \middle | ~  \bv
  \in \RR^{n-1},~A \in \so(n-1)\right\} < \gl(n,\RR).
\end{equation}
The characteristic tensor fields of a galilean $G$-structure are a
nowhere-vanishing one-form $\tau \in \Omega^1(M)$, typically called
the \textbf{clock one-form} and a corank-one positive-semidefinite
$\gamma \in \Gamma(\odot^2 TM)$ with $\gamma(\tau,-) = 0$, typically
called the \textbf{spatial cometric}.

We will choose a basis $J_{ab}=-J_{ba},B_a$ for $\g$, with Lie
brackets
\begin{equation}
  \label{eq:gal-g-brackets}
  \begin{split}
    [J_{ab},J_{cd}] &= \delta_{bc} J_{ad} - \delta_{ac} J_{bd} - \delta_{bd} J_{ac} + \delta_{ad} J_{bc}\\
    [J_{ab}, B_c] &= \delta_{bc} B_a - \delta_{ac} B_b\\
    [B_a, B_b] &= 0.
  \end{split}
\end{equation}
The actions of $\g$ on $V$ and $V^*$ are given by
\begin{equation}
  \label{eq:gal-g-action-V}
  \begin{aligned}[m]
    J_{ab} \cdot P_c &= \delta_{bc} P_a - \delta_{ac} P_b\\
    J_{ab} \cdot H &= 0\\
    B_a \cdot P_b &= 0\\
    B_a \cdot H &= P_a
  \end{aligned}\qquad\text{and}\qquad
  \begin{aligned}[m]
    J_{ab} \cdot \pi^c &= \left(-\delta^c_b \delta_{ad} + \delta^c_a \delta_{bd}\right) \pi^d\\
    J_{ab} \cdot \eta &= 0\\
    B_a \cdot \pi^b &= - \delta^b_a \eta\\
    B_a \cdot \eta &= 0.
  \end{aligned}
\end{equation}
Letting $\left<\cdots\right>$ denote the real span, we see that
$\left<P_a\right> \subset V$ and $\left<\eta\right> \subset V^*$ are
$\g$-submodules.  Hence neither $V$ nor $V^*$ are irreducible.  The
absence of complementary submodules says that they are nevertheless
indecomposable.

\subsection{The intrinsic torsion of a galilean structure}
\label{sec:gal-intr-tors}

The Spencer differential $\d : \g \otimes V^* \to V \otimes \wedge^2
V^*$ is given by
\begin{equation}
  \label{eq:gal-spencer}
  \begin{split}
   \d (J_{ab} \otimes \pi^c) &= (\delta_{bd} P_a - \delta_{ad} P_b) \otimes \pi^d \wedge \pi^c\\
   \d(J_{ab} \otimes \eta) &= (\delta_{bc}P_a - \delta_{ac}P_b) \otimes \pi^c \wedge \eta\\
   \d(B_a \otimes \pi^b) &= P_a \otimes \eta \wedge \pi^b\\
   \d(B_a \otimes \eta) &= 0.
  \end{split}
\end{equation}
Therefore we see that its kernel is given by
\begin{equation}\label{eq:gal-spencer-ker}
  \ker \d = \left<B_a \otimes \eta, J_{ab}\otimes \eta + (\delta_{bc}
    B_a - \delta_{ac} B_b) \otimes \pi^c\right>.
\end{equation}

\begin{lemma}
  As $\g$-modules, $\ker \d \cong \wedge^2 V^*$.
\end{lemma}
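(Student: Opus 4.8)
The plan is to identify $\ker\d$ with $\wedge^2 V^*$ by first computing the dimensions on both sides and then exhibiting an explicit $\g$-equivariant isomorphism (or at least a surjection, which combined with equality of dimensions suffices). On the dimension side: $\dim\wedge^2 V^* = \binom n2$, while from the explicit spanning set \eqref{eq:gal-spencer-ker} we expect $\dim\ker\d = (n-1) + \binom{n-1}2 = \binom n2$, so the dimensions match and it remains to produce the map. Since $\wedge^2 V^*$ decomposes under $\g$ — which acts through its quotient $\so(n-1)$ on the ``spatial'' part together with the boost action — as a piece spanned by $\pi^a\wedge\eta$ and a piece spanned by $\pi^a\wedge\pi^b$, the natural guess is the $\g$-module map $\wedge^2 V^* \to \g\otimes V^*$ that on decomposables sends $\pi^a\wedge\eta \mapsto B_a\otimes\eta$ (up to normalisation) and $\pi^a\wedge\pi^b \mapsto J_{ab}\otimes\eta + (\delta_{bc}B_a - \delta_{ac}B_b)\otimes\pi^c$, matching the two families of generators in \eqref{eq:gal-spencer-ker}.

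The key steps, in order, are: (i) verify directly from \eqref{eq:gal-spencer} that the two listed families indeed lie in $\ker\d$ and are linearly independent, so that $\ker\d$ is exactly their span and has dimension $\binom n2$; (ii) define the candidate map $\Phi:\wedge^2 V^* \to \g\otimes V^*$ on a basis by the assignment above and check it lands in $\ker\d$ (automatic by (i)) and is injective (clear, since it sends a basis to a linearly independent set), hence an isomorphism of vector spaces onto $\ker\d$ by the dimension count; (iii) check $\g$-equivariance of $\Phi$, i.e.\ that $\Phi$ intertwines the $\g$-action on $\wedge^2 V^*$ given by \eqref{eq:gal-g-action-V} with the (tensor product of adjoint and dual) action on $\g\otimes V^*$. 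Step (iii) is the substantive one: one must chase the boost generators $B_a$, since $J_{ab}$-equivariance is essentially the statement that both sides carry the standard $\so(n-1)$ structure, whereas the $B_a$ mix the $\pi^b\wedge\eta$ and $\pi^b\wedge\pi^c$ components and simultaneously act nontrivially on $\g$ via $[B_a,J_{bc}]$ and $[B_a,B_b]=0$ and on $V^*$ via $B_a\cdot\pi^b = -\delta^b_a\eta$.

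I expect the main obstacle to be bookkeeping in step (iii): verifying $B_a\cdot\Phi(\pi^b\wedge\pi^c) = \Phi(B_a\cdot(\pi^b\wedge\pi^c))$ requires expanding $B_a\cdot\bigl(J_{bc}\otimes\eta + (\delta_{ce}B_b - \delta_{be}B_c)\otimes\pi^e\bigr)$ using the bracket $[B_a,J_{bc}] = \delta_{ca}B_b - \delta_{ba}B_c$ together with the $V^*$-action $B_a\cdot\eta = 0$, $B_a\cdot\pi^e = -\delta^e_a\eta$, and checking that the resulting combination equals $\Phi$ applied to $B_a\cdot(\pi^b\wedge\pi^c) = -\delta^b_a\,\eta\wedge\pi^c - \delta^c_a\,\pi^b\wedge\eta = \delta^b_a\,\pi^c\wedge\eta - \delta^c_a\,\pi^b\wedge\eta$, i.e.\ a combination of the $B_\ast\otimes\eta$ generators; the cancellations work precisely because the two contributions to $[B_a,J_{bc}]$ and to $B_a\cdot\pi^e$ conspire, but the index manipulations need care. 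Once equivariance is confirmed, $\Phi$ is the desired $\g$-module isomorphism $\wedge^2 V^* \xrightarrow{\ \sim\ } \ker\d$.
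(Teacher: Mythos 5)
Your proposal is correct and is essentially the paper's proof run in the opposite direction: the paper defines the inverse map $\ker\d \to \wedge^2 V^*$ on the same two families of generators from \eqref{eq:gal-spencer-ker}, notes that $\so(n-1)$-equivariance is manifest, and reduces everything to the same boost-equivariance check you identify as the substantive step. The relative factor you leave as ``up to normalisation'' is indeed forced (with $\pi^a\wedge\pi^b \mapsto J_{ab}\otimes\eta + (\delta_{bc}B_a-\delta_{ac}B_b)\otimes\pi^c$ one needs $\pi^a\wedge\eta \mapsto 2\,B_a\otimes\eta$, matching the factor of $2$ in the paper's map), and your verification from \eqref{eq:gal-spencer} should of course establish that the two families span all of $\ker\d$, not merely sit inside it.
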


\begin{proof}
  The $\g$ action on $\wedge^2 V^*$ is given by the obvious action of
  $\so(n-1)$ and then
  \begin{equation*}
    \begin{split}
      B_c \cdot \pi^a \wedge \pi^b &= - \delta^a_c \eta \wedge \pi^b + \delta^b_c \eta \wedge \pi^a\\
      B_c \cdot \pi^a \wedge \eta &= 0.
    \end{split}
  \end{equation*}
  The action of $\g$ on $\ker \d$ is again given by the obvious action
  of $\so(n-1)$ and then
  \begin{equation*}
    \begin{split}
      B_c \cdot (J_{ab}\otimes \eta + (\delta_{bd}  B_a - \delta_{ad}  B_b) \otimes \pi^d) &= 2 (\delta_{ca} B_b - \delta_{cb} B_a) \otimes \eta\\
      B_c \cdot (B_a \otimes \eta) &= 0.
    \end{split}
  \end{equation*}
  This suggests defining a linear map $\varphi : \ker\d \to \wedge^2
  V^*$ by
  \begin{equation*}
    \begin{split}
      \varphi(B_a \otimes \eta) &= \delta_{ab} \pi^b \wedge \eta\\
      \varphi(J_{ab}\otimes \eta + (\delta_{bc} B_a - \delta_{ac} B_b)
      \otimes \pi^c) &= 2 (\delta_{ac} \delta_{bd}) \pi^c \wedge \pi^d.
    \end{split}
  \end{equation*}
  This map is clearly an $\so(n-1)$-equivariant isomorphism and one
  can easily check that it is also equivariant under the action of
  $B_a$.
\end{proof}

The cokernel of the Spencer differential is spanned by the image in
$\coker \d$ of $\left< H \otimes \pi^a \wedge \pi^b, H \otimes \eta
  \wedge \pi^a\right>$.

\begin{lemma}\label{lem:gal-coker-d-iso}
  As $\g$-modules, $\coker\d \cong \wedge^2 V^*$.
\end{lemma}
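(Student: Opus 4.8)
The plan is to realise $\coker\d$ as the image of a natural $\g$-equivariant surjection out of $V\otimes\wedge^2 V^*$, obtained by contracting away the ``velocity'' part of the $V$-factor with the invariant clock covector $\eta$.

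Concretely, I would introduce the map
\begin{equation*}
  \psi : V \otimes \wedge^2 V^* \longrightarrow \wedge^2 V^*, \qquad
  \psi(v \otimes \omega) = \eta(v)\,\omega ,
\end{equation*}
so that $\psi(H\otimes\omega) = \omega$ and $\psi(P_a\otimes\omega) = 0$. Since $\eta\in V^*$ is one of the $\g$-invariant tensors defining the galilean algebra, we have $\eta(X\cdot v) = 0$ for all $X\in\g$ and $v\in V$; equivalently, $\eta : V\to\RR$ intertwines the $\g$-action on $V$ with the trivial action, and therefore $\psi = \eta\otimes\id_{\wedge^2 V^*}$ is $\g$-equivariant. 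The next step is to check that $\psi$ annihilates $\im\d$: inspecting the four families in \eqref{eq:gal-spencer}, the $V$-factor of $\d(J_{ab}\otimes\pi^c)$, $\d(J_{ab}\otimes\eta)$ and $\d(B_a\otimes\pi^b)$ always lies in $\left<P_a\right> = \ker\eta$, while $\d(B_a\otimes\eta) = 0$; hence $\psi\circ\d = 0$ and $\psi$ descends to a $\g$-equivariant map $\bar\psi : \coker\d \to \wedge^2 V^*$. It is surjective because $\psi(H\otimes\omega) = \omega$ for every $\omega$, and in particular it sends the spanning set $\left<[H\otimes\pi^a\wedge\pi^b],\ [H\otimes\eta\wedge\pi^a]\right>$ of $\coker\d$ onto the standard basis $\left<\pi^a\wedge\pi^b,\ \eta\wedge\pi^a\right>$ of $\wedge^2 V^*$.

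It then remains to show that $\bar\psi$ is injective, i.e.\ that $\im\d$ exhausts $\ker\psi = \left<P_a\right>\otimes\wedge^2 V^*$. The quickest argument is a dimension count: $\dim(\g\otimes V^*) = \dim(V\otimes\wedge^2 V^*)$, both being $n\binom n2$ (note $\dim\g = \binom n2$ by Pascal's rule), so from the exact sequence \eqref{eq:exact-seq} and the preceding lemma one gets $\dim\coker\d = \dim\ker\d = \binom n2 = \dim\wedge^2 V^*$, and a surjection between spaces of equal dimension is an isomorphism. If one prefers a self-contained argument, one verifies $\im\d = \left<P_a\right>\otimes\wedge^2 V^*$ directly: the identities $\d(B_a\otimes\pi^b) = P_a\otimes\eta\wedge\pi^b$ handle the $\left<P_a\right>\otimes(\eta\wedge V^*)$ summand, and from $\d(J_{ab}\otimes\pi^c) = P_a\otimes\pi^b\wedge\pi^c - P_b\otimes\pi^a\wedge\pi^c$ a short skew-symmetrisation in the indices $a,b,c$ — the same manipulation used in the proof of Lemma~\ref{lem:spencer-iso-soV} — isolates each $P_c\otimes\pi^a\wedge\pi^b$. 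Either way $\bar\psi$ is the desired isomorphism of $\g$-modules. The only point needing any care is this last exhaustion step, and since it is bypassed entirely by the dimension count (granted the preceding lemma), I expect no real obstacle.
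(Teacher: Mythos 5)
Your proposal is correct and follows essentially the same route as the paper: both use the contraction $\eta\otimes\id_{\wedge^2 V^*}$ on $V\otimes\wedge^2 V^*$, note from \eqref{eq:gal-spencer} that it annihilates $\im\d$, and conclude that the induced map $\coker\d\to\wedge^2 V^*$ is an isomorphism. The only difference is that you make the injectivity explicit (via the dimension count using $\ker\d\cong\wedge^2 V^*$, or the direct exhaustion of $\im\d$), a step the paper leaves implicit in its ``clearly seen to be an isomorphism''.
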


\begin{proof}
  We consider the $\g$-equivariant linear map $\eta \otimes
  \id_{\wedge^2 V^*} : V \otimes \wedge^2V^* \to \wedge^2 V^*$, which
  simply applies $\eta \in V^*$ to the $V$-component.  From
  equation~\eqref{eq:gal-spencer}, we see that the image of the
  Spencer differential is contained in its kernel and hence it induces
  a $\g$-equivariant linear map $\coker \d \to \wedge^2 V^*$.
  Explicitly, it is given on the basis for $\coker \d$ by
  \begin{equation*}
    \begin{split}
      [H \otimes \pi^a \wedge \pi^b] &\mapsto \pi^a \wedge \pi^b\\
      [H \otimes \eta \wedge \pi^a] &\mapsto \eta \wedge \pi^a,
    \end{split}
  \end{equation*}
  which is clearly seen to be an isomorphism.
\end{proof}

As a $\g$-module, $\wedge^2 V^*$ is indecomposable but not
irreducible.  Indeed, we have the following chain of
submodules:\footnote{This is for $n\neq 5$.  The case $n=5$ is treated
in Appendix~\ref{sec:five-dimens-galil}.}
\begin{equation}
  0 \subset \left<\eta \wedge \pi^a\right> \subset \wedge^2 V^*.
\end{equation}
By Lemma~\ref{lem:gal-coker-d-iso}, this is also the case for
$\coker\d$ and hence we see that there are three classes of galilean
structures depending on whether the intrinsic torsion vanishes, lands
in the submodule $\left< [H \otimes \pi^a \wedge \eta]\right> \cong
\left<\eta \wedge \pi^a\right>$ or is generic.

Note that the short exact sequence of $\g$-modules
\begin{equation}
  \label{eq:im-d-ses}
  \begin{tikzcd}
    0 \ar[r] & \im\d \ar[r] & V \otimes \wedge^2 V^* \ar[r] & \coker\d \ar[r] & 0
  \end{tikzcd}
\end{equation}
does \emph{not} split; although it does split as vector spaces.  This
means that whereas it is possible to find a vector subspace of $V
\otimes \wedge^2 V^*$ complementary to $\im\d$, it is not possible to
demand in addition that it should be stable under $\g$.  In this case
we have chosen $\left<H \otimes \pi^a \wedge \pi^b, H \otimes \eta
  \wedge \pi^a\right>$ as the vector space complement of $\im\d$ in $V
\otimes \wedge^2 V^*$.  This subspace is not preserved under $\g$, but
only modulo $\im \d$.  This has the following geometrical
consequence.  Having intrinsic torsion in the submodule $\eG :=
\left<[H \otimes \pi^a \wedge \eta]\right> \subset \coker \d$ does
\emph{not} mean that there exists an adapted connection $\nabla$
whose torsion $T^\nabla$ is a section of $P \times_G \overline{\eG}$,
where $\overline{\eG} = \left<H \otimes
  \pi^a \wedge \eta \right> \subset V \otimes \wedge^2 V^*$.  What it
\emph{does} mean is that relative to some local moving frame (in
$P$), the torsion will be represented by a function $U \to
\overline{\eG}$, but if we change the frame (while still in $P$), this
might not persist.  However one can modify the connection such that
relative to the new adapted connection, the torsion is again
represented by a function $U \to \overline{\eG}$.  This is why it is
important to derive consequences of the fact that the intrinsic
torsion lands in $\eG$ which are independent of the choice of the
adapted connection.

This is something one seldom sees in riemannian $G$-structures where
$G< O(n)$, since $G$, being compact, is reductive: sequences of
$G$-modules split and modules are fully reducible into irreducibles.
This is why results of the kind reported in this paper are typically
simpler to state in that situation.

\subsection{Geometric characterisation}
\label{sec:gal-geom-char}

It follows from the isomorphism in Lemma~\ref{lem:gal-coker-d-iso},
that there is bundle isomorphism $P \times_G \coker\d \cong \wedge^2
T^*M$ and therefore the intrinsic torsion of an adapted connection is
captured by a two-form.  To identify this two-form, we notice that the
$\g$-equivariant linear map $\eta \times \id_{\wedge^2 V^*}: V \otimes
\wedge^2 V^* \to \wedge^2 V^*$ in the proof of
Lemma~\ref{lem:gal-coker-d-iso}, induces a bundle map $TM
\otimes \wedge^2 T^*M \to \wedge^2 T^*M$ and hence a
$C^\infty(M)$-linear map $\Phi: \Omega^2(M,TM) \to \Omega^2(M)$ which
is given by composing with the clock form $\tau$.  In other words,
$\Phi(T) = \tau \circ T$ for any $T \in \Omega^2(M,TM)$.

\begin{proposition}\label{prop:gal-d-tau}
  Let $\nabla$ be an adapted affine connection with torsion
  $T^\nabla \in \Omega^2(M,TM)$.  Its image under $\Phi:
  \Omega^2(M,TM) \to \Omega^2(M)$ is given by $\Phi(T^\nabla) =
  d\tau$, where $\tau \in \Omega^1(M)$ is the clock one-form.
\end{proposition}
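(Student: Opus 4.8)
The plan is to compute the torsion two-form on $P$ relative to a local moving frame adapted to the galilean structure and then apply the bundle map $\Phi$, which on $P$ is the contraction with $\eta \in V^*$. Concretely, let $s = (X_0, X_1, \dots, X_{n-1}) : U \to P$ be a local section, write $H \leftrightarrow X_0$ and $P_a \leftrightarrow X_a$, and let $(\tau, \pi^a)$ be the canonical dual coframe, so that $s^*\theta^0 = \tau$ and $s^*\theta^a = \pi^a$; here $\tau$ is the clock one-form because $\eta \in V^*$ is the $G$-invariant covector giving rise to it. The first structure equation \eqref{eq:fss} pulls back to $s^*\Theta = d(s^*\theta) + s^*\omega \wedge s^*\theta$, and taking the $V$-component along $H$ (i.e.\ applying $\eta$) gives
\begin{equation*}
  \eta\bigl(s^*\Theta\bigr) = d\tau + (\eta \circ s^*\omega) \wedge \tau + (\text{terms involving } \eta \circ (\g \cdot P_a)).
\end{equation*}

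The key observation is that the connection contributions all drop out after contraction with $\eta$. From the $\g$-action \eqref{eq:gal-g-action-V}, every generator annihilates $\eta \in V^*$: $J_{ab}\cdot\eta = 0$ and $B_a \cdot \eta = 0$. Dually, this means the $H$-component of $\g \cdot v$ vanishes for every $v \in V$ — equivalently, applying $\eta$ to $\omega \wedge \theta$ yields zero because $\eta \circ \rho_*(X) = (X \cdot \eta)$ composed appropriately, which is $0$ for all $X \in \g$. Hence $\eta(s^*\Theta) = d\tau$ on the nose. Since $\Phi(T^\nabla) = \tau \circ T^\nabla$ is, under the soldering identification, exactly the two-form on $M$ represented by $\eta(s^*\Theta)$, we conclude $\Phi(T^\nabla) = d\tau$.

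The only subtlety — and the main thing to get right — is bookkeeping the identification between the abstract $G$-equivariant map $\eta \otimes \id_{\wedge^2 V^*}$, its induced bundle map $TM \otimes \wedge^2 T^*M \to \wedge^2 T^*M$, and the claimed $C^\infty(M)$-linear operator $\Phi(T) = \tau \circ T$; this is just unwinding the soldering isomorphism $TM \cong P \times_G V$ together with \eqref{eq:forms-avb-iso} applied to $p=2$, $\V = V$, but it must be stated carefully so that "contract the $V$-index of the torsion two-form with $\eta$" matches "compose $T^\nabla$ with the one-form $\tau$." Once that identification is in place, the computation above is immediate and coordinate-independent (any adapted $\nabla$ gives the same answer, as it must, since $d\tau$ depends only on the $G$-structure). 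I expect no real obstacle beyond this notational care; the vanishing $J_{ab}\cdot\eta = B_a \cdot \eta = 0$ is what makes the statement work, and it is precisely the reflection of the fact established in Lemma~\ref{lem:gal-coker-d-iso} that $\coker\d \cong \wedge^2 V^*$ via this very contraction.
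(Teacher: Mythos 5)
Your argument is correct. It differs in presentation from the paper's proof: you work upstairs on $P$, pulling back the first structure equation \eqref{eq:fss} along a local section and contracting with $\eta$, so that the connection term $\omega \wedge \theta$ dies because $\g \cdot \eta = 0$ (equivalently $\eta(\rho_*(A)v)=0$ for all $A \in \g$, $v\in V$ --- note the dual action carries a sign, $\eta\circ\rho_*(A) = -(A\cdot\eta)$, but this is harmless since it vanishes). The paper instead stays on $M$ and uses the equivalent downstairs fact, namely that $\tau$ is parallel for any adapted connection: from $X\tau(Y)=\tau(\nabla_X Y)$, skew-symmetrising and inserting the definition of $T^\nabla$ gives $d\tau = \tau\circ T^\nabla$ in three lines, with no frames or structure equation needed. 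The two routes are equivalent in substance --- $\nabla\tau=0$ is precisely the shadow on $M$ of the infinitesimal invariance $\g\cdot\eta=0$ that powers your computation --- so what your version buys is an explicit match with the representation-theoretic mechanism behind Lemma~\ref{lem:gal-coker-d-iso}, at the cost of the bookkeeping you flag (checking $\eta\circ s^*\theta=\tau$ and that $s^*\Theta$ represents $T^\nabla$ in the frame $s$, so that contraction with $\eta$ is literally $\tau\circ T^\nabla$); that bookkeeping is straightforward and your sketch of it is sound, but the paper's intrinsic computation avoids it altogether.
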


\begin{proof}
  Since the clock one-form $\tau$ is parallel relative to any adapted
  affine connection, we have that for all $X,Y \in \eX(M)$,
  \begin{equation*}
    X \tau(Y) = \tau(\nabla_X Y).
  \end{equation*}
  Skew-symmetrising,
  \begin{align*}
    X \tau(Y) - Y \tau(X) &= \tau(\nabla_X Y - \nabla_Y X)\\
                          &= \tau([X,Y] + T^\nabla(X,Y)) &&\tag{by definition of $T^\nabla$},
  \end{align*}
  so that
  \begin{equation*}
    d\tau(X,Y) = X \tau (Y) - Y \tau(X)- \tau([X,Y]) = \tau (T^\nabla(X,Y)).
  \end{equation*}
  In other words, $d\tau = \tau \circ T^\nabla = \Phi(T^\nabla)$, as
  desired.
\end{proof}

If the intrinsic torsion vanishes, then $d\tau = 0$.  If the intrinsic
torsion lands in the subbundle $P \times_G \eG$, then $d\tau$ is
represented locally by a function $U \to \left<\eta \wedge
  \pi^a\right>$, which says that $d\tau = \tau \wedge \alpha$ for some
$\alpha \in \Omega^1(U)$.  This implies that $d\tau \wedge \tau = 0$
which, as shown in Appendix~\ref{sec:hypers-orth} implies in turn that
$d\tau = \tau \wedge \alpha$ for a global one-form $\alpha \in
\Omega^1(M)$.  Finally, the generic case is where $d\tau \wedge
\tau \neq 0$.

We summarise this discussion as follows, which is to be compared with
\cite[Table~I]{Christensen:2013lma}.

\begin{theorem}\label{thm:gal}
  Let\footnote{See Appendices~\ref{sec:two-dimens-galil} for $n=2$ and
    \ref{sec:five-dimens-galil} for $n=5$.} $n>2$ and $n\neq 5$.  A
  galilean $G$-structure on an $n$-dimensional manifold $M$ may be of
  one of three classes, according to its intrinsic torsion.  If
  $\tau \in \Omega^1(M)$ is the clock one-form, then three cases can
  exist:
  \begin{itemize}
  \item[($\eG_0$)] $d\tau = 0$, corresponding to a torsionless Newton--Cartan geometry (NC);
  \item[($\eG_1$)] $d\tau \neq 0$ and $d\tau \wedge \tau = 0$, corresponding to a twistless torsional Newton--Cartan geometry (TTNC); and
  \item[($\eG_2$)] $d\tau \wedge \tau \neq 0$, corresponding to a torsional Newton--Cartan geometry (TNC).
  \end{itemize}
\end{theorem}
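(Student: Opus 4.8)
The plan is to assemble the three classes directly from the module-theoretic facts established above together with Proposition~\ref{prop:gal-d-tau}. We already know from Lemma~\ref{lem:gal-coker-d-iso} that $\coker\d \cong \wedge^2 V^*$ as a $\g$-module, and that this module has exactly the chain of submodules $0 \subset \langle\eta\wedge\pi^a\rangle \subset \wedge^2 V^*$ (for $n\neq 5$); hence the intrinsic torsion, as a section of $P\times_G\coker\d$, can land in exactly one of three orbits of submodules: the zero module, the proper submodule $\eG$, or the full module. Via the bundle isomorphism $P\times_G\coker\d \cong \wedge^2 T^*M$ and the map $\Phi$, these three pointwise conditions translate into conditions on the two-form $\Phi(T^\nabla)$, which by Proposition~\ref{prop:gal-d-tau} equals $d\tau$. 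So the skeleton of the proof is: (i) recall the three-step submodule lattice; (ii) push it through $\Phi$ and Proposition~\ref{prop:gal-d-tau}; (iii) translate each of the three algebraic conditions into the geometric statements $d\tau=0$, $d\tau\neq 0$ with $d\tau\wedge\tau=0$, and $d\tau\wedge\tau\neq0$.

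The one substantive point that needs care is the middle case. Saying that the intrinsic torsion lands in the subbundle $P\times_G\eG$ means only that, relative to some local moving frame in $P$, $d\tau$ is represented by a function valued in $\langle\eta\wedge\pi^a\rangle$. Unpacking the identification: $\langle\eta\wedge\pi^a\rangle$ consists of those elements of $\wedge^2 V^*$ whose wedge with $\eta$ vanishes, so the condition is precisely $d\tau\wedge\tau=0$ locally, and since this is a tensorial (frame-independent) equation it holds globally. Then I would invoke the hypersurface-orthogonality result of Appendix~\ref{sec:hypers-orth}: $d\tau\wedge\tau=0$ with $\tau$ nowhere vanishing implies $d\tau = \tau\wedge\alpha$ for a global one-form $\alpha\in\Omega^1(M)$. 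This is exactly the "twistless" condition. The generic class is then simply the complement, $d\tau\wedge\tau\neq 0$.

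The comparison with the Newton--Cartan literature is then just naming: class $\eG_0$ ($d\tau=0$) is torsionless Newton--Cartan, class $\eG_1$ ($d\tau\neq0$, $d\tau\wedge\tau=0$) is twistless torsional Newton--Cartan, and class $\eG_2$ ($d\tau\wedge\tau\neq0$) is torsional Newton--Cartan, matching \cite[Table~I]{Christensen:2013lma}. I expect the main obstacle to be purely expository rather than mathematical: one must be careful to explain, as the surrounding text already does, that the non-splitting of the sequence \eqref{eq:im-d-ses} means the intermediate condition is a statement about the \emph{class} $[T^\nabla]\in\Gamma(P\times_G\coker\d)$, and that the resulting conditions on $d\tau$ are the connection-independent shadow of it — so the deduction $d\tau\wedge\tau=0$ does not require exhibiting an adapted connection whose torsion literally lies in $\overline{\eG}$. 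Once that is stated cleanly, the three cases are immediate and the only external input is the appendix lemma on hypersurface orthogonality.
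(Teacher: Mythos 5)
Your proposal is correct and follows essentially the same route as the paper: it combines the submodule chain $0 \subset \left<\eta\wedge\pi^a\right> \subset \wedge^2 V^*$ from Lemma~\ref{lem:gal-coker-d-iso} with Proposition~\ref{prop:gal-d-tau}, identifies the middle class by the frame-independent condition $d\tau\wedge\tau=0$, and invokes Proposition~\ref{prop:hso} for the global one-form, exactly as in the discussion of Section~\ref{sec:gal-geom-char} that constitutes the paper's argument. Your remark on the non-splitting of \eqref{eq:im-d-ses} and the connection-independence of the condition on $d\tau$ is likewise the same caveat the paper makes.
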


All spatially isotropic homogeneous galilean spacetimes in
\cite{Figueroa-OFarrill:2018ilb,Figueroa-OFarrill:2019sex} have $d\tau
=0$, but there exist homogeneous examples of all three kinds
\cite{Grosvenor:2017dfs}.

\section{Carrollian $G$-structures}
\label{sec:car-g-struct}

\subsection{The group $G$ of a carrollian structure}
\label{sec:car-group}

We use the same notation as in the previous section: with $V =
\left<H,P_a\right>$ and $V^*=\left<\eta,\pi^a\right>$.  Let $G <
\GL(V)$ be the subgroup leaving invariant $H \in V$ and
$\delta_{ab}\pi^a \pi^b \in \odot^2 V^*$.  Explicitly,
\begin{equation}
  \label{eq:G-car}
  G = \left\{
    \begin{pmatrix}1 & \bv^T \\ \bzero & A\end{pmatrix} ~ \middle | ~  \bv
  \in \RR^{n-1},~ A \in O(n-1)\right\} < \GL(n,\RR),
\end{equation}
with Lie algebra
\begin{equation}
  \label{eq:g-car}
  \g = \left\{
    \begin{pmatrix}0 & \bv^T \\ \bzero & A\end{pmatrix} ~ \middle | ~  \bv
    \in \RR^{n-1},~ A \in \so(n-1)\right\} < \gl(n,\RR).
\end{equation}
We remark that the groups for the galilean and carrollian structures
are abstractly isomorphic, being isomorphic to the semi-direct product
$O(n-1) \ltimes \RR^{n-1}$, but crucially they are not conjugate
inside $\GL(V)$.  Indeed, if they were conjugate, they would have
the invariants in the same representations.  To see this let
$\rho: \GL(V) \to \GL(\V)$ be a representation and suppose that $G,G'<\GL(V)$
are conjugate subgroups.  This means that there exists $\gamma \in \GL(V)$
such that $G' = \gamma G \gamma^{-1}$.  Suppose now that $v \in \V$ is
$G$-invariant, so that $\rho(g)v =v$ for all $g \in G$.  Then
$v'=\rho(\gamma)v \in \V$ is $G'$-invariant.  To show this let $g' \in
G'$ be arbitrary.  Then $g' = \gamma g \gamma^{-1}$ for some $g \in G$
and calculate
\begin{equation*}
  \rho(g')v' = \rho(g')\rho(\gamma)v = \rho(g'\gamma) v = \rho(\gamma
  g) v = \rho(\gamma) \rho(g) v = \rho(\gamma) v = v'.
\end{equation*}
Since the galilean structure group has an invariant in the
representation $V^*$ and the carrollian structure group does not, they
cannot be conjugate subgroups of $\GL(V)$.

The characteristic tensor fields of a carrollian $G$-structure are a
nowhere-vanishing vector field $\xi \in \eX(M)$, typically called
the \textbf{carrollian vector field} and a corank-one positive-semidefinite
$h \in \Gamma(\odot^2 T^*M)$ with $h(\xi,-) = 0$, typically
called the \textbf{spatial metric}.

The group $G$ has two connected components, corresponding to the value
of the determinant of the matrix $A \in O(n-1)$. If the $G$-structure
reduces further to a $G_0$-structure, where $G_0$ is the identity
component of $G$, there is an additional characteristic tensor: namely
a \textbf{volume form} $\mu \in \Omega^n(M)$, corresponding to the
$G_0$-invariant tensor
$\eta \wedge \pi^1 \wedge \cdots \wedge \pi^{n-1} \in \wedge^n V^*$.
Even if the $G$-structure does not reduce to $G_0$, the volume form
exists locally, but it may change by a sign on overlaps.

As before, let $\g = \left<J_{ab},B_a\right>$ with the same Lie
brackets as in \eqref{eq:gal-g-brackets}.  The action of $\g$ on $V$
and $V^*$ is given by
\begin{equation}
  \label{eq:car-g-action-V}
  \begin{aligned}[m]
    J_{ab} \cdot P_c &= \delta_{bc} P_a - \delta_{ac} P_b\\
    J_{ab} \cdot H &= 0\\
    B_a \cdot P_b &= \delta_{ab} H\\
    B_a \cdot H &= 0
  \end{aligned}\qquad\text{and}\qquad
  \begin{aligned}[m]
    J_{ab} \cdot \pi^c &= \left(-\delta^c_b \delta_{ad} + \delta^c_a \delta_{bd}\right) \pi^d\\
    J_{ab} \cdot \eta &= 0\\
    B_a \cdot \pi^b &= 0\\
    B_a \cdot \eta &= -\delta_{ab}\pi^b.
  \end{aligned}
\end{equation}
As in the galilean case, the $\g$-modules $V$ and $V^*$ are
indecomposable but not irreducible, since $\left<H\right> \subset V$
and its annihilator $\Ann H := \left<\pi^a\right> \subset V^*$ are
submodules without complementary submodules.

\subsection{The intrinsic torsion of a carrollian structure}
\label{sec:car-intr-tors}

The Spencer differential $\d : \g \otimes V^* \to V \otimes \wedge^2
V^*$ is given relative to our choice of basis by
\begin{equation}
  \label{eq:car-spencer-d}
  \begin{split}
    \d(J_{ab} \otimes \eta) &= (\delta_{bc} P_a - \delta_{ac} P_b) \otimes \pi^c \wedge \eta \\
    \d(J_{ab} \otimes \pi^c) &= (\delta_{bd} P_a - \delta_{ad} P_b) \otimes \pi^d \wedge \pi^c\\
    \d(B_a \otimes \eta) &= \delta_{ab} H \otimes \pi^b \wedge \eta\\
    \d(B_a \otimes \pi^b) &= \delta_{ac} H \otimes \pi^c \wedge \pi^b.
  \end{split}
\end{equation}

\begin{lemma}\label{lem:car-coker-d}
  As $\g$-modules, $\ker\d \cong \coker\d \cong \odot^2 \Ann H$, where
  $\Ann H \subset V^*$ is the annihilator of $H$.
\end{lemma}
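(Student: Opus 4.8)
The plan is to exhibit explicit $\g$-equivariant isomorphisms $\ker\d \cong \odot^2 \Ann H$ and $\coker\d \cong \odot^2 \Ann H$, mimicking the strategy used in the galilean case (Lemmas in Section~\ref{sec:gal-intr-tors}). First I would read off the kernel directly from \eqref{eq:car-spencer-d}. The maps $\d(B_a \otimes \eta) = \delta_{ab} H \otimes \pi^b \wedge \eta$ and $\d(B_a \otimes \pi^b) = \delta_{ac} H \otimes \pi^c \wedge \pi^b$ have image spanned by the $H \otimes (\cdots)$ terms, while $\d(J_{ab}\otimes -)$ produces the $P_a \otimes (\cdots)$ terms; the kernel is obtained by solving for the combinations of $J_{ab}\otimes\pi^c$ and $B_a\otimes\eta, B_a\otimes\pi^b$ that cancel. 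Concretely, $\d(B_a\otimes\pi^b)$ is symmetric in $a \leftrightarrow b$ after contracting with $\delta$, so $B_a \otimes \pi^b - B_b \otimes \pi^a$ (suitably index-lowered) lies in the kernel, and together with $J_{ab}\otimes\eta$ minus an appropriate $B$-term (paralleling \eqref{eq:gal-spencer-ker}) one spans $\ker\d$. Counting dimensions: $\dim(\g\otimes V^*) = \big(\binom{n-1}{2} + (n-1)\big)n$, and $\dim\im\d$ can be computed from the four families in \eqref{eq:car-spencer-d}, giving $\dim\ker\d = \dim\coker\d$ equal to $\binom{n}{2}$, which matches $\dim \odot^2\Ann H = \binom{n}{2}$ since $\Ann H = \left<\pi^a\right>$ is $(n-1)$-dimensional — wait, $\binom{(n-1)+1}{2} = \binom{n}{2}$, consistent.

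For $\coker\d$, I would use the $\g$-equivariant contraction $\iota_H \odot \id : V \otimes \wedge^2 V^* \to V^* \otimes V^* \to \odot^2 V^*$ — more precisely, the map sending $v \otimes (\alpha\wedge\beta) \mapsto \alpha(??)$ — no: the right map is to contract the $V$-slot of $\wedge^2 V^*$ using... Actually the cleanest choice, following the galilean proof verbatim, is the map $V \otimes \wedge^2 V^* \to \odot^2 V^*$ that first uses an invariant to land in $\Ann H \otimes \Ann H$ and then symmetrises; since $H \in V$ is the carrollian invariant, one contracts a $\wedge^2 V^*$ factor against $H$ to get $\iota_H : \wedge^2 V^* \to V^*$, then tensors with $\id_{V^*}$ on the remaining slot and symmetrises — but the surviving $V$-factor needs disposal too. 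The correct construction: the map $\Phi: V\otimes\wedge^2 V^* \to \odot^2 V^*$ defined by $v \otimes \omega \mapsto \text{sym}\big((\iota_H\omega)\otimes \hat v\big)$ where $\hat v$ is obtained via the invariant $\delta_{ab}\pi^a\pi^b$ identifying $\left<P_a\right> \hookrightarrow V$ with $\Ann H \subset V^*$. One then checks on the basis that $[H \otimes \pi^a\wedge\pi^b] \mapsto 0$ (since $\iota_H(\pi^a\wedge\pi^b)=0$) — that's wrong, it would kill too much. Instead $\iota_H(\pi^a \wedge \eta) = \pi^a$, so the nonzero images come from $[P_c \otimes \pi^a\wedge\eta]$ and $[P_c \otimes \pi^a \wedge \pi^b]$; these are precisely the cokernel generators since the $H\otimes(\cdots)$ terms are all in $\im\d$. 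So $\Phi$ sends $[P_c \otimes \pi^a\wedge\eta] \mapsto \delta_{cd}\,\text{sym}(\pi^d \odot \pi^a)$ and kills $[P_c \otimes \pi^a\wedge\pi^b]$ modulo checking these latter are already in $\im\d$ via the $J$-action — they are, from $\d(J_{cd}\otimes\pi^b)$ shifted. Verifying surjectivity and that the kernel of $\Phi$ equals $\im\d$ exactly is the dimension-count step.

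The main obstacle will be verifying $B_a$-equivariance of the isomorphisms — the $\so(n-1)$-equivariance is automatic from index structure, but the carroll boosts $B_a$ act nontrivially and mix the $\eta$- and $\pi$-components (see \eqref{eq:car-g-action-V}: $B_a\cdot\eta = -\delta_{ab}\pi^b$, $B_a\cdot\pi^b = 0$), so one must check that the candidate maps $\varphi:\ker\d\to\odot^2\Ann H$ and $\Phi$ intertwine this action. This is a direct but fiddly computation: apply $B_c$ to a kernel generator, use the Leibniz rule together with the $\g$-actions on $V$, $V^*$, compare with $B_c$ acting on the image in $\odot^2\Ann H$ (where $B_c$ acts trivially, since $B_c\cdot\pi^a = 0$, so $\odot^2\Ann H$ is a \emph{trivial} $\left<B_a\right>$-module!). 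Hence equivariance forces the $B_c$-variation of each kernel/cokernel generator to vanish modulo $\im\d$ (resp.\ in $\ker\d$), which is the real content to check — and it parallels exactly the check $B_c\cdot(B_a\otimes\eta) = 0$ and the two-line computation in the galilean lemma. Once that is in hand, an explicit $\so(n-1)$-isomorphism onto $\odot^2\Ann H$ completes both halves.
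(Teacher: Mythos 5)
Your identification of $\ker\d$ is the genuine gap. From \eqref{eq:car-spencer-d}, $\d(B_a\otimes\pi^b)=\delta_{ac}\,H\otimes\pi^c\wedge\pi^b$, so after lowering the index the image is \emph{antisymmetric} in $(a,b)$; hence it is the \emph{symmetric} combinations
\begin{equation*}
  (\delta_{bc}B_a+\delta_{ac}B_b)\otimes\pi^c
\end{equation*}
that are annihilated, whereas the antisymmetric combinations you propose map to $2\,\delta_{ac}\delta_{bd}\,H\otimes\pi^c\wedge\pi^d\neq 0$. Moreover there is no carrollian analogue of the galilean kernel element $J_{ab}\otimes\eta+(\delta_{bc}B_a-\delta_{ac}B_b)\otimes\pi^c$ of \eqref{eq:gal-spencer-ker}: in the carrollian module the $J$-terms of $\d$ land in $\left<P_a\right>\otimes\wedge^2V^*$ while the $B$-terms land in $\left<H\right>\otimes\wedge^2V^*$, so a kernel element must have each piece in the kernel separately, and $\d$ restricted to $\left<J_{ab}\right>\otimes V^*$ is injective (run the braid argument of Lemma~\ref{lem:spencer-iso-soV} on $T(v,w,z)=\delta^\perp(\kappa_v w,z)$ with $\delta^\perp=\delta_{ab}\pi^a\pi^b$, using that $\kappa_v w$ is spatial and $\delta^\perp$ is positive-definite there). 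As written, your kernel would be of type $\wedge^2\Ann H$ plus a spurious $J$-piece, of dimension $(n-1)(n-2)$, contradicting both the statement you are proving and your own (correct) count $\dim\ker\d=\binom{n}{2}$. The paper simply exhibits $\ker\d=\left<(\delta_{bc}B_a+\delta_{ac}B_b)\otimes\pi^c\right>$ and maps it to $\delta_{ac}\delta_{bd}\,\pi^c\pi^d\in\odot^2\Ann H$, which is equivariant precisely because the boosts act trivially on both sides.

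The cokernel half, once the false starts are stripped away, is essentially the paper's argument: your final map, sending $v\otimes\omega$ to the symmetrisation of $(\imath_H\omega)\otimes\delta^\perp(v,-)$, is built only from the invariants $H$ and $\delta^\perp$ and hence is $\g$-equivariant (it is, up to a factor, the map $\phi$ of Section~\ref{sec:car-geom-char}); it kills $H\otimes\wedge^2V^*$ and $P_c\otimes\pi^a\wedge\pi^b$, both of which you correctly place in $\im\d$, it is onto $\odot^2\Ann H$, and the count $\dim\coker\d=\binom{n}{2}$ closes the identification. Your closing observation is also the right one: since $\odot^2\Ann H$ is a trivial module for the boosts, equivariance amounts to the boost-variation of the chosen representatives lying in $\im\d$ (for the cokernel) and vanishing (for the kernel), which holds for the symmetric representatives above. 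But the kernel description must be corrected before the proof of the lemma goes through.
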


\begin{proof}
  The kernel of the Spencer differential is easily seen to be
  \begin{equation*}
    \ker \d = \left<(\delta_{bc} B_a + \delta_{ac} B_b) \otimes \pi^c\right>
  \end{equation*}
  which suggests defining $\ker \d \to \odot^2 \Ann H$ by
  \begin{equation*}
    (\delta_{bc} B_a + \delta_{ac} B_b) \otimes \pi^c \mapsto \delta_{ac}\delta_{bd} \pi^c \pi^d.
  \end{equation*}
  This is clearly $\g$-equivariant, since it is manifestly
  $\so(n-1)$-equivariant and $B_a$ acts trivially on both sides.  It
  is also clearly an isomorphism.  Similarly the cokernel of the
  Spencer differential is the image in
  $\coker \d$ of
  \begin{equation*}
    \left<(\delta_{bc} P_a + \delta_{ac} P_b) \otimes \eta \wedge \pi^c\right> \subset V \otimes \wedge^2 V^*,
  \end{equation*}
  and we define $\coker \d \to \odot^2 \Ann H$ by
  \begin{equation*}
    (\delta_{bc} P_a + \delta_{ac} P_b) \otimes \eta \wedge \pi^c \mapsto \delta_{ac}\delta_{bd} \pi^c \pi^d,
  \end{equation*}
  which can be easily checked to be a $\g$-equivariant isomorphism.
\end{proof}

Since $B_a$ acts trivially on $\odot^2 \Ann H$, we may think of it
simply as an $\so(n-1)$-module.  It is therefore fully reducible into
a direct sum of two irreducible submodules:
\begin{equation}
  \odot^2 \Ann H = \odot_0^2 \Ann H \oplus \RR \delta^\perp,
\end{equation}
where $\odot_0^2 \Ann H$ are the traceless symmetric bilinear forms
and $\delta^\perp := \delta_{ab} \pi^a \pi^b$.  This decomposes
$\coker \d$ into a direct sum of two irreducible submodules
\begin{equation}
  \coker\d = \eC_1 \oplus \eC_2,
\end{equation}
where the submodule $\eC_1$ is of type $\odot_0^2 \Ann H$ and is the
image in $\coker\d$ of the subspace
\begin{equation}
  \left<(\delta_{bc} P_a + \delta_{ac} P_b - \tfrac2{n-1} \delta_{ab} P_c) \otimes \eta \wedge \pi^c\right> \subset V \otimes
  \wedge^2 V^*,
\end{equation}
whereas the trivial submodule $\eC_2$ is the image in $\coker\d$ of
the one-dimensional subspace
\begin{equation}
  \left<P_a \otimes \eta \wedge \pi^a\right> \subset V \otimes
  \wedge^2 V^*.
\end{equation}
Thus we see that there are four submodules of $\coker\d$: 0, $\eC_1$, $\eC_2$ and
$\coker\d = \eC_1 \oplus \eC_2$ and hence we conclude that there are
four classes of carrollian $G$-structures according to which submodule
of $\coker\d$ the intrinsic torsion lands in.

\subsection{Geometric characterisation of carrollian structures}
\label{sec:car-geom-char}

The isomorphism $\coker\d \cong \odot^2\Ann H$ of $\g$-modules in
Lemma~\ref{lem:car-coker-d} is induced (up to an inconsequential
factor of $2$) by the $\g$-equivariant linear map
\begin{equation}
  \phi: \Hom(\wedge^2V, V) \to \odot^2 \Ann H
\end{equation}
defined for $T \in \Hom(\wedge^2V,V)$ by
\begin{equation}
  \phi(T)(v,w) := \delta^\perp(T(H,v),w) +
    \delta^\perp(T(H,w),v)~\quad\forall~v,w \in V.
\end{equation}
We check that $\phi(T)$ does land in $\odot^2 \Ann H$:
\begin{equation}
  \phi(T)(H,v) = \delta^\perp(T(H,H),v) + \delta^\perp(T(H,v),H) = 0,
\end{equation}
where the first term vanishes because of skew-symmetry of $T$ and the
second because $\delta^\perp(H,-) = 0$.  Explicitly,
\begin{equation}
  \begin{split}
    \phi(P_a \otimes \pi^b \wedge \pi^c) &= 0\\
    \phi(P_a \otimes \eta \wedge \pi^b) &= \delta_{ac} \pi^b \pi^c\\
    \phi(H \otimes \pi^a \wedge \pi^b) &= 0\\
    \phi(H \otimes \eta \wedge \pi^a) &= 0,
  \end{split}
\end{equation}
and we check that $\im \d \subset \ker\phi$, so that $\phi$ does
induce a map $\coker\d \to \odot^2\Ann H$ which coincides with the one
in Lemma~\ref{lem:car-coker-d}, up to an overall factor of $2$.

The map $\phi$ induces a bundle map of the relevant associated vector
bundles and hence a $C^\infty(M)$-linear map
\begin{equation}\label{eq:car-bundle-map}
  \Phi: \Omega^2(M,TM) \to \Gamma(\odot^2 \Ann \xi)
\end{equation}
where, for $T \in \Omega^2(M,TM)$,
\begin{equation}
  \Phi(T)(X,Y) = h(T(\xi,X),Y) + h(T(\xi,Y),X) \quad\forall~X,Y\in\eX(M).
\end{equation}

\begin{proposition}\label{prop:L-xi-h}
  Let $\nabla$ be an affine connection adapted to a carrollian
  $G$-structure on $M$ with torsion $T^\nabla$.  Then under the map $\Phi$ in
  equation~\eqref{eq:car-bundle-map},
  \begin{equation}
    \Phi(T^\nabla) = \eL_\xi h.
  \end{equation}
\end{proposition}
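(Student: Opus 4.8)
The plan is to mimic the proof of Proposition~\ref{prop:gal-d-tau}, exploiting that the spatial metric $h$ is parallel with respect to any adapted connection $\nabla$.  First I would write out $(\eL_\xi h)(X,Y)$ using the standard formula for the Lie derivative of a $(0,2)$-tensor, replacing all Lie brackets by covariant derivatives plus torsion via $[\xi,X] = \nabla_\xi X - \nabla_X \xi - T^\nabla(\xi,X)$.  This gives
\begin{equation*}
  (\eL_\xi h)(X,Y) = \xi\, h(X,Y) - h([\xi,X],Y) - h(X,[\xi,Y]).
\end{equation*}
Then I would use $\nabla h = 0$ in the form $\xi\, h(X,Y) = h(\nabla_\xi X, Y) + h(X, \nabla_\xi Y)$ to cancel the $\nabla_\xi$ terms, leaving
\begin{equation*}
  (\eL_\xi h)(X,Y) = h(\nabla_X \xi, Y) + h(T^\nabla(\xi,X), Y) + h(\nabla_Y \xi, X) + h(T^\nabla(\xi,Y), X).
\end{equation*}

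The second step is to dispose of the terms $h(\nabla_X \xi, Y) + h(\nabla_Y \xi, X)$.  Here I would again invoke $\nabla h = 0$, this time applied to the identity $h(\xi,-) = 0$: differentiating $h(\xi,Y) = 0$ along $X$ gives $h(\nabla_X\xi, Y) + h(\xi, \nabla_X Y) = 0$, and the last term vanishes because $h(\xi,-)=0$, so $h(\nabla_X\xi,Y)=0$ identically, and likewise $h(\nabla_Y\xi,X)=0$.  What survives is exactly $h(T^\nabla(\xi,X),Y) + h(T^\nabla(\xi,Y),X) = \Phi(T^\nabla)(X,Y)$, which is the claimed identity.  One should also note in passing that the right-hand side automatically annihilates $\xi$ — consistent with $\eL_\xi h$ being a section of $\odot^2 \Ann\xi$ — since $(\eL_\xi h)(\xi, Y) = \xi\, h(\xi,Y) - h([\xi,\xi],Y) - h(\xi,[\xi,Y]) = 0$; this matches the computation that $\phi(T)$ lands in $\odot^2\Ann H$ carried out just before the statement.

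I expect no serious obstacle here: the whole argument is the carrollian mirror image of the galilean computation, with the roles of ``parallel one-form $\tau$'' and ``parallel cometric $\gamma$'' replaced by ``parallel vector field $\xi$'' and ``parallel metric $h$'', and with $d\tau$ (an antisymmetrisation) replaced by $\eL_\xi h$ (a symmetrisation), reflecting the fact that $\coker\d$ is $\wedge^2 V^*$ in the galilean case and $\odot^2\Ann H$ in the carrollian case.  The only point requiring a little care is bookkeeping of signs and the correct sign convention for $T^\nabla$ in the substitution $[\xi,X] = \nabla_\xi X - \nabla_X\xi - T^\nabla(\xi,X)$, which follows directly from \eqref{eq:torsion-tensor}.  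Everything else is a routine use of the fact, established in Section~\ref{sec:adapted-connections}, that all characteristic tensor fields — here both $\xi$ and $h$ — are parallel with respect to an adapted connection.
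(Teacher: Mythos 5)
Your proof is correct and takes essentially the same route as the paper: expand $\eL_\xi h$, use parallelism of the characteristic tensors of the adapted connection, and identify the surviving terms with $\Phi(T^\nabla)$. The only cosmetic difference is that the paper removes the $\nabla_X\xi$ terms at once by invoking $\nabla\xi=0$ (writing $T^\nabla(\xi,X)=\nabla_\xi X-[\xi,X]$), whereas you dispose of them via $\nabla h=0$ together with $h(\xi,-)=0$; both are immediate consequences of the connection being adapted, so the arguments coincide in substance.
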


\begin{proof}
  Since $\nabla$ is adapted, both the carrollian vector field $\xi \in
  \eX(M)$ and the spatial metric $h \in \Gamma(\odot^2\Ann\xi)$ are
  parallel.  From $\nabla \xi = 0$ we have that
  \begin{equation}\label{eq:car-xi-par}
    T^\nabla(\xi,X) = \nabla_\xi X - [\xi,X],\qquad \forall~X \in \eX(M),
  \end{equation}
  and from $\nabla_\xi h = 0$ we have that for all $X,Y \in \eX(M)$,
  \begin{equation*}
    \xi h(X,Y) - h(\nabla_\xi X, Y) - h(X,\nabla_\xi Y) = 0.
  \end{equation*}
  We may expand the first term using the Lie derivative and arrive at
  \begin{equation*}
    (\eL_\xi h)(X,Y) + h([\xi,X],Y) + h(X,[\xi,Y]) - h(\nabla_\xi X, Y) - h(X,\nabla_\xi Y) = 0,
  \end{equation*}
  which, using equation~\eqref{eq:car-xi-par}, becomes
  \begin{equation*}
    (\eL_\xi h)(X,Y) - h(T^\nabla(\xi,X),Y) + h(X,T^\nabla(\xi,Y)) = 0
  \end{equation*}
  or, equivalently,
  \begin{equation*}
    (\eL_\xi h)(X,Y) = \Phi(T^\nabla)(X,Y).
  \end{equation*}
\end{proof}

\begin{proposition}\label{prop:L-xi-mu}
  Let $\mu$ denote the (perhaps only locally defined) volume form on
  $M$.  Then
  \begin{equation}
    \eL_\xi \mu = \tr(S) \mu,
  \end{equation}
  where $S(X):= T^\nabla(\xi,X)$ for all $X \in \eX(M)$.
\end{proposition}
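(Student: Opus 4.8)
The plan is to mimic the structure of the proof of Proposition~\ref{prop:L-xi-h}: use the fact that $\mu$ is a characteristic tensor field and hence parallel relative to any adapted connection $\nabla$, together with the Cartan formula for the Lie derivative of an $n$-form, and the relation $T^\nabla(\xi,X) = \nabla_\xi X - [\xi,X]$ coming from $\nabla\xi = 0$ (which is equation~\eqref{eq:car-xi-par}).

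First I would compute $\eL_\xi\mu$ pointwise by evaluating on a frame. Since $\eL_\xi\mu$ is an $n$-form it must be a function times $\mu$, so it suffices to compute $(\eL_\xi\mu)(X_1,\dots,X_n)$ for a local frame $(X_1,\dots,X_n)$ and compare with $\mu(X_1,\dots,X_n)$. Using $(\eL_\xi\mu)(X_1,\dots,X_n) = \xi\,\mu(X_1,\dots,X_n) - \sum_i \mu(X_1,\dots,[\xi,X_i],\dots,X_n)$ and then replacing $\xi\,\mu(X_1,\dots,X_n)$ via $\nabla_\xi\mu = 0$, namely $\xi\,\mu(X_1,\dots,X_n) = \sum_i \mu(X_1,\dots,\nabla_\xi X_i,\dots,X_n)$, the two sums combine into $\sum_i \mu(X_1,\dots,\nabla_\xi X_i - [\xi,X_i],\dots,X_n) = \sum_i \mu(X_1,\dots,S(X_i),\dots,X_n)$, using the definition $S(X) = T^\nabla(\xi,X) = \nabla_\xi X - [\xi,X]$. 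The last expression is precisely $\tr(S)\,\mu(X_1,\dots,X_n)$ by the standard identity that, for any endomorphism $S$ of the tangent space, $\sum_i \mu(X_1,\dots,SX_i,\dots,X_n) = (\tr S)\,\mu(X_1,\dots,X_n)$. This gives $\eL_\xi\mu = \tr(S)\,\mu$ as claimed.

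There is essentially no hard part here — the only mild subtlety worth a sentence is that $\mu$ need only be locally defined (it may flip sign on overlaps when the $G$-structure does not reduce to $G_0$), but $\eL_\xi\mu$ and $\tr(S)$ are both insensitive to such a sign, so the identity $\eL_\xi\mu = \tr(S)\mu$ holds on each patch and hence globally in the sense that it is chart-independent. I might also remark in passing that $\tr(S)$ is independent of the choice of adapted connection $\nabla$: changing $\nabla$ changes $S$ by something in the image of the contorsion, whose trace contribution along $\xi$ can be checked to vanish using the explicit form of $\g$ in \eqref{eq:g-car} (the $B_a$ and $\so(n-1)$ generators are traceless and, more to the point, $S$ only ever sees $T^\nabla(\xi,-)$, which is the part of the torsion detected by $\coker\d$). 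That said, the proposition as stated only asserts the formula for a given adapted $\nabla$, so I would keep the proof to the three lines above and relegate any such remark to a sentence afterwards.
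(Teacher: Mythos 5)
Your proof is correct, and it takes a genuinely more direct route than the paper's. The paper proves the identity by Cartan's magic formula, $\eL_\xi\mu = d\imath_\xi\mu$, and then an explicit coframe computation: it pulls back the first structure equation along a local adapted frame, uses $\nabla\xi=0$ to show $\omega^0{}_0=\omega^a{}_0=0$ and $\nabla h=0$ to show $\omega(\xi)^a{}_a=0$, and reads off the $\tr(S)\mu$ term. You instead evaluate $\eL_\xi\mu$ on a frame via the Leibniz rule, trade $\xi\,\mu(X_1,\dots,X_n)$ for $\sum_i\mu(\dots,\nabla_\xi X_i,\dots)$ using $\nabla_\xi\mu=0$, and identify $\nabla_\xi X_i-[\xi,X_i]=T^\nabla(\xi,X_i)=S(X_i)$ from $\nabla\xi=0$, finishing with the standard trace identity for top forms. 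What your approach buys is brevity and the avoidance of any structure-equation bookkeeping; what it costs is that you must know $\nabla\mu=0$ for the (possibly only locally defined) $\mu$. Your appeal to ``$\mu$ is a characteristic tensor field'' is slightly loose when the $G$-structure does not reduce to $G_0$, but it is easily repaired in one line: the connection form of an adapted connection is $\g$-valued, and every element of $\g$ in \eqref{eq:g-car} is traceless, so $\g$ acts trivially on $\wedge^nV^*$ and the locally defined volume form is parallel on each patch — this is in effect what the paper's vanishing-trace computations establish by hand. Since you flag the sign ambiguity on overlaps and both sides of the identity are insensitive to it, the argument is complete; your closing remark about the $\nabla$-independence of $\tr(S)$ is inessential and can stay as an aside.
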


\begin{proof}
  Let $s = (X_0=\xi,\underbrace{X_1,\dots,X_{n-1}}_{X_a}) : U \to P$
  be a local moving frame with $h(X_a,X_b)=\delta_{ab}$ and, of course
  $h(X_0,-)=0$.  Let
  $(\theta^0,\underbrace{\theta^1,\dots,\theta^{n-1}}_{\theta^a})$ be
  the canonically dual coframe, so that $h =
  \delta_{ab}\theta^a\theta^b$.  Then the local expression for the
  volume form is $\mu = \theta^0 \wedge \theta^1 \wedge \cdots \wedge
  \theta^{n-1}$ and hence by the Cartan formula
  \begin{equation}
    \eL_\xi \mu = d\imath_\xi \mu = d(\theta^1 \wedge \cdots \wedge
    \theta^{n-1}).
  \end{equation}
  Pulling the first structure equation \eqref{eq:fss} back to $M$ via
  $s$, we have
  \begin{equation*}
    d\theta^a = T^a - \omega^a{}_0 \wedge \theta^0 - \omega^a{}_b \wedge \theta^b,
  \end{equation*}
  where $T^a = \theta^a \circ T^\nabla$.  Since $\nabla \xi =0$, we
  have that, for all $Y \in \eX(M)$,
  \begin{equation*}
    0 = \nabla_Y X_0 = X_0 \omega(Y)^0{}_0 + X_a \omega(Y)^a{}_0
    \implies \omega^0{}_0 = \omega^a{}_0 = 0,
  \end{equation*}
  so that
  \begin{equation}
    d\theta^a = T^a - \omega^a{}_b \wedge \theta^b,
  \end{equation}
  and hence
  \begin{equation*}
    d(\theta^1 \wedge \cdots \wedge \theta^{n-1}) = (T^1 -
    \omega^1{}_a \wedge \theta^a) \wedge \theta^2 \wedge \cdots
    \theta^{n-1} - \theta^1 \wedge (T^2 - \omega^2{}_a \wedge
    \theta^a) \wedge \theta^3 \wedge \cdots \wedge \theta^{n-1} +
    \cdots
  \end{equation*}
  The only terms which contribute to this sum are $T^a(X_0,X_a)
  \theta^0 \wedge \theta^a$ and $\omega(X_0)^a_a\theta^0$ with no
  summation implied in either term.  In summary,
  \begin{equation*}
    d(\theta^1 \wedge \cdots \wedge \theta^{n-1}) = \left( \theta^a
      \circ T^\nabla(\xi,X_a) - \omega(\xi)^a{}_a\right) \mu.
  \end{equation*}
  We claim that $\omega(\xi)^a{}_a = 0$ since $\nabla h = 0$.  Indeed,
  \begin{align*}
    0 &= \left( \nabla_\xi h \right)(X_a,X_b) \\
      &= \xi h(X_a,X_b) - h(\nabla_\xi X_a, X_b) - h(X_a, \nabla_\xi X_b)\\
      &= - h(X_c \omega(\xi)^c{}_a, X_b) - - h(X_a, X_c
        \omega(\xi)^c{}_b) && \tag{using that $h(\xi,-) = 0$}\\
      &= - \omega(\xi)_{ba} - \omega(\xi)_{ab},
  \end{align*}
  which implies that $\omega(\xi)^a{}_a = \delta^{ab} \omega(\xi)_{ab}
  = 0$.  In summary,
  \begin{equation*}
    d(\theta^1 \wedge \cdots \wedge \theta^{n-1}) = \left( \theta^a
      \circ T^\nabla(\xi,X_a) \right) \mu = \theta^a S(X_a) \mu =
    \tr(S) \mu.
  \end{equation*}
\end{proof}

We can now recognise the geometrical significance of the different
intrinsic torsion conditions.  If $\Phi(T^\nabla) = 0$, then $\eL_\xi
h = 0$ and hence $\xi$ is $h$-Killing.  If $\Phi(T^\nabla) = f h$, for
some $f \in C^\infty(M)$, then $\eL_\xi h = f h$ and hence $\xi$ is
$h$-conformal Killing.  Finally, if $\Phi(T^\nabla)$ is traceless,
then $\eL_\xi \mu = 0$, so that $\xi$ is volume-preserving.
Otherwise, we have a generic carrollian structure.

We may summarise this discussion as follows.

\begin{theorem}\label{thm:car}
  Let\footnote{See Appendix~\ref{sec:two-dimens-galil} for $n=2$.}
  $n>2$.  A carrollian $G$-structure on an $n$-dimensional manifold
  $M$ can be of one of four classes depending on the Lie derivative
  $\eL_\xi h$ of the spatial metric $h$ along the carrollian vector
  field $\xi$:
  \begin{itemize}
  \item[($\eC_0$)] $\eL_\xi h = 0$ ($\xi$ is $h$-Killing)
  \item[($\eC_1$)] $\eL_\xi \mu = 0$ ($\xi$ is volume-preserving);
  \item[($\eC_2$)] $\eL_\xi h = f h$ ($\exists 0 \neq f \in C^\infty(M)$)  ($\xi$  is $h$-conformal Killing); 
  \item[($\eC_3$)] none of the above.
  \end{itemize}
\end{theorem}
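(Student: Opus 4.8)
The proof is an assembly of results already in place. Lemma~\ref{lem:car-coker-d} and the subsequent analysis show that $\coker\d \cong \odot^2\Ann H = \odot_0^2\Ann H \oplus \RR\delta^\perp = \eC_1 \oplus \eC_2$ has exactly four $\g$-submodules — $0$, $\eC_1$, $\eC_2$ and $\coker\d$ — with $\eC_1$ and $\eC_2$ incomparable; and Proposition~\ref{prop:L-xi-h} identifies the intrinsic torsion $[T^\nabla]\in\Gamma(P\times_G\coker\d)$, which is independent of the choice of adapted connection, with $\eL_\xi h$ through the bundle map $\Phi$. The plan is to run through the four possibilities for which subbundle $[T^\nabla]$ takes values in and, using Proposition~\ref{prop:L-xi-h} throughout and Proposition~\ref{prop:L-xi-mu} for the traceless case, read off in each the corresponding condition on $\eL_\xi h$.

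First, $[T^\nabla]=0$ is, by Proposition~\ref{prop:L-xi-h}, exactly $\eL_\xi h = 0$, i.e.\ $\xi$ is $h$-Killing (class $\eC_0$). Second, $[T^\nabla]$ takes values in $P\times_G\eC_2$ precisely when $\eL_\xi h$ is at each point a multiple of $h$; the proportionality function is $C^\infty$, being $\tfrac1{n-1}$ times the trace of $\eL_\xi h$ in any local $h$-orthonormal frame of $\Ann\xi$, a quantity independent of the frame by $O(n-1)$-invariance. This gives $\eL_\xi h = fh$ with $f\in C^\infty(M)$, which is class $\eC_2$ when $f\not\equiv 0$ ($\xi$ genuinely $h$-conformal Killing). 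Third, for the traceless summand I would take an $h$-orthonormal adapted moving frame $s=(\xi,X_1,\dots,X_{n-1})$ with dual coframe $(\theta^0,\theta^a)$; since $h(Z,X_a)=\theta^a(Z)$, Proposition~\ref{prop:L-xi-h} gives $\sum_a(\eL_\xi h)(X_a,X_a) = 2\sum_a\theta^a(T^\nabla(\xi,X_a)) = 2\tr S$ with $S(X)=T^\nabla(\xi,X)$, so $\eL_\xi h$ is traceless if and only if $\tr S = 0$, which by Proposition~\ref{prop:L-xi-mu} is equivalent to $\eL_\xi\mu = 0$ ($\xi$ volume-preserving); this is class $\eC_1$ when $\eL_\xi h \neq 0$. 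Finally, the remaining possibility — $\eL_\xi h$ neither zero, nor pure trace, nor traceless — is the generic class $\eC_3$, and the four cases are manifestly exhaustive and mutually exclusive.

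I do not expect a genuine obstacle, only one point of care: upgrading ``$\eL_\xi h$ is pointwise proportional to $h$'' to ``$\eL_\xi h = fh$ with $f$ smooth'', which the orthonormal-frame trace formula settles — and the same computation supplies the identification of the traceless condition with $\eL_\xi\mu = 0$. The hypothesis $n>2$ enters only to guarantee $\odot_0^2\Ann H \neq 0$, so that $\eC_1$ is a genuine proper submodule and there really are four classes; for $n=2$, where carrollian and galilean structures coincide, the classification is treated separately.
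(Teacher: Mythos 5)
Your proposal is correct and follows essentially the same route as the paper: the theorem there is simply the summary of Lemma~\ref{lem:car-coker-d}, the decomposition $\coker\d = \eC_1\oplus\eC_2$ with its four submodules, and Propositions~\ref{prop:L-xi-h} and \ref{prop:L-xi-mu}, which is exactly what you assemble. Your extra care about the smoothness of $f$ and the identification $\tr(\eL_\xi h)=2\tr S$ only makes explicit steps the paper leaves implicit.
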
 

The symmetric carrollian spaces in
\cite{Figueroa-OFarrill:2018ilb,Figueroa-OFarrill:2019sex} all have
$\eL_\xi h = 0$, but the formulae in
\cite[Section~7.3]{Figueroa-OFarrill:2019sex} show that the carrollian
lightcone has $\eL_\xi h = 2 h$, so that $\xi$ is $h$-homothetic.  I
am not aware of any explicit homogeneous carrollian manifolds in the
other two classes; although it should not be hard to construct them as
null hypersurfaces of lorentzian manifolds using as a hint the
relationship with bargmannian structures in
Section~\ref{sec:corr-betw-bargm}, where we will reformulate the
conditions in Theorem~\ref{thm:car} in terms of the different types of
null hypersurfaces in a lorentzian manifold.

\section{Aristotelian $G$-structures}
\label{sec:ari-g-struct}

\subsection{The group $G$ of an aristotelian structure}
\label{sec:ari-group}

An aristotelian space admits simultaneously a galilean and a
carrollian structure, so the group $G < \GL(V)$ corresponding to an
aristotelian $G$-structure is the intersection of the groups in
equations~\eqref{eq:G-gal} and \eqref{eq:G-car}, namely
\begin{equation}\label{eq:G-ari}
  G = \left\{
    \begin{pmatrix}1 & \bzero^T \\ \bzero & A\end{pmatrix} ~ \middle | ~ A \in O(n-1)\right\} < \GL(n,\RR),
\end{equation}
with Lie algebra
\begin{equation}
  \label{eq:g-ari}
  \g = \left\{
    \begin{pmatrix}0 & \bzero^T \\ \bzero & A\end{pmatrix} ~ \middle |
    ~  A \in \so(n-1)\right\} < \gl(n,\RR).
\end{equation}
In other words $G \cong O(n-1)$ and $\g \cong \so(n-1)$ is spanned by
$J_{ab}$, consistent with the fact that there are no boosts in an
aristotelian spacetime.

Under the action of $G$, both $H \in V$ and $\eta \in V^*$ are
invariant, as are $\delta^{ab} P_a P_b \in \odot^2 V$ and
$\delta_{ab}\pi^a\pi^b \in \odot^2 V^*$. Therefore, as $G$-modules, we
have decompositions into irreducible submodules:
\begin{equation}
  V = \left<H\right> \oplus \Ann \eta \qquad\text{and}\qquad V^* =
  \left<\eta\right> \oplus \Ann H.
\end{equation}
Moreover, $V$ and $V^*$ are isomorphic $G$-modules.  For example, the
map $\phi: V \to V^*$ defined by
\begin{equation}
  \phi(H) = \eta \qquad\text{and}\qquad \phi(P_a) =\delta_{ab} \pi^b
\end{equation}
is a $G$-equivariant isomorphism.

This means that an aristotelian spacetime has the following
characteristic tensor fields: a nowhere vanishing vector field $\xi$
and a nowhere-vanishing one-form $\tau$ which can be normalised to
$\tau(\xi)=1$, and corank-one positive-semidefinite
$\gamma \in \Gamma(\odot^2 TM)$ and $h \in \Gamma(\odot^2T^*M)$ with
$\gamma(\tau,-)=0$ and $h(\xi,-)=0$.

\subsection{The intrinsic torsion of an aristotelian structure}
\label{sec:ari-intr-tors}

Since $\g < \so(V)$, for either a lorentzian or euclidean inner
product on $V$, Lemma~\ref{lem:spencer-iso-soV} says that the Spencer
differential $\d : \g \otimes V^* \to V \otimes \wedge^2 V^*$ is
injective.  It is given explicitly by
\begin{equation}
  \label{eq:ari-spencer}
  \begin{split}
    \d(J_{ab} \otimes \pi^c) &= (\delta_{ad} P_b - \delta_{bd} P_a)  \otimes \pi^c \wedge \pi^d\\
    \d(J_{ab} \otimes \eta) &= (\delta_{bc} P_a - \delta_{ac}P_b) \otimes \pi^c \wedge \eta.
  \end{split}
\end{equation}
It then follows that the image of the Spencer differential is given by
\begin{equation}
  \im\d = \left<P_a \otimes \pi^b \wedge \pi^c, \left( \delta_{bc} P_a
      - \delta_{ac} P_b \right) \otimes \pi^c \wedge \eta\right>
\end{equation}
and hence the cokernel is the image in $\coker\d$ of
\begin{equation}
  \left<H \otimes \pi^a \wedge \pi^b, H \otimes \pi^a \wedge \eta,
    \left(\delta_{bc} P_a + \delta_{ac} P_b \right)\otimes \pi^c \wedge \eta\right>.
\end{equation}

The cokernel of the Spencer differential is fully reducible into
irreducible $G$-submodules:\footnote{If $n=5$ and assuming that the
  structure group $O(4)$ reduces further to $\SO(4)$, then the module
  $\eA_1$ is not irreducible but decomposes into selfdual and
  antiselfdual pieces. This is discussed briefly in
  Appendix~\ref{sec:five-dimens-arist}.}
\begin{equation}
  \coker\d \cong \eA_1 \oplus \eA_2 \oplus \eA_3 \oplus \eA_4,
\end{equation}
where, letting $W$ stand for the vector representation of $\g \cong
\so(n-1)$,
\begin{itemize}
\item $\eA_1 \cong \wedge^2 W$ consists of the image in $\coker \d$ of
  $\left<H \otimes \pi^a \wedge \pi^b\right>$;
\item $\eA_2 \cong W$ consists of the image in $\coker \d$ of
  $\left<H \otimes \pi^a \wedge \eta\right>$;
\item $\eA_3 \cong \odot_0^2 W$ (symmetric traceless) consists of the
  image in $\coker \d$ of
  \begin{equation}
    \left<\left( \delta_{bc} P_a + \delta_{ac} P_b - \tfrac2{n-1} \delta_{ab}
    P_c \right)\otimes \pi^c \wedge \eta\right>;
  \end{equation}
\item and $\eA_4 \cong \RR$ consists of the image in $\coker \d$ of
  $\left<P_a \otimes \pi^a \wedge \eta\right>$;
\end{itemize}
We conclude that there are sixteen $G$-submodules of $\coker\d$ and
therefore sixteen classes of aristotelian $G$-structures.

\subsection{Geometric characterisation of aristotelian structures}
\label{sec:ari-geom-char}

Since an aristotelian $G$-structure is a simultaneous reduction of
galilean and carrollian $G$-structures, we may reuse the results
in Sections~\ref{sec:gal-geom-char} and \ref{sec:car-geom-char} in
order to characterise the sixteen classes of aristotelian
$G$-structures geometrically.  This seems to give only twelve
aristotelian classes: four carrollian structures for each of the three
galilean structures.  There is, however, a new ingredient in the
aristotelian case: namely, the Lie derivative along the vector field
$\xi$ of the one-form $\tau$.

\begin{proposition}\label{prop:L-xi-tau}
  With the above notation, $\eL_\xi \tau = \tau \circ S$, where $S(X) =
  T^\nabla(\xi,X)$.
\end{proposition}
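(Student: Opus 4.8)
The plan is to mimic the proofs of Propositions~\ref{prop:gal-d-tau} and \ref{prop:L-xi-h}: exploit the fact that $\tau$ is parallel with respect to any adapted connection $\nabla$, and then massage the parallel condition by inserting the carrollian vector field $\xi$ into the appropriate slot and converting the term $\xi\,\tau(X)$ into a Lie derivative. Concretely, from $\nabla\tau = 0$ I would write, for all $X \in \eX(M)$,
\begin{equation*}
  \xi\,\tau(X) = \tau(\nabla_\xi X),
\end{equation*}
and then use the Cartan-type identity $(\eL_\xi\tau)(X) = \xi\,\tau(X) - \tau([\xi,X])$ to obtain
\begin{equation*}
  (\eL_\xi\tau)(X) = \tau(\nabla_\xi X) - \tau([\xi,X]) = \tau(\nabla_\xi X - [\xi,X]).
\end{equation*}

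The only remaining ingredient is to recognise the right-hand side. Since $\nabla$ is adapted, the carrollian vector field $\xi$ is also parallel, $\nabla\xi = 0$, exactly as in the proof of Proposition~\ref{prop:L-xi-h}; hence equation~\eqref{eq:car-xi-par} applies and gives $T^\nabla(\xi,X) = \nabla_\xi X - [\xi,X]$ for all $X \in \eX(M)$. Substituting this in, I get
\begin{equation*}
  (\eL_\xi\tau)(X) = \tau(T^\nabla(\xi,X)) = \tau(S(X)) = (\tau \circ S)(X),
\end{equation*}
which is the claimed identity.

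The argument is short and essentially formal, so there is no genuine obstacle; the one point that deserves care is that both ``half-parallelism'' facts are used in tandem — $\nabla\tau = 0$ to convert the directional derivative into $\tau(\nabla_\xi X)$, and $\nabla\xi = 0$ to identify $\nabla_\xi X - [\xi,X]$ with $T^\nabla(\xi,X)$ rather than with something involving $\nabla_X\xi$. This is legitimate precisely because an aristotelian $G$-structure is a simultaneous reduction of a galilean and a carrollian structure, so $\xi$, $\tau$, $\gamma$ and $h$ are all characteristic tensors and hence all parallel for every adapted $\nabla$. I would state this explicitly at the start of the proof and then carry out the two-line computation above.
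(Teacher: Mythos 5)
Your proof is correct. It differs mildly in route from the paper's: the paper applies the Cartan formula $\eL_\xi\tau = d\,\imath_\xi\tau + \imath_\xi d\tau$, kills the first term by observing that $\tau(\xi)$ is constant (a consequence of $\nabla\tau=0$ and $\nabla\xi=0$), and then simply cites Proposition~\ref{prop:gal-d-tau} in the form $d\tau = \tau\circ T^\nabla$ to conclude $\eL_\xi\tau = \imath_\xi(\tau\circ T^\nabla) = \tau\circ S$. You instead work pointwise with the standard formula $(\eL_\xi\tau)(X) = \xi\,\tau(X) - \tau([\xi,X])$, use $\nabla\tau=0$ to rewrite $\xi\,\tau(X) = \tau(\nabla_\xi X)$, and use $\nabla\xi=0$ (equation~\eqref{eq:car-xi-par}) to identify $\nabla_\xi X - [\xi,X]$ with $T^\nabla(\xi,X)$. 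The two arguments rest on exactly the same inputs — the simultaneous parallelism of $\tau$ and $\xi$ for any adapted connection, which you correctly justify by the aristotelian structure being a joint galilean/carrollian reduction — but yours re-derives inline the $\xi$-contraction of the identity $d\tau = \tau\circ T^\nabla$ rather than quoting it, and never needs the constancy of $\tau(\xi)$; the paper's version is marginally shorter by reusing Proposition~\ref{prop:gal-d-tau} as a black box. Either is a complete proof.
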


\begin{proof}
  First of all notice that if $\nabla$ is an adapted connection then
  both $\xi$ and $\tau$ are parallel and hence the function
  $\tau(\xi)$ is constant.  (Being nonzero, we can assume that it is
  equal to $1$ without loss of generality, simply by rescaling either
  $\tau$ or $\xi$.)  Using this and the Cartan formula, we have that
  \begin{equation}
    \eL_\xi \tau = \imath_\xi d\tau.
  \end{equation}
  But from Proposition~\ref{prop:gal-d-tau}, $d\tau = \tau \circ
  T^\nabla$, so that
  \begin{equation}
    \eL_\xi \tau = \imath_\xi (\tau \circ T^\nabla) = \tau \circ (\imath_\xi T^\nabla).
  \end{equation}
\end{proof}

It follows that if $d\tau = 0$ then $\eL_\xi\tau = 0$, whereas if
$d\tau \neq 0$ but $\tau \wedge d\tau = 0$, then
$\eL_\xi \tau \neq 0$. If $\tau \wedge d \tau \neq 0$, then it may or
may not happen that $\eL_\xi \tau = 0$.

It is now simply a matter of inspecting the sixteen classes and
determine whether $d\tau = 0$ or $\tau \wedge d\tau = 0$ or $d\tau$ is
unconstrained, and then whether $\xi$ leaves invariant $\tau$, $h$ and
the volume form $\mu$ or whether it rescales $h$.  The results are
summarised as follows.

\begin{theorem}\label{thm:ari}
  Let\footnote{See Appendices~\ref{sec:two-dimens-galil} for $n=2$ and
    \ref{sec:five-dimens-arist} for $n=5$.} $n>2$ and $n\neq 5$.  An
  aristotelian $G$-structure on an $n$-dimensional manifold $M$ can be
  of sixteen different classes depending on its intrinsic torsion.
  These classes are summarised in the table below.  Each class is
  labelled by the submodule of $\coker\d$ where the intrinsic torsion
  lands and is characterised geometrically as indicated.
  \begin{equation*}
    \rowcolors{2}{blue!10}{white}
    \begin{tabular}{>{$}l<{$}|*{3}{>{$}c<{$}}}\toprule
      \multicolumn{1}{c|}{Submodule of $\coker\d$} & \multicolumn{3}{c}{Geometric characterisation}\\\midrule
      \eA_0:=0 & d\tau = 0 & & \eL_\xi h = 0 \\
      \eA_1 & & \eL_\xi\tau = 0 & \eL_\xi h =0 \\
      \eA_2 & \tau \wedge d\tau = 0 & & \eL_\xi h = 0 \\
      \eA_3 & d\tau = 0 & \eL_\xi \tau = 0 & \eL_\xi \mu = 0\\
      \eA_4 & d\tau = 0 & \eL_\xi \tau = 0 & \eL_\xi h = f h\\
      \eA_5:=\eA_1 \oplus \eA_2 & & & \eL_\xi h = 0\\
      \eA_6:=\eA_1 \oplus \eA_3 & & \eL_\xi \tau = 0 & \eL_\xi \mu = 0 \\
      \eA_7:=\eA_1 \oplus \eA_4 & & \eL_\xi \tau = 0 & \eL_\xi h = fh \\
      \eA_8:=\eA_2 \oplus \eA_3 & \tau \wedge d\tau = 0 & & \eL_\xi \mu = 0\\
      \eA_9:=\eA_2 \oplus \eA_4 & \tau \wedge d\tau = 0 & & \eL_\xi h = fh\\
      \eA_{10}:=\eA_3 \oplus \eA_4 & d\tau = 0 & & \\
      \eA_{11}:=\eA_1 \oplus \eA_2 \oplus \eA_3 & & & \eL_\xi \mu = 0\\
      \eA_{12}:=\eA_1 \oplus \eA_2 \oplus \eA_4 & & & \eL_\xi h = fh\\
      \eA_{13}:=\eA_1 \oplus \eA_3 \oplus \eA_4 & & \eL_\xi \tau = 0 & \\
      \eA_{14}:=\eA_2 \oplus \eA_3 \oplus \eA_4 & \tau \wedge d\tau = 0 & & \\
      \eA_{15}:=\coker\d & & & \\
      \bottomrule
    \end{tabular}
  \end{equation*}
\end{theorem}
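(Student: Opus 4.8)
The plan is to assemble the statement from two pieces: the $G$-module decomposition of $\coker\d$ already obtained in Section~\ref{sec:ari-intr-tors}, and a dictionary translating ``the intrinsic torsion lies in a given submodule'' into the tensorial conditions of the table. For the counting I would start from $\coker\d\cong\eA_1\oplus\eA_2\oplus\eA_3\oplus\eA_4$, with $\eA_1\cong\wedge^2 W$, $\eA_2\cong W$, $\eA_3\cong\odot^2_0 W$ and $\eA_4\cong\RR$, where $W$ is the vector representation of $G\cong O(n-1)$. The key representation-theoretic input is that for $n>2$ and $n\neq 5$ these four $G$-modules are irreducible and pairwise non-isomorphic; this uses that the structure group is the \emph{full} $O(n-1)$, since for $n=3$ the $\so(2)$-modules $\wedge^2 W$ and $\RR$ would otherwise coincide, and for $n=4$ the $\so(3)$-modules $\wedge^2 W$ and $W$ would, the discrepancy in each case being the determinant character; and $n=5$ is excluded precisely because $\wedge^2\RR^4$ is not $\so(4)$-irreducible. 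Granting this, a semisimple module that is a direct sum of pairwise non-isomorphic irreducibles has as its only submodules the partial direct sums, so the $G$-submodules of $\coker\d$ are exactly the $2^4=16$ spaces $\eA_S:=\bigoplus_{i\in S}\eA_i$ for $S\subseteq\{1,2,3,4\}$; since $[T^\nabla]\in\Gamma(P\times_G\coker\d)$ is a $G$-structure invariant, the smallest $\eA_S$ of which it is a section labels the class, giving sixteen possibilities.

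For the geometric characterisation I would set up the dictionary as follows. Because an aristotelian structure refines both a galilean and a carrollian structure, $\im\d\subseteq\im\d_{\mathrm{gal}}$ and $\im\d\subseteq\im\d_{\mathrm{car}}$ (all three being restrictions of the universal Spencer map), so there are $G$-equivariant surjections $\coker\d\twoheadrightarrow\coker\d_{\mathrm{gal}}\cong\wedge^2 V^*$ and $\coker\d\twoheadrightarrow\coker\d_{\mathrm{car}}\cong\odot^2\Ann H$, induced by the projections of Lemmas~\ref{lem:gal-coker-d-iso} and \ref{lem:car-coker-d} (made explicit in Section~\ref{sec:car-geom-char}). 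Evaluating these on the explicit generators of the $\eA_i$ recorded in Section~\ref{sec:ari-intr-tors}, one checks that the first surjection has kernel $\eA_3\oplus\eA_4$ and restricts to an isomorphism $\eA_1\oplus\eA_2\xrightarrow{\ \sim\ }\wedge^2 W\oplus(\eta\wedge W)=\wedge^2 V^*$, while the second has kernel $\eA_1\oplus\eA_2$ and restricts to an isomorphism $\eA_3\oplus\eA_4\xrightarrow{\ \sim\ }\odot^2_0\Ann H\oplus\RR\delta^\perp=\odot^2\Ann H$.

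Feeding in Propositions~\ref{prop:gal-d-tau}, \ref{prop:L-xi-h}, \ref{prop:L-xi-mu} and \ref{prop:L-xi-tau} then pins down the dictionary. Proposition~\ref{prop:gal-d-tau} identifies $d\tau$ with the image of $[T^\nabla]$ under the first surjection, so $d\tau$ detects $\eA_1\oplus\eA_2$; moreover $\tau\wedge d\tau$ annihilates the $\eta\wedge W$ summand and so detects $\eA_1$ alone, and by Proposition~\ref{prop:L-xi-tau} $\eL_\xi\tau=\imath_\xi d\tau$ annihilates $\wedge^2 W$ and so detects $\eA_2$ alone. Proposition~\ref{prop:L-xi-h} identifies $\eL_\xi h$ with the image of $[T^\nabla]$ under the second surjection, so $\eL_\xi h$ detects $\eA_3\oplus\eA_4$; its $h$-trace is a nonzero multiple of the quantity $\tr S$ of Proposition~\ref{prop:L-xi-mu}, so $\eL_\xi\mu=0$ detects $\eA_4$ alone, and the condition that $\eL_\xi h$ be pure trace, $\eL_\xi h=fh$, detects the vanishing of the traceless part $\eA_3$. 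In summary: the components of $[T^\nabla]$ in $\eA_1,\eA_2,\eA_3,\eA_4$ vanish iff, respectively, $\tau\wedge d\tau=0$, $\eL_\xi\tau=0$, $\eL_\xi h$ is pure trace, $\eL_\xi\mu=0$; in particular $d\tau=0$ iff the first two hold and $\eL_\xi h=0$ iff the last two hold. It then remains to translate ``$[T^\nabla]\in\eA_S$'' — the vanishing of the components indexed by the complement of $S$ — through this dictionary for each of the sixteen subsets $S$, selecting in each case the most economical equivalent form; this reproduces the sixteen rows of the table and is a finite verification.

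I expect the main obstacle to be exactly the two kernel/image computations in the dictionary: showing that $d\tau$, refined by $\tau\wedge d\tau$ and $\imath_\xi d\tau$, detects precisely $\eA_1\oplus\eA_2$, and that $\eL_\xi h$, refined by its trace, detects precisely $\eA_3\oplus\eA_4$. These are concrete but require careful bookkeeping of the explicit generators of the $\eA_i$ under the galilean and carrollian projections. Everything downstream — the Schur-lemma count, the proportionality of $\eL_\xi\mu$ and the trace of $\eL_\xi h$, and the tabulation of the sixteen cases — is routine, as is the upstream representation theory, provided one keeps in mind that in the low dimensions $n=3,4$ it is the full $O(n-1)$, and not $\SO(n-1)$, that keeps the four summands pairwise non-isomorphic.
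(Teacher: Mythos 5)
Your proposal is correct and takes essentially the same route as the paper: the count of sixteen comes from the decomposition $\coker\d \cong \eA_1 \oplus \eA_2 \oplus \eA_3 \oplus \eA_4$ into pairwise non-isomorphic irreducibles obtained in Section~\ref{sec:ari-intr-tors}, and your geometric dictionary is exactly the paper's reuse of Propositions~\ref{prop:gal-d-tau}, \ref{prop:L-xi-h} and \ref{prop:L-xi-mu} together with Proposition~\ref{prop:L-xi-tau}, followed by the same case-by-case tabulation. Your explicit checks — that the four summands stay non-isomorphic under the full $O(n-1)$ for $n=3,4$, and the kernel computations for the surjections onto the galilean and carrollian cokernels — simply spell out steps the paper leaves implicit, and they are correct.
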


\section{Bargmannian $G$-structures}
\label{sec:bar-g-struct}

Bargmannian structures were introduced in \cite{PhysRevD.31.1841}, where
the relation between bargmannian and galilean structures was
initially explored.  In particular, it was shown that Newton--Cartan
gravity could be obtained as a null-reduction of a pp-wave: a
lorentzian manifold with a nonzero parallel null vector field.  The
relation between bargmannian, galilean and carrollian structures was
further explored in \cite{Duval:2014uoa} and in \cite{Morand:2018tke}.
Although both papers concentrate on pp-waves, the latter paper
announces some work where the bargmannian structure is allowed to be more
general.  Indeed, below we will see that pp-waves are precisely the
bargmannian manifolds with vanishing intrinsic torsion, which are one of
(generically) twenty-seven different classes of bargmannian structures.

\subsection{The group $G$ of a bargmannian structure}
\label{sec:bar-group}

In this section, we will assume that the dimension of the manifold $M$
is $n+1$.  Therefore in this section $V = \RR^{n+1}$.

An ($n+1$)-dimensional \textbf{bargmannian structure} on $M$
is a pair $(g,\xi)$ consisting of a lorentzian metric $g$ and a
nowhere-vanishing null vector field $\xi$: $g(\xi,\xi) = 0$.  Since
$\xi$ is nowhere-vanishing, around every point in $M$ we may construct
local Witt frames $(e_+=\xi,e_-,e_a)$, with $a = 1,\dots,n-1$, where
$g(e_\pm, e_a) = 0$, $g(e_+,e_-)=1$ and $g(e_a,e_b)=\delta_{ab}$.
On overlaps, such frames are related by local $G$-transformations,
where $G$ is the subgroup of the Lorentz group of $V$ which preserves
$e_+$.  Explicitly,
\begin{equation}
  \label{eq:G-bar}
  G = \left\{
    \begin{pmatrix}1 & -\tfrac12 \bv^T \bv & \bv^T \\ 0 & 1 & \bzero^T \\
    \bzero & \bv & A \end{pmatrix} ~ \middle | ~  \bv
  \in \RR^{n-1},~A \in O(n-1)\right\} < \GL(n+1,\RR),
\end{equation}
with Lie algebra
\begin{equation}
  \label{eq:g-bar}
  \g = \left\{
    \begin{pmatrix}0 & 0 & \bv^T \\ 0 & 0 & \bzero^T \\
    \bzero & \bv & A \end{pmatrix} ~ \middle | ~  \bv
  \in \RR^{n-1},~A \in \so(n-1)\right\} < \gl(n+1,\RR).
\end{equation}

Let us choose a Witt basis $(Z,H,P_a)$ for $V$ with lorentzian inner
product $\gamma \in \odot^2V^*$ given by $\gamma(Z,H)=1$ and
$\gamma(P_a,P_b)=\delta_{ab}$ and all other inner products not related
to these by symmetry vanishing.  The canonical dual basis for $V^*$
will be denoted $(\zeta,\eta, \pi^a)$.  We may choose basis
$J_{ab},B_a$ for $\g$ with brackets given by
equation~\eqref{eq:gal-g-brackets}.  Indeed, the Lie algebras $\g$ in
the galilean (and carrollian) and bargmannian cases are abstractly
isomorphic, but whereas the galilean algebra $\g < \gl(n,\RR)$, the
bargmannian algebra $\g < \gl(n+1,\RR)$.

The $G$-modules $V$ and $V^*$ are isomorphic: they are indecomposable,
but not irreducible.

\begin{lemma}\label{lem:bar-G-mod-V}
  There are $G$-invariant filtrations
  \begin{equation*}
    0 \subset \left<Z\right> \subset Z^\perp \subset V
    \qquad\text{and}\qquad
    0 \subset \left<\eta\right> \subset \Ann Z \subset V^*,
  \end{equation*}
  where $Z^\perp = \left<Z,P_a\right>$ and $\Ann Z =
\left<\eta,\pi^a\right>$.
\end{lemma}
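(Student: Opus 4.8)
The plan is to deduce the lemma from two structural facts about $G$. First, $G$ preserves the lorentzian form $\gamma$, being by construction a subgroup of the Lorentz group of $V$. Second, $G$ fixes the null vector $Z$ — that is, $g\cdot Z=Z$ for all $g\in G$ and $X\cdot Z=0$ for all $X\in\g$ — as one reads off from the first columns of the matrices in \eqref{eq:G-bar} and \eqref{eq:g-bar}.

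I would first build the filtration of $V$. Because $G$ fixes $Z$, the line $\langle Z\rangle$ is a $G$-submodule; and because $G$ preserves $\gamma$, it preserves the $\gamma$-orthogonal complement of any invariant subspace, so $Z^\perp:=\{v\in V\mid\gamma(v,Z)=0\}$ is a $G$-submodule as well. Reading off the components $\gamma(Z,Z)=0$, $\gamma(Z,P_a)=0$ and $\gamma(Z,H)=1$ identifies $Z^\perp=\langle Z,P_a\rangle$; in particular $Z$ is null, so $\langle Z\rangle\subseteq Z^\perp$, while $Z^\perp\subsetneq V$ since $H\notin Z^\perp$. This produces the chain $0\subset\langle Z\rangle\subset Z^\perp\subset V$ of $G$-submodules.

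For the dual I would transport this chain along the musical isomorphism $\flat\colon V\to V^*$, $v\mapsto\gamma(v,-)$, which is an isomorphism of $G$-modules because $\gamma$ is $G$-invariant and nondegenerate (this also establishes the isomorphism $V\cong V^*$ of $G$-modules asserted just before the lemma). Since $\gamma(Z,-)=\eta$ by the same component computation, $\flat$ carries $\langle Z\rangle$ to $\langle\eta\rangle$ and $Z^\perp$ to $\Ann Z$ (because $v\in Z^\perp$ iff $\flat(v)$ annihilates $Z$), so the image chain $0\subset\langle\eta\rangle\subset\Ann Z\subset V^*$ consists of $G$-submodules, with $\Ann Z=\langle\eta,\pi^a\rangle$ following from $\zeta(Z)=1$ and $\eta(Z)=\pi^a(Z)=0$. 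Equivalently, one may simply note that $\langle\eta\rangle=\Ann(Z^\perp)$ and $\Ann Z=\Ann\langle Z\rangle$ are the annihilators of the two invariant subspaces of $V$ just exhibited. There is no genuine obstacle here; the only point that needs a little care is the bookkeeping with the Witt basis — in particular confirming that $Z$ (rather than $H$) is the vector actually fixed by $G$, and that the rotation block $A\in O(n-1)$, and not merely the boost generators $B_a$, preserves each step of the filtration — and both become immediate once $G$ and $\gamma$ are written out explicitly in the chosen basis.
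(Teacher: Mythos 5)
Your proof is correct, but it takes a different route from the paper's. The paper simply writes out the explicit action of $\g$ (the generators $J_{ab}$ and $B_a$) on every element of the Witt basis of $V$ and of $V^*$ and reads off by inspection that $\left<Z\right>$, $\left<Z,P_a\right>$, $\left<\eta\right>$ and $\left<\eta,\pi^a\right>$ are preserved. You instead argue structurally: $G$ sits inside the Lorentz group of $(V,\gamma)$ and fixes $Z$, hence the line $\left<Z\right>$ and its $\gamma$-orthogonal complement $Z^\perp$ are automatically invariant, and the dual filtration comes either from the musical isomorphism $\flat(v)=\gamma(v,-)$ (which is $G$-equivariant since $\gamma$ is invariant, and sends $Z\mapsto\eta$, $Z^\perp\mapsto\Ann Z$) or from the observation that $\left<\eta\right>=\Ann(Z^\perp)$ and $\Ann Z=\Ann\left<Z\right>$ are annihilators of invariant subspaces. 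Your argument is basis-light, explains rather than verifies the invariance, gives the $G$-module isomorphism $V\cong V^*$ stated just before the lemma as a by-product, and would work verbatim for the stabiliser of any null vector in any orthogonal group; the paper's explicit tabulation of the $\g$-action, by contrast, is computational but serves double duty, since exactly those formulas are what is needed immediately afterwards to compute the Spencer differential and the boost action on the submodules of $Z^\perp\otimes V^*$. Your identifications in the Witt basis ($\gamma(Z,-)=\eta$, $Z^\perp=\left<Z,P_a\right>$, $\Ann Z=\left<\eta,\pi^a\right>$, and that it is $Z=e_+$ that $G$ fixes) are all correct.
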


\begin{proof}
  This follows from the explicit actions of $\g$ on $V$ and $V^*$:
  \begin{equation*}
    \begin{aligned}[m]
      J_{ab} \cdot P_c &= \delta_{bc} P_a - \delta_{ac} P_b\\
      J_{ab} \cdot H &= 0\\
      J_{ab} \cdot Z &= 0\\
      B_a \cdot P_b &= \delta_{ab} Z\\
      B_a \cdot H &= P_a\\
      B_a \cdot Z &=0
    \end{aligned}
    \qquad\text{and}\qquad
    \begin{aligned}[m]
      J_{ab} \cdot \pi^c &= (-\delta^c_a \delta_{bd} + \delta^c_b \delta_{ad}) \pi^d\\
      J_{ab} \cdot \eta &= 0\\
      J_{ab} \cdot \zeta &= 0\\
      B_a \cdot \pi^b &= -\delta_a^b \eta\\
      B_a \cdot \eta &= 0\\
      B_a \cdot \zeta &= - \delta_{ab} \pi^b.
    \end{aligned}
  \end{equation*}
\end{proof}

\subsection{The intrinsic torsion of a bargmannian structure}
\label{sec:bar-intr-tors}

Since $\g \subset \so(V)$, Lemma~\ref{lem:spencer-iso-soV} says that
the Spencer differential $\d : \g \otimes V^* \to V \otimes \wedge^2
V^*$ is injective.

\begin{proposition}\label{prop:bar-coker-d}
  As $G$-modules, $\coker \d \cong Z^\perp \otimes V^*$.
\end{proposition}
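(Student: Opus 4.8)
The plan is to avoid computing $\d$ explicitly and instead exploit the fact, recorded at the end of Section~\ref{sec:intrinsic-torsion}, that when $\g \subset \so(V)$ our Spencer differential $\d\colon \g\otimes V^* \to V\otimes\wedge^2 V^*$ is simply the restriction of the $\so(V)$-Spencer differential, which is a $G$-equivariant isomorphism by Lemma~\ref{lem:spencer-iso-soV}. That isomorphism therefore carries the $G$-submodule $\g\otimes V^* \subset \so(V)\otimes V^*$ onto $\im\d$, so it descends to a $G$-module isomorphism
\begin{equation*}
  \coker\d \;=\; \frac{V\otimes\wedge^2 V^*}{\im\d} \;\cong\; \frac{\so(V)\otimes V^*}{\g\otimes V^*} \;\cong\; \bigl(\so(V)/\g\bigr)\otimes V^*,
\end{equation*}
the last step being exactness of $-\otimes_\RR V^*$. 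Everything then reduces to identifying the $G$-module $\so(V)/\g$.

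For that I would use the linear map $\mu\colon \so(V)\to V$, $\mu(X) = XZ$. It is $G$-equivariant because $Z\in V$ is $G$-invariant — immediate from \eqref{eq:G-bar}, or from $J_{ab}\cdot Z = B_a\cdot Z = 0$ in Lemma~\ref{lem:bar-G-mod-V}. Its image lies in $Z^\perp$, since for any $\gamma$-skew $X$ one has $\gamma(XZ,Z) = -\gamma(Z,XZ) = -\gamma(XZ,Z)$, hence $\gamma(XZ,Z)=0$; and its kernel contains $\g$, again by Lemma~\ref{lem:bar-G-mod-V}. Since $\dim\so(V)-\dim\g = \binom{n+1}{2}-\tfrac12 n(n-1) = n = \dim Z^\perp$, the kernel of $\mu$ is exactly $\g$ and its image is all of $Z^\perp$. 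Thus $\mu$ induces a $G$-module isomorphism $\so(V)/\g \xrightarrow{\ \sim\ } Z^\perp$, and combining with the previous display yields $\coker\d \cong Z^\perp\otimes V^*$.

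I do not anticipate a real obstacle here; the argument is structural. The only points needing care are that $\d$ really is the restriction of the injective map of Lemma~\ref{lem:spencer-iso-soV} — which is what makes the quotient computation legitimate — and the bookkeeping identifying the abstract Witt vector $Z$ with the $G$-fixed null direction $\xi$, so that $Z^\perp$ is the span $\langle Z,P_a\rangle$ of Lemma~\ref{lem:bar-G-mod-V}. As a numerical check, $\dim\coker\d = (n+1)\bigl(\binom{n+1}{2}-\tfrac12 n(n-1)\bigr) = n(n+1) = \dim\bigl(Z^\perp\otimes V^*\bigr)$. One could instead mimic the galilean and carrollian computations — write out $\im\d$ from the $\g$-action, pick a vector-space complement, and recognise the quotient — but the route through $\so(V)/\g$ is considerably shorter and also explains conceptually why $Z^\perp$ appears.
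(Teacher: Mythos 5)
Your argument is correct, but it is genuinely different from the paper's. The paper proves Proposition~\ref{prop:bar-coker-d} by writing down an explicit $G$-equivariant map $\lambda : \Hom(\wedge^2V,V) \to \Hom(V,Z^\perp)$, defined by $2\gamma(\lambda_\varphi(v),w) = \gamma(\varphi(Z,v),w)+\gamma(\varphi(Z,w),v)+\gamma(\varphi(v,w),Z)$, and then verifying by hand that $\im\d = \ker\lambda$ (both inclusions, using the unique $\so(V)$-valued $\kappa$ from Lemma~\ref{lem:spencer-iso-soV}); you instead use Lemma~\ref{lem:spencer-iso-soV} structurally, transporting $\g\otimes V^*$ along the $\so(V)$-Spencer isomorphism to get $\coker\d \cong (\so(V)/\g)\otimes V^*$, and then identify $\so(V)/\g \cong Z^\perp$ by evaluation at $Z$. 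Your route is shorter and explains conceptually why $Z^\perp$ appears (it is the quotient of the lorentzian isotropy algebra by the bargmannian one), whereas the paper's explicit $\lambda$ is not wasted effort: it is precisely the map that reappears as $\Lambda$ in equation~\eqref{eq:Lambda-map} and drives Proposition~\ref{prop:intr-tors-is-nabla}, identifying the intrinsic torsion with $\nabla^g\xi$; with your proof one would still have to produce such an explicit projection later. One small tightening: from ``$\ker\mu \supseteq \g$, $\im\mu \subseteq Z^\perp$ and $\dim\so(V)-\dim\g = \dim Z^\perp$'' alone you cannot conclude both equalities (the kernel could a priori be larger and the image smaller). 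The clean fix is to note that $\g$ is \emph{by definition} the annihilator of $Z$ in $\so(V)$ (the Lie algebra of the stabiliser of $e_+$ in the Lorentz group, cf.\ equation~\eqref{eq:g-bar}), so $\ker\mu = \g$ exactly, and surjectivity onto $Z^\perp$ then follows from rank--nullity; alternatively exhibit the opposite null rotations and the $Z$--$H$ boost, whose images under $\mu$ span $Z^\perp$.
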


\begin{proof}
  Since $\ker\d =0$, it is enough to exhibit a short exact sequence of
  $G$-modules
  \begin{equation}
    \begin{tikzcd}
      0 \ar[r] & \Hom(V,\g) \ar[r,"\d"] & \Hom(\wedge^2 V,V)
      \ar[r,"\lambda"] & \Hom(V,Z^\perp)\ar[r] & 0,
    \end{tikzcd}
  \end{equation}
  for some $G$-equivariant map $\lambda: \Hom(\wedge^2 V,V) \to
  \Hom(V,Z^\perp)$, sending $\varphi \mapsto \lambda_\varphi$.  Let us
  define $\lambda_\varphi$ by
  \begin{equation}
   2 \gamma(\lambda_\varphi(v),w) :=\gamma(\varphi(Z,v),w) +
   \gamma(\varphi(Z,w),v) + \gamma(\varphi(v,w),Z).
  \end{equation}
  Putting $w=Z$ and using the skew-symmetry of $\varphi$, we see that
  $\gamma(\lambda_\varphi(v),Z) = 0$ for all $v$. Therefore
  $\lambda_\varphi : V \to Z^\perp$, as desired.  The map $\lambda$ is
  $G$-equivariant since it is constructed out of $\gamma$ and $Z$,
  which are $G$-invariant.  It remains to show that $\im \d = \ker
  \lambda$.

  Let us first show that $\im \d \subset \ker \lambda$.  Suppose that
  $\varphi = \d\kappa$, so that $\varphi(u,v) = \kappa_u v - \kappa_v
  u$.  Then
  \begin{equation*}
    2 \gamma(\lambda_{\d\kappa}(v),w) = \gamma(\kappa_Z v -\kappa_v Z,w) +
    \gamma(\kappa_Z w - \kappa_w Z,v) + \gamma(\kappa_v w - \kappa_w  v,Z).
  \end{equation*}
  Since $\kappa_v \in \so(V)$ for all $v \in V$,  we have that
  $\gamma(\kappa_v u, w) = - \gamma(\kappa_vw, u)$ for all $u,v,w \in
  V$, and using this we can show that the terms in the RHS cancel
  pairwise.

  Conversely, let $\varphi \in \ker\lambda$, so that
  \begin{equation}
    \label{eq:bar-aux}
    \gamma(\varphi(Z,v),w) + \gamma(\varphi(Z,w),v) +
    \gamma(\varphi(v,w),Z) = 0 \quad \forall~v,w \in V.
  \end{equation}
  Notice that the first two terms are symmetric in $(v,w)$, whereas
  the third term is skew-symmetric, so that both terms are zero
  separately.  From Lemma~\ref{lem:spencer-iso-soV}, we know that
  there exists a unique $\kappa: V \to \so(V)$ such that $\varphi
  = \d\kappa$.  We claim that if $\lambda_\varphi = 0$, then $\kappa$
  actually maps to $\g$.  Write $\varphi(v,w) = \kappa_v w - \kappa_w
  v$ with $\kappa : V \to \so(V)$ and insert this expression into the
  symmetric and skew-symmetric components of
  equation~\eqref{eq:bar-aux}.  The skew-symmetric component gives
  \begin{equation*}
    \gamma(\kappa_v w - \kappa_w v, Z) = 0
  \end{equation*}
  and the symmetric component gives
  \begin{equation*}
    \gamma(\kappa_Z v - \kappa_v Z,w) + \gamma(\kappa_Z w - \kappa_w
    Z, v) = 0 \iff \gamma(\kappa_v w + \kappa_w v, Z)  =0.
  \end{equation*}
  Adding the two equations we see that $\gamma(Z,\kappa_v w) = 0$  for
  all $v,w \in V$, which is equivalent to $\gamma(\kappa_v Z, w) = 0$
  for all $v,w \in V$.  Since $\gamma$ is nondegenerate, this says
  $\kappa_v Z = 0$ for all $v \in V$ and hence $\kappa : V \to \g$ as
  desired.
\end{proof}

We must now determine the $G$-submodules of $Z^\perp\otimes V^*$ and
hence the $G$-submodules of $\coker\d$.  Our strategy is to first
decompose $Z^\perp \otimes V^*$ into irreducible $\so(n-1)$-modules and
then to see how the bargmannian boosts $B_a$ act on them.  First of all,
$Z^\perp \otimes V^* = \left<Z,P_a\right> \otimes \left<\zeta, \eta,
  \pi^b\right>$, resulting in the following $\so(n-1)$-submodules:
\begin{equation}
  \left<Z \otimes \zeta\right>, \quad \left<Z \otimes \eta\right>,
  \quad \left<Z \otimes \pi^b\right>, \quad \left<P_a \otimes
    \zeta\right>, \quad \left<P_a \otimes \eta\right>
  \quad\text{and}\quad \left<P_a \otimes \pi^b\right>.
\end{equation}
All submodules but the last are irreducible.  The last submodule
breaks up into three\footnote{except for $n=5$, in which
  case $\left<P_a \otimes
    \pi^b\right>_{\wedge^2}$ breaks up further into selfdual and
  antiselfdual pieces. That case will be treated separately in
  Appendix~\ref{sec:six-dimens-bargm}.}
irreducible submodules:
\begin{equation}
  \left<P_a \otimes \pi^b\right>=  \left<P_a \otimes
    \pi^b\right>_{\wedge^2}\oplus \left<P_a \otimes
    \pi^b\right>_{\odot_0^2} \oplus \left<P_a \otimes
    \pi^b\right>_{\mathrm{tr}},
\end{equation}
where
\begin{equation}
  \begin{split}
    \left<P_a \otimes \pi^b\right>_{\wedge^2} &= \left<(\delta_{bc} P_a - \delta_{ac} P_b) \otimes \pi^c\right>\\
    \left<P_a \otimes \pi^b\right>_{\odot_0^2} &= \left<(\delta_{bc} P_a + \delta_{ac} P_b - \tfrac2{n-1}\delta_{ab} P_c) \otimes \pi^c\right>\\
    \left<P_a \otimes \pi^b\right>_{\mathrm{tr}} &= \left< P_a \otimes \pi^a\right>.
  \end{split}
\end{equation}
The action of the boosts are given by
\begin{equation}
  \begin{split}
    B_c \cdot (Z \otimes \pi^a) &= - \delta_c^a Z \otimes \eta\\
    B_c \cdot (Z \otimes \eta) &= 0\\
    B_c \cdot (Z \otimes \zeta) &= -\delta_{ac} Z \otimes \pi^a\\
    B_c \cdot (P_a \otimes \eta) &= \delta_{ac} Z \otimes \eta\\
    B_c \cdot (P_a \otimes \zeta) &= \delta_{ac} Z \otimes \zeta - \delta_{cd} P_a \otimes \pi^d\\
    B_c \cdot (P_a \otimes \pi^b) &= \delta_{ac} Z \otimes \pi^b - \delta^b_c P_a \otimes \eta.
  \end{split}
\end{equation}
Projecting the last term into its three irreducible components,
\begin{equation}
  \begin{split}
    B_c \cdot (P_a \otimes \pi^a) &= \delta_{ac} Z \otimes \pi^a - P_c \otimes \eta\\
    B_c \cdot ((\delta_{bd} P_a + \delta_{ad} P_b -
    \tfrac2{n-1}\delta_{ab} P_d) \otimes \pi^d) &=  (\delta_{ac}\delta_{bd} + \delta_{bc}\delta_{ad} -
    \tfrac2{n-1}\delta_{ab}\delta_{cd}) Z \otimes \pi^d \\
    & \quad {} - (\delta_{bc}P_a + \delta_{ac} P_b - \tfrac2{n-1}\delta_{ab} P_c) \otimes \eta\\
    B_c \cdot (\delta_{bd} P_a - \delta_{ad} P_b) \otimes \pi^d &=
    (\delta_{ac}\delta_{bd} - \delta_{bc}\delta_{ad}) Z \otimes \pi^d
    - (\delta_{bc} P_a - \delta_{ac} P_b) \otimes \eta.
  \end{split}
\end{equation}
We observe that $Z \otimes \pi^a + \delta^{ab} P_b \otimes \eta$ is
boost-invariant, and hence we introduce
\begin{equation}
  \Xi_{\pm}^a := Z \otimes \pi^a \pm \delta^{ab} P_b \otimes \eta,
\end{equation}
so that
\begin{equation}
  B_c \cdot \Xi_+^a = 0 \qquad\text{and}\qquad B_c \cdot \Xi_-^a = -2
  \delta^a_c Z \otimes \eta.  
\end{equation}
We rewrite
\begin{equation}
  Z \otimes \pi^a = \tfrac12 (\Xi_+^a + \Xi_-^a)
  \qquad\text{and}\qquad P_a \otimes \eta = \tfrac12 \delta_{ab}
  (\Xi_+^b - \Xi_-^b),
\end{equation}
in terms of which
\begin{equation}
  \begin{split}
    B_c \cdot (P_a \otimes \pi^a) &= \delta_{cd} \Xi^d_-\\
    B_c \cdot (\delta_{bd} P_a - \delta_{ad}P_b) \otimes \pi^d &= (\delta_{bd}\delta_{ac}- \delta_{ad}\delta_{bc}) \Xi_+^d\\
    B_c \cdot (\delta_{bd} P_a + \delta_{ad}P_b -
    \tfrac2{n-1}\delta_{ab}P_d) \otimes \pi^d &=  (\delta_{bd}\delta_{ac} +\delta_{ad}\delta_{bc} -\frac2{n-1}\delta_{ab}\delta_{cd}) \Xi^d_-.
  \end{split}
\end{equation}
We may summarise this graphically as in
Figure~\ref{fig:hasse-bar-subspaces}, where we have used
self-explanatory abbreviations for the irreducible
$\so(n-1)$-submodules and where the arrows represent the action of the
boosts $B_a$.

\begin{figure}[h!]
  \centering
  \begin{tikzpicture}[scale=1]
    \tikzset{myshortedge/.style={->,thick,red,shorten <=2pt,shorten >=24pt}}
    \tikzset{mylongedge/.style={->,thick,red,shorten <=2pt,shorten >=18pt}}
    \coordinate[label={$\left<P\otimes\zeta\right>$}] (8) at (0,5); 
    \coordinate[label={$\left<P\otimes\pi\right>_{\wedge^2}$}] (4) at (-3,3); 
    \coordinate[label={$\left<Z\otimes\zeta\right>$}] (5) at (-1,3); 
    \coordinate[label={$\left<P\otimes\pi\right>_{\odot_0^2}$}] (6) at (1,3); 
    \coordinate[label={$\left<P\otimes\pi\right>_{\mathrm{tr}}$}] (7) at (3,3); 
    \coordinate[label={$\left<\Xi_+\right>$}] (2) at (-2,1); 
    \coordinate[label={$\left<\Xi_-\right>$}] (3) at (1,1); 
    \coordinate[label={$\left<Z\otimes\eta\right>$}] (1) at (1,0);
    \coordinate[label={$0$}] (0) at (0,-1);
    \draw[myshortedge] (8) -- (4); 
    \draw[myshortedge] (8) -- (5); 
    \draw[myshortedge] (8) -- (6); 
    \draw[myshortedge] (8) -- (7); 
    \draw[mylongedge] (4) -- (2); 
    \draw[mylongedge] (5) -- (2); 
    \draw[mylongedge] (5) -- (3); 
    \draw[mylongedge] (6) -- (3); 
    \draw[mylongedge] (7) -- (3); 
    \draw[mylongedge] (3) -- (1); 
    \draw[mylongedge] (2) -- (0); 
    \draw[mylongedge] (1) -- (0);
  \end{tikzpicture}
  \caption{Action of boosts on $\so(n-1)$-submodules of $Z^\perp \otimes V^*$}
  \label{fig:hasse-bar-subspaces}
\end{figure}
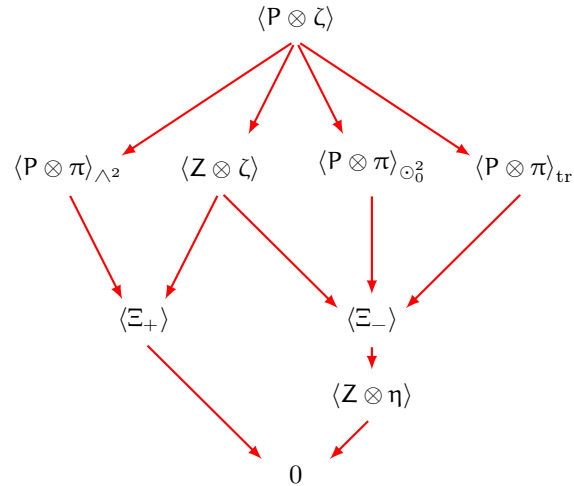

It is possible now to list the $\g$-submodules of $Z^\perp \otimes
V^*$ from Figure~\ref{fig:hasse-bar-subspaces}: if a certain
$\so(n-1)$-submodule appears, then all other $\so(n-1)$-submodules
which can be reached from it following the arrows in the diagram must
appear as well.  This process yields twenty-seven $\g$-submodules,
which we proceed to list below in abbreviated form.  The reason for
the primes is that these are the submodules of $Z^\perp \otimes V^*$
and we are eventually interested in the submodules of $\coker\d$.

\begin{itemize}
\item $\eB'_0 = 0$
\item $\eB'_1 = \left<Z\otimes\eta\right>$
\item $\eB'_2 = \left<\Xi_+\right>$
\item $\eB'_3 = \left<Z\otimes\eta, \Xi_+\right>$
\item $\eB'_4 = \left<Z\otimes\eta, \Xi_-\right>$
\item $\eB'_5 = \left<\Xi_+,(P\otimes\pi)_{\wedge^2}\right> $
\item $\eB'_6 = \left<Z\otimes\eta,\Xi_-, (P\otimes\pi)_{\odot_0^2}\right>$
\item $\eB'_7 = \left<Z\otimes\eta,\Xi_-, (P\otimes\pi)_{\mathrm{tr}}\right>$
\item $\eB'_8 = \left<Z\otimes\eta, \Xi_+,\Xi_-\right>$
\item $\eB'_9 = \left<Z\otimes\eta, \Xi_+,(P\otimes\pi)_{\wedge^2}\right>$
\item $\eB'_{10} = \left<Z\otimes\eta,\Xi_-, (P\otimes\pi)_{\odot_0^2},(P\otimes\pi)_{\mathrm{tr}}\right>$
\item $\eB'_{11} = \left<Z\otimes\eta, \Xi_+,\Xi_-, (P\otimes\pi)_{\odot_0^2}\right>$
\item $\eB'_{12} = \left<Z\otimes\eta, \Xi_+,\Xi_-, (P\otimes\pi)_{\mathrm{tr}}\right>$
\item $\eB'_{13} = \left<Z\otimes\eta, \Xi_+,\Xi_-, (P\otimes\pi)_{\wedge^2}\right>$
\item $\eB'_{14} = \left<Z\otimes\eta, \Xi_+,\Xi_-, Z\otimes\zeta\right>$
\item $\eB'_{15} = \left<Z\otimes\eta, \Xi_+,\Xi_-,  (P\otimes\pi)_{\wedge^2}, Z\otimes\zeta\right>$
\item $\eB'_{16} = \left<Z\otimes\eta, \Xi_+,\Xi_-, (P\otimes\pi)_{\odot_0^2}, Z\otimes\zeta\right>$
\item $\eB'_{17} = \left<Z\otimes\eta, \Xi_+,\Xi_-, (P\otimes\pi)_{\mathrm{tr}}, Z\otimes\zeta\right>$
\item $\eB'_{18} = \left<Z\otimes\eta, \Xi_+,\Xi_-, (P\otimes\pi)_{\odot_0^2}, (P\otimes\pi)_{\wedge^2}\right>$
\item $\eB'_{19} = \left<Z\otimes\eta, \Xi_+,\Xi_-, (P\otimes\pi)_{\mathrm{tr}}, (P\otimes\pi)_{\wedge^2}\right>$
\item $\eB'_{20} = \left<Z\otimes\eta, \Xi_+,\Xi_-, (P\otimes\pi)_{\mathrm{tr}}, (P\otimes\pi)_{\odot_0^2}\right>$
\item $\eB'_{21} = \left<Z\otimes\eta, \Xi_+,\Xi_-,  (P\otimes\pi)_{\wedge^2}, Z\otimes\zeta, (P\otimes\pi)_{\odot_0^2}\right>$
\item $\eB'_{22} = \left<Z\otimes\eta, \Xi_+,\Xi_-,  (P\otimes\pi)_{\wedge^2}, Z\otimes\zeta, (P\otimes\pi)_{\mathrm{tr}}\right>$
\item $\eB'_{23} = \left<Z\otimes\eta, \Xi_+,\Xi_-,  (P\otimes\pi)_{\odot_0^2}, Z\otimes\zeta, (P\otimes\pi)_{\mathrm{tr}}\right>$
\item $\eB'_{24} = \left<Z\otimes\eta, \Xi_+,\Xi_-,  (P\otimes\pi)_{\odot_0^2},  (P\otimes\pi)_{\wedge^2}, (P\otimes\pi)_{\mathrm{tr}}\right>$
\item $\eB'_{25} = \left<Z\otimes\eta, \Xi_+,\Xi_-,  (P\otimes\pi)_{\odot_0^2},  (P\otimes\pi)_{\wedge^2}, (P\otimes\pi)_{\mathrm{tr}},Z\otimes\zeta\right>$
\item $\eB'_{26} = Z^\perp \otimes V^*$
\end{itemize}

Next, we exhibit the dictionary between the $\so(n-1)$-submodules of
$Z^\perp \otimes V^*$ and those of $\coker\d$, which we list in
abbreviated form:
\begin{equation*}
  \begin{tabular}{>{$}c<{$}|>{$}c<{$}}\toprule
    Z^\perp \otimes V^* &  \coker\d\\\midrule
     \left<Z \otimes \eta\right> & \left<Z \otimes \eta \wedge \zeta\right> \\
     \left<\Xi_+\right> & \left<H \otimes \pi \wedge \eta\right> \\
     \left<\Xi_-\right> & \left<(Z\otimes \pi - P \otimes \eta)\wedge \zeta\right> \\
     \left<(P\otimes \pi)_{\wedge^2}\right> & \left<H \otimes \pi \wedge \pi\right> \\
     \left<Z \otimes \zeta\right> & \left<H \otimes \eta \wedge \zeta\right> \\
     \left<(P \otimes\pi)_{\odot_0^2}\right> & \left<(P\otimes \pi)_{\odot_0^2} \wedge \zeta\right> \\
     \left<(P \otimes\pi)_{\mathrm{tr}}\right> & \left<(P\otimes \pi)_{\mathrm{tr}} \wedge \zeta\right> \\
     \left<P \otimes \zeta\right> & \left<H \otimes \pi \wedge \zeta\right> \\\bottomrule
  \end{tabular}
\end{equation*}
One can use this dictionary to read off the $\g$-submodules of
$\coker\d$ from the ones of $Z^\perp \otimes V^*$ listed above and in
this way set up a correspondence between
$\eB'_i \subset Z^\perp \otimes V^*$ and $\eB_i \subset \coker \d$ for
$i=0,1,\dots,26$.  The set of twenty-seven $\g$-submodules of
$\coker\d$ is partially ordered by inclusion.
Figure~\ref{fig:bar-hasse-submodules} illustrates the Hasse diagram of
this poset.  The node labelled $i$ corresponds to the submodule $\eB_i
\subset \coker\d$ (or $\eB'_i \subset Z^\perp \otimes V^*$) and an
arrow indicates inclusion.  The meaning of the labels is explained in
Section~\ref{sec:bar-geom-char}.

\begin{figure}[h!]
  \centering
  \begin{tikzpicture}[scale=1.3]
    %
    %
    \coordinate[label=right:{\scriptsize $0$}] (b0) at (-1,7); 
    \coordinate[label=left:{\scriptsize $1$}] (b1) at (-1,6); 
    \coordinate[label=left:{\scriptsize $2$}] (b2) at (-2,6); 
    \coordinate[label=right:{\scriptsize $3$}] (b3) at (-1,5); 
    \coordinate[label=above right:{\scriptsize $4$}] (b4) at (0,5); 
    \coordinate[label=left:{\scriptsize $5$}] (b5) at (-2,5); 
    \coordinate[label=left:{\scriptsize $6$}] (b6) at (0,4); 
    \coordinate[label=above right:{\scriptsize $7$}] (b7) at (1,4); 
    \coordinate[label=above left:{\scriptsize $8$}] (b8) at (-1,4); 
    \coordinate[label=left:{\scriptsize $9$}] (b9) at (-2,4); 
    \coordinate[label=above right:{\scriptsize $10$}] (b10) at (2,3); 
    \coordinate[label=above right:{\scriptsize $11$}] (b11) at (0,3); 
    \coordinate[label=above right:{\scriptsize $12$}] (b12) at (1,3); 
    \coordinate[label=above left:{\scriptsize $13$}] (b13) at (-2,3); 
    \coordinate[label=above left:{\scriptsize $14$}] (b14) at (-1,3); 
    \coordinate[label=left:{\scriptsize $15$}] (b15) at (-3,2); 
    \coordinate[label=above left:{\scriptsize $16$}] (b16) at (-1,2); 
    \coordinate[label=right:{\scriptsize $17$}] (b17) at (1,2); 
    \coordinate[label=left:{\scriptsize $18$}] (b18) at (-2,2);
    \coordinate[label=below right:{\scriptsize $19$}] (b19) at (0,2);
    \coordinate[label=right:{\scriptsize $20$}] (b20) at (2,2);
    \coordinate[label=below left:{\scriptsize $21$}] (b21) at (-2,1);
    \coordinate[label=below left:{\scriptsize $22$}] (b22) at (-1,1); 
    \coordinate[label=below right:{\scriptsize $23$}] (b23) at (1,1); 
    \coordinate[label=below right:{\scriptsize $24$}] (b24) at (0,1); 
    \coordinate[label=below right:{\scriptsize $25$}] (b25) at (0,0); 
    \coordinate[label=right:{\scriptsize $26$}] (b26) at (0,-1);
    %
    %
    \tikzset{every path/.style={->,thick,gris,shorten >=3pt, shorten <=3pt}}
    \draw (b0) -- (b1);
    \draw (b0) -- (b2);
    \draw (b1) -- (b3);
    \draw (b1) -- (b4);
    \draw (b2) -- (b3);
    \draw (b2) -- (b5);
    \draw (b3) -- (b8);
    \draw (b3) -- (b9);
    \draw (b4) -- (b6);
    \draw (b4) -- (b7);
    \draw (b4) -- (b8);
    \draw (b5) -- (b9);
    \draw (b6) -- (b10); 
    \draw (b6) -- (b11); 
    \draw (b7) -- (b10); 
    \draw (b7) -- (b12); 
    \draw (b8) -- (b11); 
    \draw (b8) -- (b12); 
    \draw (b8) -- (b13); 
    \draw (b8) -- (b14); 
    \draw (b9) -- (b13); 
    \draw (b10) -- (b20);
    \draw (b11) -- (b16);
    \draw (b11) -- (b18);
    \draw (b11) -- (b20);
    \draw (b12) -- (b17);
    \draw (b12) -- (b19);
    \draw (b12) -- (b20);
    \draw (b13) -- (b15);
    \draw (b13) -- (b18);
    \draw (b13) -- (b19);
    \draw (b14) -- (b15);
    \draw (b14) -- (b16);
    \draw (b14) -- (b17);
    \draw (b15) -- (b21);
    \draw (b15) -- (b22);
    \draw (b16) -- (b21);
    \draw (b16) -- (b23);
    \draw (b17) -- (b22);
    \draw (b17) -- (b23);
    \draw (b18) -- (b21);
    \draw (b18) -- (b24);
    \draw (b19) -- (b22);
    \draw (b19) -- (b24);
    \draw (b20) -- (b23);
    \draw (b20) -- (b24);
    \draw (b21) -- (b25); 
    \draw (b22) -- (b25); 
    \draw (b23) -- (b25); 
    \draw (b24) -- (b25); 
    \draw (b25) -- (b26);
    %
    %
    \foreach \point in {b0,b1,b2,b3,b4,b5,b8,b9,b13,b14,b15}
    \filldraw [color=blue!70!black,fill=blue!70!white] (\point) circle (1.5pt);
    \foreach \point in {b7,b12,b17,b19,b22}
    \filldraw [color=red!70!black,fill=red!70!white] (\point) circle (1.5pt);
    \foreach \point in {b6,b11,b16,b18,b21}
    \filldraw [color=green!70!black,fill=green!70!white] (\point) circle (1.5pt);
    \foreach \point in {b10,b20,b23,b24,b25,b26}
    \filldraw [color=black,fill=white] (\point) circle (1.5pt);
    %
    %
    \draw [black,thick] (2.8,6.8) -- (2.8,4.8) -- (5.3,4.8) -- (5.3,6.8) -- cycle;
    \filldraw [color=blue!70!black,fill=blue!70!white] (3,6.5) circle (1.5pt) node [right,black] {totally geodesic}; 
    \filldraw [color=green!70!black,fill=green!70!white] (3,6) circle (1.5pt) node [right,black] {minimal};
    \filldraw [color=red!70!black,fill=red!70!white] (3,5.5) circle (1.5pt) node [right,black] {totally umbilical}; 
    \filldraw [color=black,fill=white] (3,5) circle (1.5pt) node  [right,black] {none of the above};
  \end{tikzpicture}
  \caption{Hasse diagram of bargmannian structures}
  \label{fig:bar-hasse-submodules}
\end{figure}
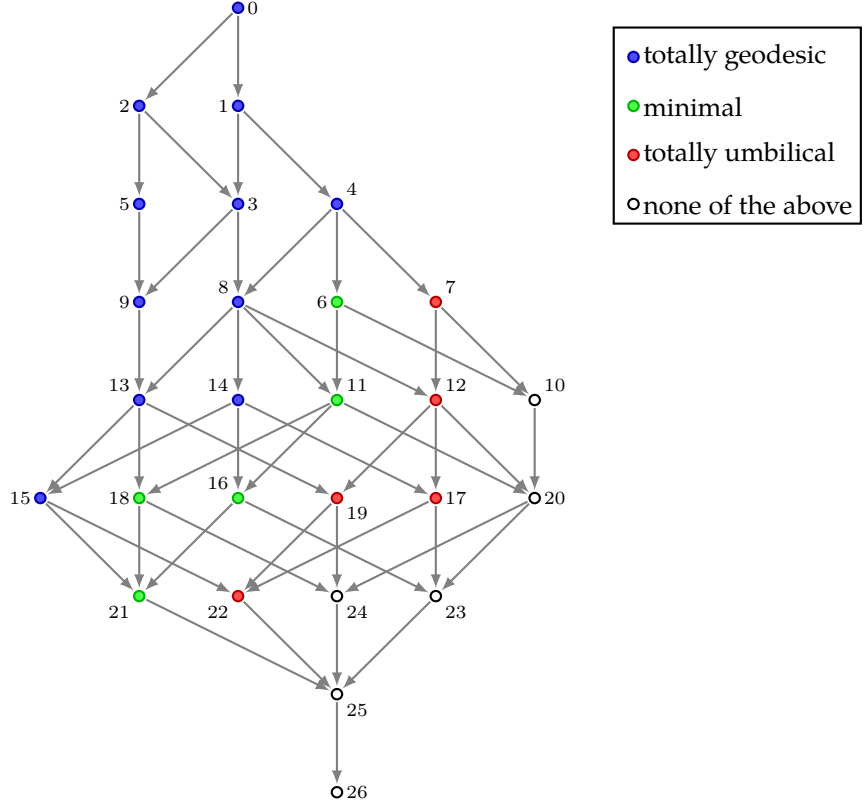

In summary, we see that there are twenty-seven
classes of bargmannian $G$-structures, which we will characterise
geometrically in the next section.

Before doing that, let us prove that the intrinsic torsion of the
bargmannian structure can be identified with the covariant derivative
$\nabla^g \xi$ of the null vector $\xi$ with respect to the
Levi-Civita connection of $g$. The Levi-Civita connection is not
adapted unless $\nabla^g \xi = 0$ so that $g$ is a Brinkmann metric
(a.k.a. a generalised pp-wave).  Nevertheless it is an intrinsic
object in $(M,g)$ and therefore $\nabla^g \xi \in \Omega^1(M,TM)$ is
intrinsic to the $G$-structure.  Since $g(\xi,\xi) =0$ and $\nabla^g$
is metric, we have that $g(\nabla^g_X\xi,\xi) = 0$ for all
$X \in \eX(M)$. Therefore, $\nabla^g \xi : TM \to \xi^\perp$ or,
equivalently, $\nabla^g \xi$ is a section of the associated vector
bundle $P \times_G (Z^\perp \otimes V^*)$, which as we saw before is
isomorphic to $P\times_G \coker \d$.  The next Proposition shows that
this is not a coincidence.

The map $\lambda : V \otimes \wedge^2 V^* \to Z^\perp \otimes V^*$ in
the proof of Proposition~\ref{prop:bar-coker-d} induces a bundle map
$TM \otimes \wedge^2 T^*M \to \xi^\perp \otimes T^*M$ and hence a
$C^\infty(M)$-linear map
\begin{equation}
  \Lambda : \Omega^2(M,TM) \to \Omega^1(M,\xi^\perp), \qquad T \mapsto \Lambda_T,
\end{equation}
defined, for all $T \in \Omega^2(M,TM)$ and $X,Y \in \eX(M)$, by
\begin{equation}\label{eq:Lambda-map}
  2 g(\Lambda_T(X), Y) :=g(T(\xi,X),Y) + g(T(\xi,Y),X) + g(T(X,Y),\xi).
\end{equation}

\begin{proposition}\label{prop:intr-tors-is-nabla}
  Let $\nabla$ be an adapted connection to a bargmannian $G$-structure
  with torsion $T^\nabla$.  Then $\Lambda_{T^\nabla} = \nabla^g \xi$.
\end{proposition}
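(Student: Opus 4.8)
The plan is to compare the adapted connection $\nabla$ with the Levi-Civita connection $\nabla^g$ directly on $M$, exploiting that both are metric connections for $g$. First I would set $\kappa := \nabla - \nabla^g \in \Omega^1(M,\End TM)$ and write $\kappa_X := \kappa(X)$. Since $g$ is a characteristic tensor field of the bargmannian $G$-structure it is parallel for any adapted connection, so $\nabla g = 0$; combined with $\nabla^g g = 0$ this forces each $\kappa_X$ to be $g$-skew-symmetric, i.e.\ $g(\kappa_X Y, W) = - g(\kappa_X W, Y)$ for all $Y, W \in \eX(M)$. Because $\nabla^g$ is torsion-free, the torsion of $\nabla$ is $T^\nabla(X,Y) = \kappa_X Y - \kappa_Y X$; this is just the pointwise manifestation on $M$ of the Spencer differential $\d$ from Proposition~\ref{prop:bar-coker-d}, now evaluated on a $\kappa$ valued in $\so(TM,g)$ rather than in $\Ad P$.

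Next I would substitute this expression for $T^\nabla$ into the defining formula~\eqref{eq:Lambda-map} for $\Lambda_{T^\nabla}$ and expand the six resulting terms. Using the $g$-skew-symmetry of $\kappa_\xi$, the two terms involving $\kappa_\xi X$ and $\kappa_\xi Y$ cancel; using the $g$-skew-symmetry of $\kappa_X$ and $\kappa_Y$ on the remaining terms, everything collapses to $2 g(\Lambda_{T^\nabla}(X), Y) = -2 g(\kappa_X \xi, Y)$ for all $X,Y$. Since $g$ is nondegenerate this yields $\Lambda_{T^\nabla}(X) = -\kappa_X \xi$. This is precisely the pairwise-cancellation bookkeeping already carried out to prove $\im\d \subseteq \ker\lambda$ in Proposition~\ref{prop:bar-coker-d}; the only difference is that here $\kappa_X Z \leftrightarrow \kappa_X\xi$ need not vanish, which is exactly why $\Lambda_{T^\nabla}$ is generically nonzero.

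Finally, I would use that $\xi$ is itself a characteristic tensor field of the bargmannian $G$-structure and hence parallel for $\nabla$: $\nabla_X \xi = 0$. Writing $\nabla = \nabla^g + \kappa$ this reads $0 = \nabla^g_X \xi + \kappa_X \xi$, so $\kappa_X \xi = -\nabla^g_X \xi$, whence $\Lambda_{T^\nabla}(X) = -\kappa_X \xi = \nabla^g_X \xi$, as claimed. The only point worth verifying in passing is that $\Lambda_{T^\nabla}$ actually takes values in $\xi^\perp$, but this is immediate from $g(\xi,\xi) = 0$ together with metricity of $\nabla^g$ — and indeed it was already built into the construction of $\lambda$. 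I do not expect a genuine obstacle here: the computation is short and the only delicate point is keeping the signs straight in the six-term expansion.
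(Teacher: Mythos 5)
Your proposal is correct and follows essentially the same route as the paper's proof: introduce the contorsion $\kappa = \nabla - \nabla^g$ (skew-symmetric since both connections are metric), write $T^\nabla = \d\kappa$, perform the six-term cancellation in the defining formula for $\Lambda_{T^\nabla}$ to get $\Lambda_{T^\nabla}(X) = -\kappa_X\xi$, and conclude via $\nabla\xi = 0$ that $\kappa_X\xi = -\nabla^g_X\xi$. The sign bookkeeping in your expansion checks out.
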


\begin{proof}
  Let us denote by $\kappa \in \Omega^1(M,\so(TM))$ the
  contorsion $\kappa: = \nabla - \nabla^g$.  It takes values in $\so(TM)$, the
  bundle of skew-symmetric endomorphisms of $TM$, because both
  connections are metric-compatible.  Since $\nabla^g$ has zero
  torsion, $T^\nabla = \d\kappa$ and since $\nabla$ is adapted,
  \begin{equation*}
    0 = \nabla_X \xi = \nabla^g_X \xi + \kappa_X \xi \implies \kappa_X
    \xi = - \nabla^g_X \xi.
  \end{equation*}
  We calculate for $X,Y \in \eX(M)$,
  \begin{align*}
    2 g(\Lambda_{T^\nabla}(X),Y) &= g(T^\nabla(\xi,X),Y) +  g(T^\nabla(\xi,Y),X) +  g(T^\nabla(X,Y),\xi)\\
                                 &= g(\kappa_\xi X - \kappa_X \xi, Y) + g(\kappa_\xi Y - \kappa_Y \xi, X) + g(\kappa_X Y - \kappa_Y X, \xi)\\
                                 &= - g(\kappa_X \xi, Y) - g(\kappa_Y \xi, X) - g(\kappa_X \xi, Y) + g(\kappa_Y \xi, X) && \tag{using $\kappa_X \in \so(TM)$}\\
                                 &= -2 g(\kappa_X\xi, Y).
  \end{align*}
  Hence $\Lambda_{T^\nabla}(X) =- \kappa_X\xi = \nabla^g_X \xi$, as claimed.
\end{proof}

\subsection{Geometric characterisation of bargmannian structures}
\label{sec:bar-geom-char}

Let $(M^{n+1},g,\xi)$ be a bargmannian manifold; that is, $(M,g)$ is an
$(n+1)$-dimensional lorentzian manifold and $\xi \in \eX(M)$ is a
nowhere-vanishing null vector. Let $\xi^\flat \in \Omega^1(M)$ denote
the one-form dual to $\xi$: $\xi^\flat(X) = g(\xi,X)$ for all
$X \in \eX(M)$.  Let $\nu$ denote the (possibly only locally defined)
volume form.  If we assume that $M$ is orientable (e.g., if it is
simply-connected) then $\nu \in \Omega^{n+1}(M)$ defines an
orientation; that is, a nowhere-vanishing top form.

Let $\xi^\perp = \ker\xi^\flat$ denote the characteristic distribution
consisting of tangent vectors perpendicular to $\xi$.  Since $\xi$ is
null, $\xi$ belongs to the distribution and hence the restriction of
the metric to $\xi^\perp$ is degenerate.  If the distribution is involutive
($[\xi^\perp,\xi^\perp] \subset \xi^\perp$), which is equivalent to
$\xi^\flat \wedge d\xi^\flat = 0$, then $M$ is foliated by null
hypersurfaces $N$ whose tangent space $T_p N$ at $p$ coincides with
$\xi^\perp_p \subset T_p M$.  There is a well established theory of null
hypersurfaces (see, e.g., \cite{MR886772,MR1777311}) from which we
will borrow in this section.  However since not all bargmannian
structures are such that $\xi^\perp$ is involutive, we will have to extend
this theory slightly to non-involutive distributions.

It is not hard to see that in all bargmannian structures but the generic
one ($\eB_{26}$), the distribution is $\xi$-invariant; that is, for
all $X \in \Gamma(\xi^\perp)$, $[\xi, X] \in \Gamma(\xi^\perp)$, which we abbreviate
by $[\xi,\xi^\perp] \subset \xi^\perp$, with some abuse of notation.  We may also
write $X \perp \xi$ for $X \in \Gamma(\xi^\perp)$.

We recall that $\xi$ is said to be \textbf{geodetic} if $\nabla^g_\xi
\xi = f \xi$ for some $f \in C^\infty(M)$.  The name is apt, because
integral curves of $\xi$ can be parametrised in such a way that they
satisfy the geodesic equation.

\begin{lemma}\label{lem:geodetic}
  Let $\xi^\perp = \ker \xi^\flat$.  Then $[\xi,\xi^\perp] \subset \xi^\perp$ if and only if $\xi$ is geodetic.
\end{lemma}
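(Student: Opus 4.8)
The plan is to reduce the Lie-bracket condition to a pointwise condition on $\nabla^g_\xi\xi$ by a direct computation using that $\nabla^g$ is torsion-free and metric-compatible. First I would take an arbitrary $X \in \Gamma(\xi^\perp)$ and rewrite the bracket as $[\xi,X] = \nabla^g_\xi X - \nabla^g_X\xi$, and then pair with $\xi$ to obtain
\begin{equation*}
  g(\xi,[\xi,X]) = g(\xi,\nabla^g_\xi X) - g(\xi,\nabla^g_X\xi).
\end{equation*}

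Next I would simplify the two terms on the right-hand side. Since $g(\xi,\xi) = 0$ identically, metric-compatibility gives $g(\xi,\nabla^g_X\xi) = \tfrac12 X\big(g(\xi,\xi)\big) = 0$. For the other term, metric-compatibility together with $g(\xi,X) = 0$ (because $X \in \Gamma(\xi^\perp)$) gives $g(\xi,\nabla^g_\xi X) = \xi\big(g(\xi,X)\big) - g(\nabla^g_\xi\xi,X) = - g(\nabla^g_\xi\xi, X)$. Combining, one gets the clean identity $g(\xi,[\xi,X]) = -g(\nabla^g_\xi\xi, X)$, valid for every $X \in \Gamma(\xi^\perp)$.

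From here the equivalence is immediate. The vector $[\xi,X]$ lies in $\xi^\perp = \ker\xi^\flat$ for all $X \perp \xi$ precisely when $\nabla^g_\xi\xi$ is $g$-orthogonal to the whole distribution $\xi^\perp$. Since $g$ is nondegenerate and $\xi^\perp$ is by definition the orthogonal complement of the line $\left<\xi\right>$, the orthogonal complement of $\xi^\perp$ is again $\left<\xi\right>$; hence $\nabla^g_\xi\xi \perp \xi^\perp$ if and only if $\nabla^g_\xi\xi = f\xi$ for some $f \in C^\infty(M)$ (smoothness of $f$ following from $\xi$ being nowhere-vanishing), which is exactly the statement that $\xi$ is geodetic. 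The converse direction is also visible directly from the identity: if $\nabla^g_\xi\xi = f\xi$ then $g(\nabla^g_\xi\xi,X) = f\,g(\xi,X) = 0$ for $X \perp \xi$.

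There is no serious obstacle here; the only point needing a word of justification is the identification $(\xi^\perp)^{\perp} = \left<\xi\right>$, which uses nothing more than nondegeneracy of the lorentzian metric together with $\xi$ being null — so that $\xi$ itself lies in $\xi^\perp$, making the inclusion $\left<\xi\right> \subseteq (\xi^\perp)^{\perp}$ obvious before a rank count (codimension one) closes it.
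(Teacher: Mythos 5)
Your argument is correct and is essentially the paper's proof, just presented as a single identity $g(\xi,[\xi,X]) = -g(\nabla^g_\xi\xi,X)$ rather than as a chain of equivalences; both use torsion-freeness to trade $[\xi,X]$ for covariant derivatives and metric-compatibility together with $g(\xi,\xi)=0$ to kill the $g(\xi,\nabla^g_X\xi)$ term. You make explicit the small point $(\xi^\perp)^\perp=\left<\xi\right>$, which the paper leaves tacit in its opening line, but there is no substantive difference in approach.
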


\begin{proof}
  We have that $\xi$ is geodetic if and only if $g(\nabla^g_\xi \xi,X)
  = 0$ for all $X \perp \xi$ and then
  \begin{align*}
    g(\nabla^g_\xi \xi,X) = 0 & \iff g(\xi,\nabla^g_\xi X) = 0\\
                              & \iff g(\xi,[\xi,X] + \nabla^g_X\xi)= 0 &&\tag{since $\nabla^g$ has zero torsion}\\
                              & \iff g(\xi,[\xi,X]) = 0&&\tag{since $g(\xi,\nabla^g_X\xi) = 0$ for any $X$}\\
                              & \iff [\xi,X] \perp \xi.
  \end{align*}
\end{proof}

From now on we will assume that $[\xi,\xi^\perp] \subset \xi^\perp$; that is, we are
dealing with any one but the generic bargmannian structure.

Let $L \subset \xi^\perp$ denote the line sub-bundle spanned by $\xi$ and let
$E := \xi^\perp/L$ denote the quotient vector bundle: it is a corank-2 vector
bundle over $M$.  If $X \perp \xi$, we let $\Xbar \in \Gamma(E)$
denote its equivalence class modulo $L$; that is, $X,Y \perp \xi$
satisfy $\Xbar = \Ybar$ if and only if $X - Y = f \xi$ for some
$f \in C^\infty(M)$.  On $E$ we have a positive-definite metric $h$
defined by
\begin{equation}
  \label{eq:metric-on-quotient}
  h(\Xbar, \Ybar) := g(X,Y) \qquad\forall~X,Y \perp \xi.
\end{equation}
This is well-defined on equivalence classes precisely because $X,Y
\perp \xi$ and it is positive-definite because $(M,g)$ is lorentzian.
This makes $(E,h)$ into a corank-2 riemannian vector bundle over $M$.

Following \cite{MR1777311} we define the \textbf{null Weingarten map}
$W : \Gamma(E) \to \Gamma(E)$ by
\begin{equation}
  \label{eq:weingarten}
  W(\Xbar) := \overline{\nabla^g_X \xi}.
\end{equation}
Although in \cite{MR1777311} this map is shown to be well-defined for
the case of involutive $\xi^\perp$, it turns out that it is well-defined under the
weaker hypothesis that $[\xi,\xi^\perp]\subset \xi^\perp$.  Indeed, let $\Xbar =
\Ybar$ and calculate
\begin{align*}
  W(\Xbar) - W(\Ybar) &= \overline{\nabla^g_X\xi} - \overline{\nabla^g_Y\xi}\\
                        &= \overline{\nabla^g_{X-Y} \xi}\\
                        &= \overline{\nabla^g_{f\xi}\xi}\\
                        &= f \overline{\nabla^g_\xi \xi}\\
                        &= 0,
\end{align*}
where we have used that $\xi$ is geodetic, which as shown in
Lemma~\ref{lem:geodetic} follows by virtue of $[\xi,\xi^\perp] \subset \xi^\perp$.

We define the \textbf{null second fundamental form}
$B \in \Gamma(E^* \otimes E^*)$ by
\begin{equation}
  \label{eq:null-sff}
  B(\Xbar, \Ybar) := h(W(\Xbar), \Ybar) = g(\nabla^g_X \xi, Y).
\end{equation}
We see that this is well-defined because both $h$ and $W$ are
well-defined.  In contrast to the case of a null hypersurface, the
second fundamental form of $\xi^\perp$ need not be symmetric.

\begin{lemma}
  The null second fundamental form $B$ is symmetric if and only if $\xi^\perp$
  is involutive.
\end{lemma}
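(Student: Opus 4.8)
The plan is to extract the skew-symmetric part of $B$ directly from the formula $B(\Xbar,\Ybar) = g(\nabla^g_X\xi, Y)$ and to recognise it as the obstruction to involutivity of $\xi^\perp$.

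First I would use metric-compatibility of the Levi-Civita connection: since $X, Y \perp \xi$, we have $0 = X\, g(\xi, Y) = g(\nabla^g_X\xi, Y) + g(\xi, \nabla^g_X Y)$, whence $B(\Xbar, \Ybar) = -\,g(\xi, \nabla^g_X Y)$ and, symmetrically, $B(\Ybar, \Xbar) = -\,g(\xi, \nabla^g_Y X)$. Subtracting these and invoking the vanishing torsion of $\nabla^g$, i.e.\ $\nabla^g_X Y - \nabla^g_Y X = [X,Y]$, gives
\[
  B(\Xbar, \Ybar) - B(\Ybar, \Xbar) = -\,g(\xi, [X,Y]).
\]
If desired, one can remark that the right-hand side equals $-\,d\xi^\flat(X,Y)$, since $\xi^\flat(X) = \xi^\flat(Y) = 0$; this makes transparent that it depends only on $\Xbar, \Ybar$, in agreement with the well-definedness of $B$.

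From this identity the claimed equivalence is immediate: $B$ is symmetric precisely when $g(\xi, [X,Y]) = \xi^\flat([X,Y]) = 0$ for all $X, Y \in \Gamma(\xi^\perp)$; but $\xi^\perp = \ker \xi^\flat$, so this says exactly that $[X,Y] \in \Gamma(\xi^\perp)$ whenever $X, Y \in \Gamma(\xi^\perp)$, which is the involutivity of $\xi^\perp$ (equivalently $\xi^\flat \wedge d\xi^\flat = 0$, as already noted above).

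I do not foresee any genuine obstacle: the computation is a two-line manipulation using only that $\nabla^g$ is metric and torsion-free. The single point deserving a word of care is the use of representatives $X, Y$ of the classes $\Xbar, \Ybar$, but this is unproblematic since $B$ has already been shown to be well-defined on $E$ and since the standing hypothesis $[\xi, \xi^\perp] \subset \xi^\perp$ (in force throughout this part of the paper) guarantees $g(\xi, [\xi, Z]) = 0$ for $Z \perp \xi$.
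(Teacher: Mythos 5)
Your proof is correct and follows essentially the same route as the paper: metric-compatibility to rewrite $B(\Xbar,\Ybar)=-g(\xi,\nabla^g_X Y)$, then torsion-freeness to identify the antisymmetric part of $B$ with $-g(\xi,[X,Y])$, which vanishes for all $X,Y\perp\xi$ exactly when $\xi^\perp$ is involutive. The extra remarks on $d\xi^\flat$ and well-definedness are fine but not needed.
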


\begin{proof}
  We calculate
  \begin{align*}
    B(\Xbar, \Ybar) - B(\Ybar, \Xbar) &= g(\nabla^g_X \xi, Y) - g(\nabla^g_Y \xi, X)\\
                                          &= - g(\xi, \nabla^g_X Y) + g(\xi, \nabla^g_Y,X) &&\tag{since $\nabla^gg=0$ and $X,Y \perp \xi$}\\
                                          &= - g(\xi, [X,Y]) &&\tag{since $\nabla^g$ has zero torsion}
  \end{align*}
  and hence $B$ is symmetric if and only if $[X,Y] \perp \xi$ for all
  $X,Y \perp \xi$; that is, if and only if $[\xi^\perp,\xi^\perp] \subset \xi^\perp$.
\end{proof}

Let $B_{\mathrm{sym}}$ denote (twice) the symmetric part of $B$:
\begin{equation}
  B_{\mathrm{sym}} (\Xbar, \Ybar) := B(\Xbar,\Ybar) + B(\Ybar, \Xbar).
\end{equation}

\begin{lemma}
  As sections of $\odot^2 E^*$, $B_{\mathrm{sym}} = \eL_\xi h$.
\end{lemma}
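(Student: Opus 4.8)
The plan is to first make precise what $\eL_\xi h$ means, since $h$ is a metric on the quotient bundle $E = \xi^\perp/L$ rather than a tensor field on $M$, and then to unwind both sides using that $\nabla^g$ is metric and torsion-free.

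First I would note that, under the standing hypothesis $[\xi,\xi^\perp] \subset \xi^\perp$ (equivalently, by Lemma~\ref{lem:geodetic}, $\xi$ geodetic), the flow of $\xi$ preserves the distribution $\xi^\perp$ and the line bundle $L$ spanned by $\xi$ (the latter since $[\xi,\xi]=0$), hence descends to a flow on $E$. This yields a well-defined Lie derivative on $\Gamma(E)$ by $\eL_\xi \Xbar := \overline{[\xi,X]}$ for $X \perp \xi$: it is independent of the representative because replacing $X$ by $X + f\xi$ changes $[\xi,X]$ by $(\xi f)\xi \in \Gamma(L)$. It extends to $\Gamma(E^*)$ and $\Gamma(\odot^2 E^*)$ by the Leibniz rule, so that for $X,Y\perp\xi$,
\begin{equation*}
  (\eL_\xi h)(\Xbar,\Ybar) = \xi\bigl(h(\Xbar,\Ybar)\bigr) - h(\eL_\xi\Xbar,\Ybar) - h(\Xbar,\eL_\xi\Ybar) = \xi\bigl(g(X,Y)\bigr) - g([\xi,X],Y) - g(X,[\xi,Y]),
\end{equation*}
where the last equality uses the definition \eqref{eq:metric-on-quotient} of $h$ together with $[\xi,X],[\xi,Y]\perp\xi$.

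Next I would rewrite the right-hand side using metric-compatibility, $\xi(g(X,Y)) = g(\nabla^g_\xi X,Y) + g(X,\nabla^g_\xi Y)$, and vanishing torsion, $\nabla^g_\xi X = [\xi,X] + \nabla^g_X\xi$ (and likewise for $Y$). Substituting, the two terms containing $[\xi,X]$ and $[\xi,Y]$ cancel against the corresponding terms above, leaving
\begin{equation*}
  (\eL_\xi h)(\Xbar,\Ybar) = g(\nabla^g_X\xi,Y) + g(\nabla^g_Y\xi,X) = B(\Xbar,\Ybar) + B(\Ybar,\Xbar) = B_{\mathrm{sym}}(\Xbar,\Ybar),
\end{equation*}
by the definition \eqref{eq:null-sff} of $B$. (Here one should also remark that every section of $E$ lifts locally to a section of $\xi^\perp$, which is automatically perpendicular to $\xi$, so testing on such representatives suffices.)

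The only genuinely delicate point is the bookkeeping of the first step — defining $\eL_\xi$ on $\odot^2 E^*$ and checking that the objects appearing ($[\xi,X]$, $\nabla^g_X\xi$, and $B(\Xbar,\Ybar) = g(\nabla^g_X\xi,Y)$) all descend to $E$ — which is exactly where the hypothesis $[\xi,\xi^\perp]\subset\xi^\perp$ is used essentially; without it $\eL_\xi h$ would not even be defined. Once that is in place, the identity is a one-line consequence of the defining properties of the Levi-Civita connection.
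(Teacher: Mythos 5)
Your argument is correct and follows essentially the same route as the paper: identify $(\eL_\xi h)(\Xbar,\Ybar)$ with $(\eL_\xi g)(X,Y)$ on representatives $X,Y\perp\xi$ and then use metric-compatibility and vanishing torsion of $\nabla^g$ to rewrite this as $g(\nabla^g_X\xi,Y)+g(\nabla^g_Y\xi,X)=B(\Xbar,\Ybar)+B(\Ybar,\Xbar)$. The only difference is that you spell out the well-definedness of $\eL_\xi$ on the quotient bundle $E$ using $[\xi,\xi^\perp]\subset\xi^\perp$, a point the paper leaves implicit; this is a welcome clarification rather than a different proof.
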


\begin{proof}
  Let $X,Y \perp \xi$.  Then
  \begin{align*}
    (\eL_\xi h)(\Xbar, \Ybar) &= (\eL_\xi g)(X,Y)\\
                                &= g(\nabla^g_X\xi,Y) + g(\nabla^g_Y\xi,X)\\
                                &= B(\Xbar,\Ybar) + B(\Ybar, \Xbar).
  \end{align*}
\end{proof}

\begin{definition}
  Let $\xi^\perp= \ker \xi^\flat$ be $\xi$-invariant, so that $[\xi,\xi^\perp] \subset
  \xi^\perp$.   We say that $\xi^\perp$ is
  \begin{itemize}
  \item \textbf{totally geodesic} if $B_{\mathrm{sym}} = 0$;
  \item \textbf{minimal} if $\tr B_{\mathrm{sym}} = 0$; and
  \item \textbf{totally umbilical} if $B_{\mathrm{sym}} = f h$ for some $f \in C^\infty(M)$.
  \end{itemize}
\end{definition}

Of course, if $\xi^\perp$ is involutive, then $B_{\mathrm{sym}}=B$ and hence
these are the natural extension to null hypersurfaces of the
well-known concepts for hypersurfaces of riemannian manifolds.  Even
if $\xi^\perp$ is not involutive, the condition of being totally geodesic
simply says that any lorentzian geodesic whose initial velocity
belongs to $\xi^\perp$ is such that its velocity remains in $\xi^\perp$.  Indeed,
suppose that $\gamma$ is a geodesic with $\gamma(0) = p$ and
$\dot\gamma(0) \in \xi^\perp_p$.  Consider the function of $t$ defined by $t
\mapsto g(\dot\gamma(t),\xi(\gamma(t)))$.  This function vanishes at
$t=0$ because $\dot\gamma(0) \in \xi^\perp_p$.  Differentiating with respect
to $t$, we obtain
\begin{align*}
  \tfrac{d}{dt} g(\dot\gamma, \xi) &= g(\tfrac{D}{dt}\dot\gamma, \xi) + g(\dot\gamma, \nabla^g_{\dot\gamma} \xi)\\
                                 &= g(\dot\gamma, \nabla^g_{\dot\gamma} \xi) && \tag{since $\gamma$ is a geodesic}\\
                                 &= B(\overline{\dot\gamma},\overline{\dot\gamma}).
\end{align*}
By polarisation, this vanishes for all $\dot\gamma$ if and only if $B$
is skew-symmetric.  If (and only if) that is the case, then
$g(\dot\gamma,\xi) = 0$ for all $t$.

Before we go on to characterise geometrically the different bargmannian
structures, let us observe that many of the calculations already done
in the galilean, carrollian and aristotelian sections imply some
results also for bargmannian structures.

If $\nabla$ is an adapted connection, then $\nabla g = 0$, $\nabla \xi
= 0$, $\nabla \xi^\flat = 0$ and $\nabla \nu = 0$.  Let $T^\nabla \in
\Omega^2(M,TM)$ denote its torsion.  We define $S \in \Omega^1(M,TM)$
by $S(X) := T^\nabla(\xi,X)$ for all $X \in \eX(M)$ and $\Sigma \in
\Gamma(\odot^2 T^*M)$ by $\Sigma(X,Y) := g(S(X),Y) + g(S(Y),X)$.  The
following result follows from the calculations already done in the
galilean, carrollian and aristotelian sections.

\begin{corollary}
  With the notation of the previous paragraph, the following
  identities hold:
  \begin{equation}
    d\xi^\flat = \xi^\flat \circ T^\nabla, \qquad \eL_\xi g = \Sigma,
    \qquad \eL_\xi \nu = \tr(S) \nu \qquad\text{and}\qquad \eL_\xi
    \xi^\flat = \xi^\flat \circ S.
  \end{equation}
\end{corollary}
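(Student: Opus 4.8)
The plan is to obtain all four identities by re-running, essentially verbatim, the arguments already given in the galilean, carrollian and aristotelian sections, since the sole ingredient those arguments use is that the characteristic tensor in question is parallel for the adapted connection $\nabla$ --- and we have just recorded that $\nabla g = 0$, $\nabla\xi = 0$, $\nabla\xi^\flat = 0$ and $\nabla\nu = 0$ for any adapted $\nabla$. So each piece of the corollary is really a citation of an earlier proposition with the galilean/carrollian characteristic tensors swapped for the bargmannian ones.

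Three of the four identities are then immediate. For $d\xi^\flat = \xi^\flat \circ T^\nabla$ I would repeat the computation in the proof of Proposition~\ref{prop:gal-d-tau} with the clock one-form replaced by $\xi^\flat$: from $\nabla\xi^\flat = 0$ one has $X\xi^\flat(Y) = \xi^\flat(\nabla_X Y)$, and skew-symmetrising while invoking the definition of the torsion gives $d\xi^\flat(X,Y) = \xi^\flat(T^\nabla(X,Y))$. For $\eL_\xi g = \Sigma$ the proof of Proposition~\ref{prop:L-xi-h} carries over unchanged: $\nabla\xi = 0$ gives $T^\nabla(\xi,X) = \nabla_\xi X - [\xi,X]$, and expanding $0 = (\nabla_\xi g)(X,Y)$ via the Lie derivative yields $(\eL_\xi g)(X,Y) = g(T^\nabla(\xi,X),Y) + g(T^\nabla(\xi,Y),X)$, which is $\Sigma(X,Y)$ by definition; the only difference from the carrollian case, that $g$ is now nondegenerate, is harmless. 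For $\eL_\xi\xi^\flat = \xi^\flat \circ S$ I would argue as in Proposition~\ref{prop:L-xi-tau}: since $\xi^\flat(\xi) = g(\xi,\xi) = 0$ is in particular constant, Cartan's formula gives $\eL_\xi\xi^\flat = \imath_\xi d\xi^\flat$, and substituting the first identity gives $\imath_\xi(\xi^\flat \circ T^\nabla) = \xi^\flat \circ (\imath_\xi T^\nabla) = \xi^\flat \circ S$.

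The one identity needing genuine work is $\eL_\xi\nu = \tr(S)\nu$, and this is where I expect the only real obstacle to lie: the proof of Proposition~\ref{prop:L-xi-mu} was carried out in a frame with $\xi$ as one leg and the remaining legs orthonormal for the spatial metric, whereas a bargmannian structure naturally supplies a Witt frame $(e_+ = \xi, e_-, e_a)$ instead. One option is to redo that frame computation with the Witt coframe, checking that $\nabla\xi = 0$ together with $\nabla g = 0$ still annihilates every surviving connection-form term, including the ones now involving the $e_-$ direction. A cleaner route, which also reproves and slightly generalises Proposition~\ref{prop:L-xi-mu}, is the frame-free identity: for any $\nabla$-parallel volume form, expanding $\eL_\xi\nu$ and rewriting $\nabla_\xi Y - [\xi,Y] = \nabla_Y\xi + T^\nabla(\xi,Y)$ gives $\eL_\xi\nu = \bigl(\tr(\nabla\xi) + \tr(S)\bigr)\nu$, whence $\eL_\xi\nu = \tr(S)\nu$ because $\nabla\xi = 0$ for an adapted connection. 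Apart from this volume-form step, the corollary is pure bookkeeping against the earlier propositions.
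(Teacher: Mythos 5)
Your proposal is correct and is essentially the paper's own proof, which consists precisely of the observation that the proofs of Propositions~\ref{prop:gal-d-tau}, \ref{prop:L-xi-h}, \ref{prop:L-xi-mu} and \ref{prop:L-xi-tau} go through verbatim once the characteristic tensors are replaced by $\xi^\flat$, $g$, $\nu$ and $\xi$, all of which are $\nabla$-parallel for an adapted connection. The only place where you genuinely add something is the volume identity: the paper simply asserts that the proof of Proposition~\ref{prop:L-xi-mu} applies, even though that argument was written in a carrollian-adapted frame $(\xi,X_a)$ with $h$-orthonormal spatial legs, whereas the bargmannian $G$-structure supplies Witt frames $(e_+=\xi,e_-,e_a)$; one either redoes the frame computation with the Witt coframe (it works, since the connection form is $\g$-valued and $\nabla\xi=0$, $\nabla g=0$ kill the relevant components) or, as you propose, uses the frame-free identity $\eL_\xi\nu = \tr(\nabla\xi + S)\,\nu$ valid for any $\nabla$-parallel volume form, which reduces to $\tr(S)\,\nu$ because $\nabla\xi=0$. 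Your frame-free route is cleaner, simultaneously reproves Proposition~\ref{prop:L-xi-mu}, and makes explicit why the earlier frame computation transfers; otherwise the two proofs coincide.
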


\begin{proof}
  The proof of Proposition~\ref{prop:gal-d-tau} shows that $d\xi^\flat =
  \xi^\flat \circ T^\nabla$, whereas the proof of
  Proposition~\ref{prop:L-xi-h} shows that $\eL_\xi g = \Sigma$ and
  that of Proposition~\ref{prop:L-xi-mu} shows that $\eL_\xi \nu =
  \tr(S) \nu$.  Finally, the proof of Proposition~\ref{prop:L-xi-tau}
  shows that $\eL_\xi \xi^\flat = \xi^\flat \circ S$.
\end{proof}

It follows from this corollary, from
Proposition~\ref{prop:intr-tors-is-nabla} and from the explicit form
of the map $\Lambda$ in equation~\eqref{eq:Lambda-map} that
\begin{equation}
  2 g(\nabla^g_X\xi,Y) = \Sigma(X,Y) +
  d\xi^\flat(X,Y),\qquad\forall~X,Y\in\eX(M).
\end{equation}
In other words, $\Sigma$ and $d\xi^\flat$ are the extensions to $TM$
of the symmetric and skew-symmetric parts of the second fundamental
form of the distribution $\xi^\perp$, respectively.  In particular, it should
be emphasised that the skew-symmetric component of $S$ is not related to
the skew-symmetric component of the second fundamental form.

In order to make full use of these results in the geometric
characterisation of bargmannian structures, we should first determine
which $\so(n-1)$-submodules of $\coker\d$ (or of $Z^\perp \otimes
V^*$) contribute to which geometric data.  This can be read off by
inspection and the results are collected in
Table~\ref{tab:contributions}, which turns out to be quite handy.  In
that table, a $\bullet$ indicates that the submodule does contribute
and $\circ$ indicates that it does not.

\begin{table}[h!]
  \centering
  \begin{tabular}{*{2}{>{$}c<{$}}|*{9}{>{$}c<{$}}}\toprule
    \multicolumn{2}{c|}{$\so(n-1)$-submodule of} & & \multicolumn{2}{c}{$\Sigma$} & & & & & \multicolumn{2}{c}{$B_{\mathrm{sym}}$} \\
    \multicolumn{1}{>{$}c<{$}}{Z^\perp \otimes V^*} & \multicolumn{1}{>{$}c<{$}|}{\coker\d} & S_{\wedge^2} & S_{\odot_0^2} & \tr(S) & \xi^\flat \circ S & d\xi^\flat & \xi^\flat \wedge d\xi^\flat & B_{\wedge^2} & B_{\odot_0^2} & \tr(B) \\ \midrule
    \left<Z \otimes \eta\right> & \left<Z\otimes \eta \wedge \zeta\right> & \circ  &\bullet & \circ & \circ & \circ & \circ  & \circ & \circ & \circ \\
    \left<Z \otimes \pi + P \otimes \eta \right> & \left<H\otimes \pi \wedge \eta\right> & \circ &\circ & \circ & \circ & \bullet & \circ & \circ & \circ & \circ \\
    \left<Z \otimes \pi - P \otimes \eta \right> & \left<(Z\otimes\pi - P \otimes \eta) \wedge \zeta\right> &\bullet & \bullet & \circ & \circ & \circ & \circ & \circ & \circ & \circ \\
    \left<(P\otimes \pi)_{\wedge^2}\right> & \left<H\otimes \pi \wedge \pi\right> &\circ & \circ & \circ  & \circ & \bullet & \bullet & \bullet & \circ & \circ \\
    \left<Z \otimes \zeta\right> & \left<H\otimes \eta \wedge \zeta\right> &\bullet &\bullet&\bullet&\bullet&\bullet&\circ & \circ & \circ & \circ \\
    \left<(P \otimes\pi)_{\odot_0^2}\right> & \left<(P\otimes \pi)_{\odot_0^2} \wedge \zeta\right> &\circ &\bullet&\circ &\circ &\circ &\circ  & \circ & \bullet & \circ \\
    \left<(P \otimes\pi)_{\mathrm{tr}}\right> & \left<(P\otimes \pi)_{\mathrm{tr}} \wedge \zeta\right> &\circ &\circ&\bullet&\circ &\circ &\circ  & \circ & \circ & \bullet \\
    \left<P \otimes \zeta\right> & \left<H\otimes \pi \wedge \zeta\right> &\bullet &\bullet&\circ&\bullet&\bullet&\bullet & \bullet & \bullet & \circ \\\bottomrule
  \end{tabular}
  \vspace{1em}
  \caption{Contributions of each $\so(n-1)$-submodule of $Z^\perp \otimes V^* \cong \coker\d$\\
    ($\bullet$ contributes and $\circ$ does not)}
  \label{tab:contributions}
\end{table}

\subsubsection{Totally geodesic bargmannian structures}
\label{sec:bar-tot-geo}

We start with those bargmannian structures which are totally geodesic;
that is, for which the symmetric part of the second fundamental form
vanishes.  From Table~\ref{tab:contributions} we see that only the
$\so(n-1)$-submodules $\left<(P \otimes\pi)_{\odot_0^2}\right> \cong
\left<(P\otimes \pi)_{\odot_0^2} \wedge \zeta\right>$, $\left<(P
  \otimes\pi)_{\mathrm{tr}}\right> \cong \left<(P\otimes \pi)_{\mathrm{tr}} \wedge
  \zeta\right>$ and $\left<P \otimes \zeta\right> \cong
\left<H\otimes \pi \wedge \zeta\right>$ contribute to
$B_{\mathrm{sym}}$.  Therefore these modules cannot be present in the
$\eB_i$. There are precisely eleven bargmannian structures not containing
any of these $\so(n-1)$-submodules:  $\eB_0$, $\eB_1$, $\eB_2$,
$\eB_3$, $\eB_4$, $\eB_5$, $\eB_8$, $\eB_9$, $\eB_{13}$, $\eB_{14}$
and $\eB_{15}$, which are depicted in
Figure~\ref{fig:bar-hasse-tot-geod}.

They can be distinguished by the properties in
Table~\ref{tab:bar-tot-geo}.  Notice that $\nabla^g\xi$ is a one-form
on $M$ with values in the distribution $\xi^\perp$.  We can restrict
$\nabla^g\xi$ to $\xi^\perp$ (resulting in the column
$\left.\nabla^g\xi\right|_{\xi^\perp}$ of the table).  Alternatively we can
quotient by (the line bundle associated to) $\xi$ in order to define
$\overline{\nabla^g\xi}$, which is a one-form on $M$ with values in
the quotient vector bundle $E=\xi^\perp/L$.  Structures $\eB_9$ and $\eB_{13}$
can be distinguished by the fact that for $\eB_9$, $T^\nabla(\xi,X) =
0$ for all $X \perp \xi$, whereas this is not the case for
$\eB_{13}$; but it would be better to distinguish them in a way which
is manifestly independent of $\nabla$.

\begin{table}[h!]
  \centering
  \rowcolors{2}{blue!10}{white}
  \begin{tabular}{>{$}c<{$}|*{8}{>{$}c<{$}}}\toprule
    \multicolumn{1}{c|}{Structure} & \left.\nabla^g\xi\right|_{\xi^\perp} &  \overline{\nabla^g\xi} & \eL_\xi g & d\xi^\flat & \xi^\flat \wedge d\xi^\flat & \nabla^g_\xi \xi & \eL_\xi \xi^\flat & \eL_\xi \nu\\\midrule
    \eB_0 & \yes & \yes & \yes & \yes & \yes & \yes & \yes & \yes \\
    \eB_1 & \yes & \yes & & \yes & \yes & \yes & \yes & \yes \\
    \eB_2 & & & \yes & & \yes & & \yes & \yes \\
    \eB_3 & & \yes & & & \yes & & \yes & \yes \\
    \eB_4 & & & & \yes & \yes & \yes & \yes & \yes \\
    \eB_5 & & & \yes & & & \yes & \yes & \yes \\
    \eB_8 & & & & & \yes & & \yes & \yes \\
    \eB_9 & & & & & & \yes & \yes & \yes\\
    \eB_{13} & & & & & & \yes & \yes & \yes \\
    \eB_{14} & & & & & \yes & & & \\
    \eB_{15} & & & & & & & & \\\bottomrule
  \end{tabular}
  \vspace{1em}
  \caption{Totally geodesic bargmannian structures\\
    ($\yes$ means the expression vanishes)}
  \label{tab:bar-tot-geo}
\end{table}

\begin{figure}[h!]
  \centering
  \begin{tikzpicture}[scale=1.3]
    %
    %
    \coordinate[label=right:{\scriptsize $0$}] (b0) at (-1,7); 
    \coordinate[label=left:{\scriptsize $1$}] (b1) at (-1,6); 
    \coordinate[label=left:{\scriptsize $2$}] (b2) at (-2,6); 
    \coordinate[label=right:{\scriptsize $3$}] (b3) at (-1,5); 
    \coordinate[label=above right:{\scriptsize $4$}] (b4) at (0,5); 
    \coordinate[label=left:{\scriptsize $5$}] (b5) at (-2,5); 
    \coordinate[label=above left:{\scriptsize $8$}] (b8) at (-1,4); 
    \coordinate[label=left:{\scriptsize $9$}] (b9) at (-2,4); 
    \coordinate[label=above left:{\scriptsize $13$}] (b13) at (-2,3); 
    \coordinate[label=above left:{\scriptsize $14$}] (b14) at (-1,3); 
    \coordinate[label=left:{\scriptsize $15$}] (b15) at (-3,2); 
    %
    %
    \tikzset{every path/.style={->,thick,gris,shorten >=3pt, shorten <=3pt}}
    \draw (b0) -- (b1);
    \draw (b0) -- (b2);
    \draw (b2) -- (b3);
    \draw (b2) -- (b5);
    \draw (b1) -- (b3);
    \draw (b1) -- (b4);
    \draw (b3) -- (b8);
    \draw (b3) -- (b9);
    \draw (b4) -- (b8);
    \draw (b5) -- (b9);
    \draw (b9) -- (b13); 
    \draw (b8) -- (b13); 
    \draw (b8) -- (b14); 
    \draw (b13) -- (b15);
    \draw (b14) -- (b15);
    %
    %
    \foreach \point in {b0,b1,b2,b3,b4,b5,b8,b9,b13,b14,b15}
    \filldraw [color=blue!70!black,fill=blue!70!white] (\point) circle (1.5pt);
    \begin{scope}[opacity=0.15]
      \coordinate[label=left:{\scriptsize $6$}] (b6) at (0,4); 
      \coordinate[label=above right:{\scriptsize $7$}] (b7) at (1,4); 
      \coordinate[label=above right:{\scriptsize $10$}] (b10) at (2,3); 
      \coordinate[label=above right:{\scriptsize $11$}] (b11) at (0,3); 
      \coordinate[label=above right:{\scriptsize $12$}] (b12) at (1,3); 
      \coordinate[label=above left:{\scriptsize $16$}] (b16) at (-1,2); 
      \coordinate[label=right:{\scriptsize $17$}] (b17) at (1,2); 
      \coordinate[label=left:{\scriptsize $18$}] (b18) at (-2,2); 
      \coordinate[label=below right:{\scriptsize $19$}] (b19) at (0,2); 
      \coordinate[label=right:{\scriptsize $20$}] (b20) at (2,2); 
      \coordinate[label=below left:{\scriptsize $21$}] (b21) at (-2,1); 
      \coordinate[label=below left:{\scriptsize $22$}] (b22) at (-1,1); 
      \coordinate[label=below right:{\scriptsize $23$}] (b23) at (1,1); 
      \coordinate[label=below right:{\scriptsize $24$}] (b24) at (0,1); 
      \coordinate[label=below right:{\scriptsize $25$}] (b25) at (0,0); 
      \coordinate[label=right:{\scriptsize $26$}] (b26) at (0,-1);
      %
      %
      \tikzset{every path/.style={->,thick,black,shorten >=3pt, shorten <=3pt}}
      \draw (b4) -- (b6);
      \draw (b4) -- (b7);
      \draw (b8) -- (b11); 
      \draw (b8) -- (b12); 
      \draw (b6) -- (b10); 
      \draw (b6) -- (b11); 
      \draw (b7) -- (b10); 
      \draw (b7) -- (b12); 
      \draw (b13) -- (b18);
      \draw (b13) -- (b19);
      \draw (b14) -- (b16);
      \draw (b14) -- (b17);
      \draw (b11) -- (b16);
      \draw (b11) -- (b18);
      \draw (b11) -- (b20);
      \draw (b12) -- (b17);
      \draw (b12) -- (b19);
      \draw (b12) -- (b20);
      \draw (b10) -- (b20);
      \draw (b15) -- (b21);
      \draw (b15) -- (b22);
      \draw (b18) -- (b21);
      \draw (b18) -- (b24);
      \draw (b16) -- (b21);
      \draw (b16) -- (b23);
      \draw (b19) -- (b22);
      \draw (b19) -- (b24);
      \draw (b17) -- (b22);
      \draw (b17) -- (b23);
      \draw (b20) -- (b23);
      \draw (b20) -- (b24);
      \draw (b21) -- (b25); 
      \draw (b22) -- (b25); 
      \draw (b23) -- (b25); 
      \draw (b24) -- (b25); 
      \draw (b25) -- (b26); 
      %
      %
      \foreach \point in {b7,b12,b17,b19,b22,b6,b11,b16,b18,b21,b10,b20,b23,b24,b25,b26}
      \filldraw [color=white!70!black,fill=black!70!white] (\point) circle (1.5pt);
    \end{scope}
  \end{tikzpicture}
  \caption{Hasse diagram of totally geodesic bargmannian structures}
  \label{fig:bar-hasse-tot-geod}
\end{figure}
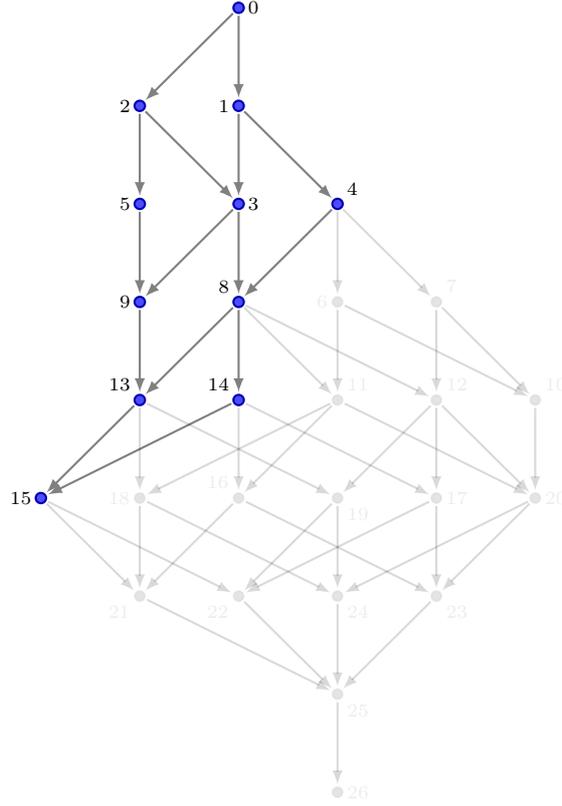

\subsubsection{Minimal bargmannian structures}
\label{sec:bar-min}

We continue with those bargmannian structures which are minimal.  From
Table~\ref{tab:contributions} it follows that such 
$\eB_i$ cannot contain the $\so(n-1)$-modules $\left<(P
  \otimes\pi)_{\mathrm{tr}}\right> \cong \left<(P\otimes \pi)_{\mathrm{tr}} \wedge
  \zeta\right>$ nor $\left<P \otimes \zeta\right> \cong
\left<H\otimes \pi \wedge \zeta\right>$ and must contain $\left<(P
  \otimes\pi)_{\odot_0^2}\right> \cong \left<(P\otimes \pi)_{\odot_0^2} \wedge
  \zeta\right>$.  There are precisely five bargmannian
structures satisfying these conditions: $\eB_6$, $\eB_{11}$,
$\eB_{16}$, $\eB_{18}$ and $\eB_{21}$, which are depicted in
Figure~\ref{fig:bar-hasse-min}.  They can be distinguished by the
properties listed in Table~\ref{tab:bar-min}.

\begin{table}[h!]
  \centering
  \rowcolors{2}{blue!10}{white}
  \begin{tabular}{>{$}c<{$}|*{4}{>{$}c<{$}}}\toprule
    \multicolumn{1}{c|}{Structure} & d\xi^\flat & \xi^\flat \wedge  d\xi^\flat & \nabla^g_\xi\xi & \eL_\xi \xi^\flat \\\midrule
    \eB_6 & \yes & \yes & \yes & \yes \\
    \eB_{11} & & \yes & \yes & \yes \\
    \eB_{16} & & \yes  & & \\
    \eB_{18} & & & \yes  & \yes \\
    \eB_{21} & & & & \yes\\\bottomrule
  \end{tabular}
  \vspace{1em}
  \caption{Minimal bargmannian structures\\
  ($\yes$ means the expression vanishes)}
  \label{tab:bar-min}
\end{table}

\begin{figure}[h!]
  \centering
  \begin{tikzpicture}[scale=1.3]
    %
    %
    \coordinate[label=left:{\scriptsize $6$}] (b6) at (0,4); 
    \coordinate[label=above right:{\scriptsize $11$}] (b11) at (0,3); 
    \coordinate[label=above left:{\scriptsize $16$}] (b16) at (-1,2); 
    \coordinate[label=left:{\scriptsize $18$}] (b18) at (-2,2); 
    \coordinate[label=below left:{\scriptsize $21$}] (b21) at (-2,1); 
    %
    %
    \tikzset{every path/.style={->,thick,gris,shorten >=3pt, shorten <=3pt}}
    \draw (b6) -- (b11); 
    \draw (b11) -- (b16);
    \draw (b11) -- (b18);
    \draw (b18) -- (b21);
    \draw (b16) -- (b21);
    %
    %
    \foreach \point in {b6,b11,b16,b18,b21}
    \filldraw [color=green!70!black,fill=green!70!white] (\point) circle (1.5pt);
    \begin{scope}[opacity=0.15]
      \coordinate[label=right:{\scriptsize $0$}] (b0) at (-1,7); 
      \coordinate[label=left:{\scriptsize $1$}] (b1) at (-1,6); 
      \coordinate[label=left:{\scriptsize $2$}] (b2) at (-2,6); 
      \coordinate[label=right:{\scriptsize $3$}] (b3) at (-1,5); 
      \coordinate[label=above right:{\scriptsize $4$}] (b4) at (0,5); 
      \coordinate[label=left:{\scriptsize $5$}] (b5) at (-2,5); 
      \coordinate[label=above right:{\scriptsize $7$}] (b7) at (1,4); 
      \coordinate[label=above left:{\scriptsize $8$}] (b8) at (-1,4); 
      \coordinate[label=left:{\scriptsize $9$}] (b9) at (-2,4); 
      \coordinate[label=above right:{\scriptsize $10$}] (b10) at (2,3); 
      \coordinate[label=above right:{\scriptsize $12$}] (b12) at (1,3); 
      \coordinate[label=above left:{\scriptsize $13$}] (b13) at (-2,3); 
      \coordinate[label=above left:{\scriptsize $14$}] (b14) at (-1,3); 
      \coordinate[label=left:{\scriptsize $15$}] (b15) at (-3,2); 
      \coordinate[label=right:{\scriptsize $17$}] (b17) at (1,2); 
      \coordinate[label=below right:{\scriptsize $19$}] (b19) at (0,2); 
      \coordinate[label=right:{\scriptsize $20$}] (b20) at (2,2); 
      \coordinate[label=below left:{\scriptsize $22$}] (b22) at (-1,1); 
      \coordinate[label=below right:{\scriptsize $23$}] (b23) at (1,1); 
      \coordinate[label=below right:{\scriptsize $24$}] (b24) at (0,1); 
      \coordinate[label=below right:{\scriptsize $25$}] (b25) at (0,0); 
      \coordinate[label=right:{\scriptsize $26$}] (b26) at (0,-1);
      %
      %
      \tikzset{every path/.style={->,thick,black,shorten >=3pt, shorten <=3pt}}
      \draw (b0) -- (b1);
      \draw (b0) -- (b2);
      \draw (b2) -- (b3);
      \draw (b2) -- (b5);
      \draw (b1) -- (b3);
      \draw (b1) -- (b4);
      \draw (b3) -- (b8);
      \draw (b3) -- (b9);
      \draw (b4) -- (b6);
      \draw (b4) -- (b7);
      \draw (b4) -- (b8);
      \draw (b5) -- (b9);
      \draw (b9) -- (b13); 
      \draw (b8) -- (b11); 
      \draw (b8) -- (b12); 
      \draw (b8) -- (b13); 
      \draw (b8) -- (b14); 
      \draw (b6) -- (b10); 
      \draw (b7) -- (b10); 
      \draw (b7) -- (b12); 
      \draw (b13) -- (b15);
      \draw (b13) -- (b18);
      \draw (b13) -- (b19);
      \draw (b14) -- (b15);
      \draw (b14) -- (b16);
      \draw (b14) -- (b17);
      \draw (b11) -- (b20);
      \draw (b12) -- (b17);
      \draw (b12) -- (b19);
      \draw (b12) -- (b20);
      \draw (b10) -- (b20);
      \draw (b15) -- (b21);
      \draw (b15) -- (b22);
      \draw (b18) -- (b24);
      \draw (b16) -- (b23);
      \draw (b19) -- (b22);
      \draw (b19) -- (b24);
      \draw (b17) -- (b22);
      \draw (b17) -- (b23);
      \draw (b20) -- (b23);
      \draw (b20) -- (b24);
      \draw (b21) -- (b25); 
      \draw (b22) -- (b25); 
      \draw (b23) -- (b25); 
      \draw (b24) -- (b25); 
      \draw (b25) -- (b26); 
      %
      %
      \foreach \point in {b0,b1,b2,b3,b4,b5,b8,b9,b13,b14,b15,b7,b12,b17,b19,b22,b10,b20,b23,b24,b25,b26}
      \filldraw [color=white!70!black,fill=black!70!white] (\point) circle (1.5pt);
    \end{scope}

  \end{tikzpicture}
  \caption{Hasse diagram of minimal bargmannian structures}
  \label{fig:bar-hasse-min}
\end{figure}

\subsubsection{Totally umbilical bargmannian structures}
\label{sec:bar-tot-umb}

We continue with those bargmannian structures which are totally
umbilical.  From Table~\ref{tab:contributions} it follows that such
$\eB_i$ cannot contain the $\so(n-1)$-modules $\left<(P
  \otimes\pi)_{\odot_0^2}\right> \cong \left<(P\otimes \pi)_{\odot_0^2} \wedge
  \zeta\right>$ nor $\left<P \otimes \zeta\right> \cong
\left<H\otimes \pi \wedge \zeta\right>$ and must contain $\left<(P
  \otimes\pi)_{\mathrm{tr}}\right> \cong \left<(P\otimes \pi)_{\mathrm{tr}} \wedge
  \zeta\right>$.  There are precisely five bargmannian
structures satisfying these conditions: $\eB_7$, $\eB_{12}$,
$\eB_{17}$, $\eB_{19}$ and $\eB_{22}$, which are depicted in
Figure~\ref{fig:bar-hasse-tot-umb}.  They can be distinguished by the
properties listed in Table~\ref{tab:bar-tot-umb}.

\begin{table}[h!]
  \centering
  \rowcolors{2}{blue!10}{white}
  \begin{tabular}{>{$}c<{$}|*{4}{>{$}c<{$}}}\toprule
    \multicolumn{1}{c|}{Structure} & d\xi^\flat & \xi^\flat \wedge  d\xi^\flat& \nabla^g_\xi\xi & \eL_\xi \xi^\flat \\\midrule
    \eB_7 & \yes & \yes & \yes & \yes \\
    \eB_{12} & & \yes & \yes & \yes \\
    \eB_{17} & & \yes  & & \\
    \eB_{19} & & & \yes  & \yes \\
    \eB_{22} & & & & \yes\\\bottomrule
  \end{tabular}
  \vspace{1em}
  \caption{Totally umbilical bargmannian structures\\
    ($\yes$ means the expression vanishes)}
  \label{tab:bar-tot-umb}
\end{table}
    
\begin{figure}[h!]
  \centering
  \begin{tikzpicture}[scale=1.3]
    %
    %
    \coordinate[label=above right:{\scriptsize $7$}] (b7) at (1,4); 
    \coordinate[label=above right:{\scriptsize $12$}] (b12) at (1,3); 
    \coordinate[label=right:{\scriptsize $17$}] (b17) at (1,2); 
    \coordinate[label=below right:{\scriptsize $19$}] (b19) at (0,2); 
    \coordinate[label=below left:{\scriptsize $22$}] (b22) at (-1,1); 
    %
    %
    \tikzset{every path/.style={->,thick,gris,shorten >=3pt, shorten <=3pt}}
    \draw (b7) -- (b12); 
    \draw (b12) -- (b17);
    \draw (b12) -- (b19);
    \draw (b17) -- (b22);
    \draw (b19) -- (b22);
    %
    %
    \foreach \point in {b7,b12,b17,b19,b22}
    \filldraw [color=red!70!black,fill=red!70!white] (\point) circle (1.5pt);
    \begin{scope}[opacity=0.15]
      \coordinate[label=right:{\scriptsize $0$}] (b0) at (-1,7); 
      \coordinate[label=left:{\scriptsize $1$}] (b1) at (-1,6); 
      \coordinate[label=left:{\scriptsize $2$}] (b2) at (-2,6); 
      \coordinate[label=right:{\scriptsize $3$}] (b3) at (-1,5); 
      \coordinate[label=above right:{\scriptsize $4$}] (b4) at (0,5); 
      \coordinate[label=left:{\scriptsize $5$}] (b5) at (-2,5); 
      \coordinate[label=left:{\scriptsize $6$}] (b6) at (0,4); 
      \coordinate[label=above left:{\scriptsize $8$}] (b8) at (-1,4); 
      \coordinate[label=left:{\scriptsize $9$}] (b9) at (-2,4); 
      \coordinate[label=above right:{\scriptsize $10$}] (b10) at (2,3); 
      \coordinate[label=above right:{\scriptsize $11$}] (b11) at (0,3); 
      \coordinate[label=above left:{\scriptsize $13$}] (b13) at (-2,3); 
      \coordinate[label=above left:{\scriptsize $14$}] (b14) at (-1,3); 
      \coordinate[label=left:{\scriptsize $15$}] (b15) at (-3,2); 
      \coordinate[label=above left:{\scriptsize $16$}] (b16) at (-1,2); 
      \coordinate[label=left:{\scriptsize $18$}] (b18) at (-2,2); 
      \coordinate[label=right:{\scriptsize $20$}] (b20) at (2,2); 
      \coordinate[label=below left:{\scriptsize $21$}] (b21) at (-2,1); 
      \coordinate[label=below right:{\scriptsize $23$}] (b23) at (1,1); 
      \coordinate[label=below right:{\scriptsize $24$}] (b24) at (0,1); 
      \coordinate[label=below right:{\scriptsize $25$}] (b25) at (0,0); 
      \coordinate[label=right:{\scriptsize $26$}] (b26) at (0,-1);
      %
      %
      \tikzset{every path/.style={->,thick,black,shorten >=3pt, shorten <=3pt}}
      \draw (b0) -- (b1);
      \draw (b0) -- (b2);
      \draw (b2) -- (b3);
      \draw (b2) -- (b5);
      \draw (b1) -- (b3);
      \draw (b1) -- (b4);
      \draw (b3) -- (b8);
      \draw (b3) -- (b9);
      \draw (b4) -- (b6);
      \draw (b4) -- (b7);
      \draw (b4) -- (b8);
      \draw (b5) -- (b9);
      \draw (b9) -- (b13); 
      \draw (b8) -- (b11); 
      \draw (b8) -- (b12); 
      \draw (b8) -- (b13); 
      \draw (b8) -- (b14); 
      \draw (b6) -- (b10); 
      \draw (b6) -- (b11); 
      \draw (b7) -- (b10); 
      \draw (b13) -- (b15);
      \draw (b13) -- (b18);
      \draw (b13) -- (b19);
      \draw (b14) -- (b15);
      \draw (b14) -- (b16);
      \draw (b14) -- (b17);
      \draw (b11) -- (b16);
      \draw (b11) -- (b18);
      \draw (b11) -- (b20);
      \draw (b12) -- (b20);
      \draw (b10) -- (b20);
      \draw (b15) -- (b21);
      \draw (b15) -- (b22);
      \draw (b18) -- (b21);
      \draw (b18) -- (b24);
      \draw (b16) -- (b21);
      \draw (b16) -- (b23);
      \draw (b19) -- (b24);
      \draw (b17) -- (b23);
      \draw (b20) -- (b23);
      \draw (b20) -- (b24);
      \draw (b21) -- (b25); 
      \draw (b22) -- (b25); 
      \draw (b23) -- (b25); 
      \draw (b24) -- (b25); 
      \draw (b25) -- (b26); 
      %
      %
      \foreach \point in {b0,b1,b2,b3,b4,b5,b8,b9,b13,b14,b15,b6,b11,b16,b18,b21,b10,b20,b23,b24,b25,b26}
      \filldraw [color=white!70!black,fill=black!70!white] (\point) circle (1.5pt);
    \end{scope}
  \end{tikzpicture}
  \caption{Hasse diagram of totally umbilical bargmannian structures}
  \label{fig:bar-hasse-tot-umb}
\end{figure}
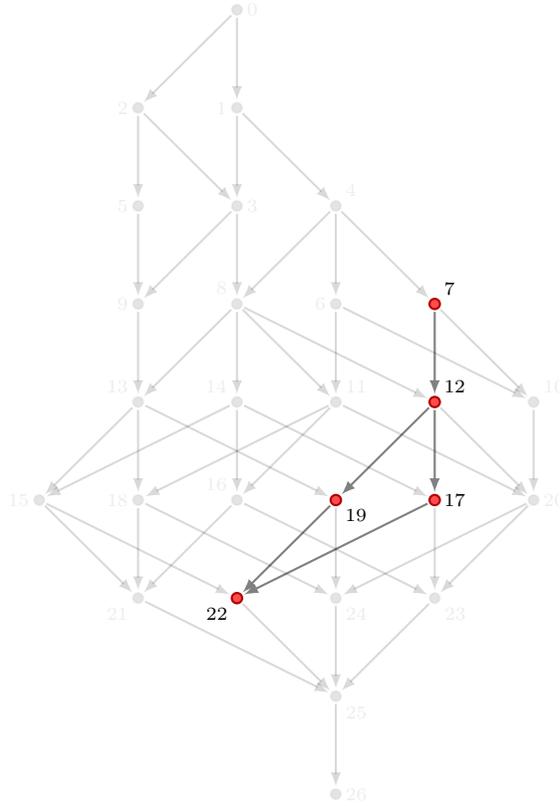

\subsubsection{Other bargmannian structures}
\label{sec:bar-other}

We end with those bargmannian structures which are neither totally
geodesic, totally umbilical nor minimal, but still not generic.  From
Table~\ref{tab:contributions} it follows that such 
$\eB_i$ cannot contain the $\so(n-1)$-module $\left<P \otimes \zeta\right> \cong
\left<H\otimes \pi \wedge \zeta\right>$ and must contain both $\left<(P
  \otimes\pi)_{\odot_0^2}\right> \cong \left<(P\otimes \pi)_{\odot_0^2} \wedge
  \zeta\right>$ and $\left<(P
  \otimes\pi)_{\mathrm{tr}}\right> \cong \left<(P\otimes \pi)_{\mathrm{tr}} \wedge
  \zeta\right>$.  There are precisely five bargmannian
structures satisfying these conditions: $\eB_{10}$, $\eB_{20}$,
$\eB_{23}$, $\eB_{24}$ and $\eB_{25}$, which are depicted in
Figure~\ref{fig:bar-hasse-other}.  They can be distinguished by the
properties listed in Table~\ref{tab:bar-other}.

\begin{table}[h!]
  \centering
  \rowcolors{2}{blue!10}{white}
  \begin{tabular}{>{$}c<{$}|*{4}{>{$}c<{$}}}\toprule
    \multicolumn{1}{c|}{Structure} & d\xi^\flat & \xi^\flat \wedge  d\xi^\flat & \nabla^g_\xi\xi & \eL_\xi \xi^\flat \\\midrule
    \eB_{10} & \yes & \yes & \yes & \yes \\
    \eB_{20} & & \yes & \yes & \yes \\
    \eB_{23} & & \yes  & & \\
    \eB_{24} & & & \yes  & \yes \\
    \eB_{25} & & & & \yes\\\bottomrule
  \end{tabular}
  \vspace{1em}
  \caption{Other bargmannian structures\\
    ($\yes$ means the expression vanishes)}
  \label{tab:bar-other}
\end{table}

\begin{figure}[h!]
  \centering
  \begin{tikzpicture}[scale=1.3]
    %
    %
    \coordinate[label=above right:{\scriptsize $10$}] (b10) at (2,3); 
    \coordinate[label=right:{\scriptsize $20$}] (b20) at (2,2); 
    \coordinate[label=below right:{\scriptsize $23$}] (b23) at (1,1); 
    \coordinate[label=below right:{\scriptsize $24$}] (b24) at (0,1); 
    \coordinate[label=below right:{\scriptsize $25$}] (b25) at (0,0); 
    %
    %
    \tikzset{every path/.style={->,thick,gris,shorten >=3pt, shorten <=3pt}}
    \draw (b10) -- (b20);
    \draw (b20) -- (b23);
    \draw (b20) -- (b24);
    \draw (b23) -- (b25); 
    \draw (b24) -- (b25); 
    %
    %
    \foreach \point in {b10,b20,b23,b24,b25}
    \filldraw [color=black,fill=white] (\point) circle (1.5pt);
    \begin{scope}[opacity=0.15]
      \coordinate[label=right:{\scriptsize $0$}] (b0) at (-1,7); 
      \coordinate[label=left:{\scriptsize $1$}] (b1) at (-1,6); 
      \coordinate[label=left:{\scriptsize $2$}] (b2) at (-2,6); 
      \coordinate[label=right:{\scriptsize $3$}] (b3) at (-1,5); 
      \coordinate[label=above right:{\scriptsize $4$}] (b4) at (0,5); 
      \coordinate[label=left:{\scriptsize $5$}] (b5) at (-2,5); 
      \coordinate[label=left:{\scriptsize $6$}] (b6) at (0,4); 
      \coordinate[label=above right:{\scriptsize $7$}] (b7) at (1,4); 
      \coordinate[label=above left:{\scriptsize $8$}] (b8) at (-1,4); 
      \coordinate[label=left:{\scriptsize $9$}] (b9) at (-2,4); 
      \coordinate[label=above right:{\scriptsize $11$}] (b11) at (0,3); 
      \coordinate[label=above right:{\scriptsize $12$}] (b12) at (1,3); 
      \coordinate[label=above left:{\scriptsize $13$}] (b13) at (-2,3); 
      \coordinate[label=above left:{\scriptsize $14$}] (b14) at (-1,3); 
      \coordinate[label=left:{\scriptsize $15$}] (b15) at (-3,2); 
      \coordinate[label=above left:{\scriptsize $16$}] (b16) at (-1,2); 
      \coordinate[label=right:{\scriptsize $17$}] (b17) at (1,2); 
      \coordinate[label=left:{\scriptsize $18$}] (b18) at (-2,2); 
      \coordinate[label=below right:{\scriptsize $19$}] (b19) at (0,2); 
      \coordinate[label=below left:{\scriptsize $21$}] (b21) at (-2,1); 
      \coordinate[label=below left:{\scriptsize $22$}] (b22) at (-1,1); 
      \coordinate[label=right:{\scriptsize $26$}] (b26) at (0,-1);
      %
      %
      \tikzset{every path/.style={->,thick,black,shorten >=3pt, shorten <=3pt}}
      \draw (b0) -- (b1);
      \draw (b0) -- (b2);
      \draw (b2) -- (b3);
      \draw (b2) -- (b5);
      \draw (b1) -- (b3);
      \draw (b1) -- (b4);
      \draw (b3) -- (b8);
      \draw (b3) -- (b9);
      \draw (b4) -- (b6);
      \draw (b4) -- (b7);
      \draw (b4) -- (b8);
      \draw (b5) -- (b9);
      \draw (b9) -- (b13); 
      \draw (b8) -- (b11); 
      \draw (b8) -- (b12); 
      \draw (b8) -- (b13); 
      \draw (b8) -- (b14); 
      \draw (b6) -- (b10); 
      \draw (b6) -- (b11); 
      \draw (b7) -- (b10); 
      \draw (b7) -- (b12); 
      \draw (b13) -- (b15);
      \draw (b13) -- (b18);
      \draw (b13) -- (b19);
      \draw (b14) -- (b15);
      \draw (b14) -- (b16);
      \draw (b14) -- (b17);
      \draw (b11) -- (b16);
      \draw (b11) -- (b18);
      \draw (b11) -- (b20);
      \draw (b12) -- (b17);
      \draw (b12) -- (b19);
      \draw (b12) -- (b20);
      \draw (b15) -- (b21);
      \draw (b15) -- (b22);
      \draw (b18) -- (b21);
      \draw (b18) -- (b24);
      \draw (b16) -- (b21);
      \draw (b16) -- (b23);
      \draw (b19) -- (b22);
      \draw (b19) -- (b24);
      \draw (b17) -- (b22);
      \draw (b17) -- (b23);
      \draw (b21) -- (b25); 
      \draw (b22) -- (b25); 
      \draw (b25) -- (b26); 
      %
      %
      \foreach \point in {b0,b1,b2,b3,b4,b5,b8,b9,b13,b14,b15,b7,b12,b17,b19,b22,b6,b11,b16,b18,b21}
      \filldraw [color=white!70!black,fill=black!70!white] (\point) circle (1.5pt);
    \end{scope}
  \end{tikzpicture}
  \caption{Hasse diagram of other bargmannian structures}
  \label{fig:bar-hasse-other}
\end{figure}

We may summarise the preceding discussion as follows.

\begin{theorem}\label{thm:bar}
  Let\footnote{See Appendices~\ref{sec:three-dimens-bargm} for $n=2$
    and \ref{sec:six-dimens-bargm} for $n=5$.} $n>2$ and $n\neq 5$.  A
  bargmannian $G$-structure on an ($n+1$)-dimensional manifold
  $(M,g,\xi)$ can be of twenty-seven different classes depending on
  its intrinsic torsion.  These classes are summarised in
  Table~\ref{tab:bar-summary}, where each class is labelled by the
  smallest $G$-submodule of $\coker \d$ containing the intrinsic
  torsion and is characterised geometrically as indicated in the
  table.
\end{theorem}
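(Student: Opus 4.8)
The plan is to assemble the ingredients already established in Sections~\ref{sec:bar-group}--\ref{sec:bar-geom-char}. By Proposition~\ref{prop:bar-coker-d} we have $\coker\d \cong Z^\perp \otimes V^*$ as $G$-modules, so classifying bargmannian $G$-structures by their intrinsic torsion reduces to listing the $\g$-submodules of $Z^\perp\otimes V^*$. First I would enumerate these combinatorially: since (for $n>2$, $n\neq 5$) $Z^\perp\otimes V^*$ decomposes into the eight irreducible $\so(n-1)$-summands appearing in the dictionary table, a subset of those summands spans a $\g$-submodule exactly when it is closed under following the boost arrows in Figure~\ref{fig:hasse-bar-subspaces}; running through that diagram yields the twenty-seven submodules $\eB'_i$, and the dictionary transports them to the submodules $\eB_i\subset\coker\d$. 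Because the image of $T^\nabla$ in $\Gamma(P\times_G\coker\d)$ is independent of the adapted connection $\nabla$, labelling a structure by the smallest $\eB_i$ containing its intrinsic torsion is well defined, which already gives the twenty-seven classes; the substance of the theorem is the geometric characterisation.

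For the geometric part I would invoke Proposition~\ref{prop:intr-tors-is-nabla}, identifying the intrinsic torsion with $\nabla^g\xi\in\Omega^1(M,\xi^\perp)$, the corollary expressing $d\xi^\flat$, $\eL_\xi g=\Sigma$, $\eL_\xi\nu$ and $\eL_\xi\xi^\flat$ in terms of the torsion of an adapted connection, and the split $2g(\nabla^g_X\xi,Y)=\Sigma(X,Y)+d\xi^\flat(X,Y)$ into the symmetric part $B_{\mathrm{sym}}=\eL_\xi h$ (the second fundamental form of $\xi^\perp$) and the skew part $d\xi^\flat$. Table~\ref{tab:contributions} records which of these data each irreducible $\so(n-1)$-summand feeds. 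Reading it off, $\eB_{26}$ is the only structure for which $\xi^\perp$ need not be $\xi$-invariant (equivalently, $\xi$ need not be geodetic, by Lemma~\ref{lem:geodetic}); among the remaining twenty-six I would partition according to which of $\left<(P\otimes\pi)_{\odot_0^2}\right>$, $\left<(P\otimes\pi)_{\mathrm{tr}}\right>$, $\left<P\otimes\zeta\right>$ contribute to $B_{\mathrm{sym}}$, getting the totally geodesic family (none present, eleven structures: $\eB_0,\dots,\eB_5,\eB_8,\eB_9,\eB_{13},\eB_{14},\eB_{15}$), the minimal family (only $\left<(P\otimes\pi)_{\odot_0^2}\right>$ present, five structures), the totally umbilical family (only $\left<(P\otimes\pi)_{\mathrm{tr}}\right>$ present, five structures), and the remaining non-generic family (both present but not $\left<P\otimes\zeta\right>$, five structures), recovering Figures~\ref{fig:bar-hasse-tot-geod}--\ref{fig:bar-hasse-other}.

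Within each family I would then check that the patterns of vanishing of the residual connection-independent invariants --- $d\xi^\flat$, $\xi^\flat\wedge d\xi^\flat$, $\nabla^g_\xi\xi$, $\eL_\xi\xi^\flat$, and in the totally geodesic case also $\left.\nabla^g\xi\right|_{\xi^\perp}$ and $\overline{\nabla^g\xi}$ --- are pairwise distinct on the members of the family, tracing submodule by submodule through Table~\ref{tab:contributions} which data are forced to vanish; this is exactly the content of Tables~\ref{tab:bar-tot-geo}--\ref{tab:bar-other}. Collating those four tables and appending the row for the generic structure $\eB_{26}$ produces Table~\ref{tab:bar-summary}. The step I expect to be the main obstacle is the pair $\eB_9$ and $\eB_{13}$ in the totally geodesic family: by Table~\ref{tab:contributions} they agree on all of the invariants above, differing only in the $S_{\wedge^2}$ and $S_{\odot_0^2}$ components of $S=T^\nabla(\xi,-)$, which are attached to an adapted connection. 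The real task there is to find a manifestly $\nabla$-independent condition separating them --- or else to record, as the discussion around Table~\ref{tab:bar-tot-geo} does, that the clean geometric labels do not separate every pair --- and, relatedly, to confirm that the blank entries in the tables really are unconstrained by exhibiting (at least formally) bargmannian data realising each $\eB_i$.
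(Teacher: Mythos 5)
Your proposal is correct and follows essentially the same route as the paper: Proposition~\ref{prop:bar-coker-d} plus closure under the boost action of Figure~\ref{fig:hasse-bar-subspaces} to enumerate the twenty-seven submodules and transport them to $\coker\d$, then Proposition~\ref{prop:intr-tors-is-nabla}, the corollary on $d\xi^\flat$, $\eL_\xi g$, $\eL_\xi\nu$, $\eL_\xi\xi^\flat$, and Table~\ref{tab:contributions} to sort the non-generic classes into the totally geodesic, minimal, totally umbilical and remaining families and to separate their members, which is exactly how Theorem~\ref{thm:bar} summarises Sections~\ref{sec:bar-intr-tors}--\ref{sec:bar-geom-char}. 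The caveat you flag about $\eB_9$ versus $\eB_{13}$ (distinguished only via $\left.S\right|_{\xi^\perp}$, not by a manifestly $\nabla$-independent invariant) is precisely the reservation the paper itself records next to Table~\ref{tab:bar-tot-geo}, and the realisability question you raise is likewise left open in the paper's conclusions, so neither point counts against your argument relative to the paper's own proof.
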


\begin{table}[h!]
  \centering
  \rowcolors{2}{blue!10}{white}
    \begin{tabular}{>{$}c<{$}|c*{8}{>{$}c<{$}}}\toprule
      \multicolumn{1}{c|}{Structure} & \multicolumn{1}{c}{Type of $\xi^\perp$} & \multicolumn{8}{c}{Geometrical characterisation}\\\midrule
      \eB_0 & totally geodesic & \multicolumn{8}{c}{pp-wave ($\nabla^g\xi = 0$)} \\
      \eB_1 & totally geodesic & \multicolumn{8}{l}{$\left.\nabla^g\xi\right|_{\xi^\perp} = 0$} \\
      \eB_2 & totally geodesic & \multicolumn{4}{l}{$\eL_\xi g = 0$} & \multicolumn{4}{l}{$\xi^\flat \wedge d\xi^\flat = 0$}\\
      \eB_3 & totally geodesic & \multicolumn{4}{l}{$\overline{\nabla^g\xi} = 0$} & \multicolumn{4}{l}{$\xi^\flat \wedge d\xi^\flat = 0$}\\
      \eB_4 & totally geodesic & \multicolumn{8}{l}{$d\xi^\flat = 0$ ($\left.\nabla^g\xi\right|_{\xi^\perp} \neq 0$)} \\
      \eB_5 & totally geodesic & \multicolumn{4}{l}{$\eL_\xi g =0$} & \multicolumn{4}{l}{$\xi^\flat \wedge d\xi^\flat \neq 0$} \\
      \eB_6 & minimal & \multicolumn{8}{l}{$d\xi^\flat = 0$}\\
      \eB_7 & totally umbilical & \multicolumn{8}{l}{$d\xi^\flat = 0$}\\
      \eB_8 & totally geodesic & \multicolumn{4}{l}{$\xi^\flat \wedge d\xi^\flat =0$} & \multicolumn{4}{l}{$\eL_\xi \xi^\flat = 0$} \\
      \eB_9 & totally geodesic & \multicolumn{2}{l}{$\xi^\flat \wedge d\xi^\flat \neq 0$} & \multicolumn{2}{l}{$\nabla^g_\xi \xi =0$} & \multicolumn{2}{l}{$\eL_\xi \xi^\flat = 0$} & \multicolumn{2}{l}{$\left.S\right|_{\xi^\perp}=0$} \\
      \eB_{10} & other & \multicolumn{8}{l}{$d\xi^\flat = 0$}\\
      \eB_{11} & minimal & \multicolumn{4}{l}{$\xi^\flat \wedge d\xi^\flat = 0$} & \multicolumn{4}{l}{$\nabla^g_\xi \xi = 0$}\\
      \eB_{12} & totally umbilical & \multicolumn{4}{l}{$\xi^\flat \wedge d\xi^\flat = 0$} & \multicolumn{4}{l}{$\nabla^g_\xi \xi = 0$}\\
      \eB_{13} & totally geodesic & \multicolumn{2}{l}{$\xi^\flat \wedge d\xi^\flat \neq 0$} & \multicolumn{2}{l}{$\nabla^g_\xi \xi =0$} & \multicolumn{2}{l}{$\eL_\xi \xi^\flat = 0$} & \multicolumn{2}{l}{$\left.S\right|_{\xi^\perp}\neq0$} \\
      \eB_{14} & totally geodesic & \multicolumn{2}{l}{$\xi^\flat \wedge d\xi^\flat = 0$} & \multicolumn{2}{l}{$\nabla^g_\xi \xi \neq 0$} & \multicolumn{2}{l}{$\eL_\xi \xi^\flat \neq 0$} & & \\
      \eB_{15} & totally geodesic & \multicolumn{2}{l}{$\xi^\flat \wedge d\xi^\flat \neq 0$} & \multicolumn{2}{l}{$\nabla^g_\xi \xi \neq 0$} & \multicolumn{2}{l}{$\eL_\xi \xi^\flat \neq 0$} & & \\
      \eB_{16} & minimal & \multicolumn{4}{l}{$\xi^\flat \wedge d\xi^\flat = 0$} & \multicolumn{4}{l}{$\eL_\xi \xi^\flat \neq 0$}\\
      \eB_{17} & totally umbilical & \multicolumn{4}{l}{$\xi^\flat \wedge d\xi^\flat = 0$} & \multicolumn{4}{l}{$\eL_\xi \xi^\flat \neq 0$}\\
      \eB_{18} & minimal & \multicolumn{4}{l}{$\xi^\flat \wedge d\xi^\flat \neq 0$} & \multicolumn{4}{l}{$\nabla^g_\xi \xi= 0$}\\
      \eB_{19} & totally umbilical & \multicolumn{4}{l}{$\xi^\flat \wedge d\xi^\flat \neq 0$} & \multicolumn{4}{l}{$\nabla^g_\xi \xi= 0$}\\
      \eB_{20} & other & \multicolumn{4}{l}{$\xi^\flat \wedge d\xi^\flat = 0$} & \multicolumn{4}{l}{$\nabla^g_\xi \xi = 0$}\\
      \eB_{21} & minimal & \multicolumn{4}{l}{$\xi^\flat \wedge d\xi^\flat \neq 0$} & \multicolumn{4}{l}{$\nabla^g_\xi \xi\neq 0$}\\
      \eB_{22} & totally umbilical & \multicolumn{4}{l}{$\xi^\flat \wedge d\xi^\flat \neq 0$} & \multicolumn{4}{l}{$\nabla^g_\xi \xi\neq 0$}\\
      \eB_{23} & other & \multicolumn{4}{l}{$\xi^\flat \wedge d\xi^\flat = 0$} & \multicolumn{4}{l}{$\eL_\xi \xi^\flat \neq 0$}\\
      \eB_{24} & other & \multicolumn{4}{l}{$\xi^\flat \wedge d\xi^\flat \neq 0$} & \multicolumn{4}{l}{$\nabla^g_\xi \xi= 0$}\\
      \eB_{25} & other & \multicolumn{4}{l}{$\xi^\flat \wedge d\xi^\flat \neq 0$} & \multicolumn{4}{l}{$\nabla^g_\xi \xi\neq 0$}\\
      \eB_{26} & \multicolumn{9}{c}{generic bargmannian structure} \\
      \bottomrule
    \end{tabular}
  \vspace{1em}
  \caption{Summary of bargmannian structures}
  \label{tab:bar-summary}
\end{table}

\subsection{Correspondences between bargmannian, galilean and carrollian
  structures}
\label{sec:corr-betw-bargm}

As pioneered in \cite{Duval:2014uoa}, bargmannian structures may be
related to galilean and carrollian structures and the interplay
between these structures can prove to be very useful.

\subsubsection{Bargmannian structures reducing to galilean structures}
\label{sec:bar-struc-gal}

These are the bargmannian structures where $\xi$ is a Killing vector:
$\eL_\xi g = 0$.  Let us assume for the purposes of exposition that
$\xi$ generates the action of a one-dimensional Lie group $\Gamma$ and
we can perform the null reduction of the bargmannian structure as in
\cite{PhysRevD.31.1841,Julia:1994bs}.

Indeed, we may view $M$ as the total space of a principal
$\Gamma$-bundle $\pi: M \to N$ over an $n$-dimensional manifold
$N = M/\Gamma$. The one-form $\xi^\flat$ is both horizontal (since
$\xi$ is null) and invariant (since $\xi$ is Killing). Then
$\xi^\flat = \pi^*\tau$ for a nowhere-vanishing one-form
$\tau \in \Omega^1(N)$. If $\alpha,\beta \in \Omega^1(N)$, then
$g((\pi^*\alpha)^\sharp, (\pi^*\beta)^\sharp)$, where
$\sharp: \Omega^1(M) \to \eX(M)$ is one of the musical isomorphisms
associated to $g$, is a $\Gamma$-invariant function on $M$ since so
are $g$, $\pi^*\alpha$ and $\pi^*\beta$. We can define
$\gamma \in \Gamma(\odot^2 TN)$ by
$\pi^*\gamma(\alpha,\beta) = g((\pi^*\alpha)^\sharp,
(\pi^*\beta)^\sharp)$. Notice that $\gamma(\tau,\alpha) = 0$ since
$(\pi^*\tau)^\sharp = \xi$ and hence for all
$\alpha \in \Omega^1(N)$,
\begin{equation}
  g((\pi^*\tau)^\sharp, (\pi^*\alpha)^\sharp) = g(\xi,(\pi^*\alpha)^\sharp) =
  (\pi^*\alpha)(\xi)= \pi^* (\alpha(\pi_*\xi) = 0.
\end{equation}
It follows that $(N, \tau, \gamma)$ is a galilean structure and we may
distinguish these bargmannian structures by which of the three galilean
structures they induce.

It turns out that there are precisely three bargmannian structures where
$\xi$ is Killing: $\eB_0$, $\eB_2$ and $\eB_5$, and they can be
distinguished by the galilean structure induced on their null
reductions.

\begin{itemize}
\item[($\eB_0$)]
  Here $\nabla^g\xi = 0$ and hence $g$ is a Brinkmann
  metric (i.e., a generalised pp-wave).  Since $\nabla^g\xi^\flat =
  0$, it follows that $d\xi^\flat = 0$ and hence the null reduction
  gives rise to a torsionless Newton--Cartan structure.

\item[($\eB_2$)]
  Here $d\xi^\flat \neq 0$ but $\xi^\flat \wedge d\xi^\flat = 0$, so that
  the null reduction gives a twistless torsional Newton--Cartan
  structure.

\item[($\eB_5$)]
  Here $\xi^\flat \wedge d\xi^\flat \neq 0$, so that the null
  reduction gives a torsional Newton--Cartan structure.
\end{itemize}



\subsubsection{Bargmannian structures with embedded carrollian  structures}
\label{sec:bar-struc-car}

As shown in \cite{Duval:2014uoa} (see also \cite{Hartong:2015xda}), a
null hypersurface in a lorentzian manifold admits a carrollian
structure.  A bargmannian manifold $(M,g,\xi)$ where
$\xi^\flat \wedge d\xi^\flat = 0$, is foliated by null hypersurfaces
and we can relate the carrollian structure on the null hypersurfaces
to the ambient bargmannian structure.

\begin{lemma}
  If $d\xi^\flat = 0$, the vector field $\xi$ is self-parallel
  relative to the Levi-Civita connection: $\nabla^g_\xi \xi = 0$,
  whereas if $d\xi^\flat \neq 0$ but $\xi^\flat \wedge d\xi^\flat =
  0$, then $\nabla^g_\xi\xi = f \xi$ for some nonzero function $f \in
  C^\infty(M)$.
\end{lemma}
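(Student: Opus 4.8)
The plan is to reduce everything to the torsion-freeness of $\nabla^g$, which yields the pointwise identity
\begin{equation*}
  d\xi^\flat(X,Y) = g(\nabla^g_X\xi,Y) - g(\nabla^g_Y\xi,X), \qquad\forall~X,Y\in\eX(M),
\end{equation*}
together with the fact that $\xi$ is null: differentiating $g(\xi,\xi)=0$ and using $\nabla^g g=0$ gives $g(\nabla^g_Y\xi,\xi)=0$ for all $Y$, so $\nabla^g_Y\xi$ is always perpendicular to $\xi$. The displayed identity is the skew-symmetric part of the relation $2g(\nabla^g_X\xi,Y)=\Sigma(X,Y)+d\xi^\flat(X,Y)$ recorded above, but I would prove it directly from $d\xi^\flat(X,Y)=X\xi^\flat(Y)-Y\xi^\flat(X)-\xi^\flat([X,Y])$ by expanding the three terms with $\nabla^g$.

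For the first claim, suppose $d\xi^\flat=0$. Setting $X=\xi$ in the identity and using $g(\nabla^g_Y\xi,\xi)=0$ gives $g(\nabla^g_\xi\xi,Y)=0$ for all $Y$, and non-degeneracy of $g$ forces $\nabla^g_\xi\xi=0$. It is worth recording that the same computation proves the slightly stronger fact that $\imath_\xi d\xi^\flat=0$ already implies $\nabla^g_\xi\xi=0$.

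For the second claim, suppose $d\xi^\flat\neq0$ but $\xi^\flat\wedge d\xi^\flat=0$. The quickest route: $\xi^\flat\wedge d\xi^\flat=0$ is the involutivity of $\xi^\perp=\ker\xi^\flat$, hence in particular $[\xi,\xi^\perp]\subset\xi^\perp$, so Lemma~\ref{lem:geodetic} gives that $\xi$ is geodetic, i.e.\ $\nabla^g_\xi\xi=f\xi$ for some $f\in C^\infty(M)$. A more explicit route, which also produces $f$: invoke Appendix~\ref{sec:hypers-orth} to write $d\xi^\flat=\xi^\flat\wedge\alpha$ for a global one-form $\alpha$, and compute, using $\xi^\flat(\xi)=g(\xi,\xi)=0$,
\begin{equation*}
  g(\nabla^g_\xi\xi,Y) = d\xi^\flat(\xi,Y) = \xi^\flat(\xi)\,\alpha(Y) - \xi^\flat(Y)\,\alpha(\xi) = -\alpha(\xi)\,g(\xi,Y),
\end{equation*}
so $\nabla^g_\xi\xi=f\xi$ with $f=-\alpha(\xi)$, which is well defined since $\alpha\mapsto\alpha+c\xi^\flat$ fixes $\alpha(\xi)$.

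The delicate point, and the one I expect to be the main obstacle, is the assertion that $f$ is \emph{nonzero}. By the stronger form of the first part, $f$ vanishes identically precisely when $\imath_\xi d\xi^\flat=0$, equivalently when $\alpha$ above may be chosen with $\alpha(\xi)=0$; and this is compatible with $d\xi^\flat\neq0$ (for instance flat $g$ with $\xi=e^{\phi(x)}\partial_v$ in double-null coordinates gives $d\xi^\flat\neq0$, $\xi^\flat\wedge d\xi^\flat=0$ yet $\nabla^g_\xi\xi=0$). I would therefore read the statement as ``$\nabla^g_\xi\xi=f\xi$ for some function $f$ which, in contrast to the first case, need not vanish'', the genuinely nonzero regime being isolated by the extra hypothesis $\imath_\xi d\xi^\flat\neq0$ — which is the distinction that actually matters downstream for separating bargmannian classes such as $\eB_{14}$ from $\eB_{15}$.
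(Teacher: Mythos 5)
Your argument for the existence of $f$ with $\nabla^g_\xi\xi = f\xi$ is in all essentials the paper's own: invoke Proposition~\ref{prop:hso} to write $d\xi^\flat = \xi^\flat\wedge\alpha$, expand $d\xi^\flat(\xi,Y)$ via the Levi-Civita connection, and use $g(\nabla^g_Y\xi,\xi)=0$. The paper also derives $\nabla^g_\xi\xi = \alpha(\xi)\xi$ in this way; your alternative fast route via Lemma~\ref{lem:geodetic} is equivalent. The only discrepancy with the paper is a sign convention ($\alpha\wedge\xi^\flat$ versus $\xi^\flat\wedge\alpha$ in two consecutive lines of the paper's proof), which is immaterial.

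Your second observation is a genuine and correct catch, and it does expose a gap in the paper's proof. The paper's argument, having deduced $\nabla^g_\xi\xi = \alpha(\xi)\,\xi$, simply asserts ``otherwise $f := \alpha(\xi)$ is not identically zero,'' but this does not follow from $d\xi^\flat\neq0$. Since $\imath_\xi(\xi^\flat\wedge\alpha) = -\alpha(\xi)\,\xi^\flat$ and $\xi^\flat$ is nowhere-vanishing, the condition $\alpha(\xi)\equiv 0$ is equivalent to $\imath_\xi d\xi^\flat = 0$, which is strictly weaker than $d\xi^\flat=0$. Your flat example ($\xi^\flat = e^{\phi(x)}\,du$ in double-null coordinates, with $\phi$ non-constant and depending only on the transverse coordinates) has $d\xi^\flat\neq0$, $\xi^\flat\wedge d\xi^\flat=0$ and $\nabla^g_\xi\xi=0$, falsifying the ``nonzero'' clause as stated. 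The clean dichotomy is the one you propose: $\nabla^g_\xi\xi = f\xi$ with $f\equiv 0$ precisely when $\imath_\xi d\xi^\flat = 0$, equivalently $\eL_\xi\xi^\flat = 0$ (the two coincide, and moreover $\eL_\xi\xi^\flat = (\nabla^g_\xi\xi)^\flat$ by a one-line computation using $\nabla^g g = 0$, zero torsion and $g(\xi,\xi)=0$). Since the actual classification in Theorem~\ref{thm:bar} is carried by the submodules of $\coker\d$ and tracks $\nabla^g_\xi\xi$ and $\eL_\xi\xi^\flat$ directly, nothing downstream breaks, but the lemma should either drop the word ``nonzero'' or add the hypothesis $\imath_\xi d\xi^\flat\neq 0$.
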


\begin{proof}
  Let $\xi^\flat \wedge d\xi^\flat =0$.  Then by
  Proposition~\ref{prop:hso}, there exists some
  $\alpha \in \Omega^1(M)$ such that
  $d\xi^\flat = \alpha \wedge \xi^\flat$.  The one-form $\alpha$ is
  defined up to the addition of a one-form $f \xi^\flat$ for some
  $f \in C^\infty(M)$.  If $d\xi^\flat = 0$ we can choose
  $\alpha = 0$.

  For all $X, Y \in \eX(M)$, the equation $d\xi^\flat  = \xi^\flat
  \wedge \alpha$ becomes
  \begin{equation*}
    X g(\xi, Y) - Y g(\xi,X) - g(\xi,[X,Y]) = \alpha(X) g(\xi,Y) - g(\xi,X) \alpha(Y).
  \end{equation*}
  Putting $Y=\xi$, and using that $g(\xi,\xi) = 0$, we have that
  \begin{equation*}
    \xi g(\xi, X) + g(\xi, [X,\xi]) - g(\xi,X) \alpha(\xi) = 0.
  \end{equation*}
  We use that $\nabla^g$ is metric to expand the first term as
  \begin{equation*}
    \xi g(\xi, X)  =g(\nabla^g_\xi \xi, X) + g(\xi, \nabla^g_\xi X),
  \end{equation*}
  resulting in
  \begin{equation*}
    g(\nabla^g_\xi \xi - \alpha(\xi)\xi, X) + g(\xi, \nabla^g_\xi X +
    [X,\xi]) = 0.
  \end{equation*}
  Using that $\nabla^g$ has zero torsion, $\nabla^g\xi X + [X,\xi] =
  \nabla^g_X \xi$ and hence the second term becomes $g(\xi,
  \nabla^g_X\xi)$, which vanishes since this is half the derivative of
  $g(\xi,\xi)$ along $X$ and $\xi$ is null.  This leaves the first
  term: since $g$ is nondegenerate and $X \in \eX(M)$ is arbitrary, we
  conclude that $\nabla^g_\xi \xi = \alpha(\xi)\xi$.  It follows that
  if $d\xi^\flat = 0$ then $\nabla^g_\xi \xi = 0$, otherwise
  $f:=\alpha(\xi)$ is not identically zero and hence $\nabla^g_\xi \xi
  = f \xi$.
\end{proof}

It bears repeating that there is no converse to the above
result: there are bargmannian structures with $\nabla^g_\xi \xi = 0$
for which $d\xi^\flat \neq 0$ and bargmannian structures with
$\nabla^g_\xi \xi = f \xi$ for which $\xi^\flat \wedge d\xi^\flat \neq
0$.

If $M$ is orientable, then since $\xi^\flat$ is null, we have that
$\xi^\flat \wedge \star \xi^\flat = 0$, where $\star$ is the Hodge
star. This says that $\star \xi^\flat = \xi^\flat \wedge \mu$, for
some $\mu \in \Omega^{n-1}(M)$ which is defined up to the addition of
a term $\xi^\flat \wedge \varphi$ for some
$\varphi \in \Omega^{n-2}(M)$. In particular, $\mu$ is well defined on
the distribution $\xi^\perp$ and gives a ``volume form'' on the associated
null hypersurfaces, which is precisely the volume form of the
carrollian structure, when it exists.  Even if $M$ is not orientable,
$\mu$ exists locally.

\begin{proposition}\label{prop:bar-car-struct}
  Let $(M,g,\xi)$ be a bargmannian structure with $\xi^\perp = \ker \xi^\flat$
  involutive. Then any affine connection $\nabla$ on $M$ adapted
  to the bargmannian structure induces a connection on every leaf $N$ of $\xi^\perp$
  which is adapted to the carrollian structure on $N$ and whose
  torsion is the restriction of $T^\nabla$ to $N$.
\end{proposition}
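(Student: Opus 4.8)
The plan is to define the induced connection by restriction and then verify the three claimed properties directly. Since $\xi^\perp = \ker\xi^\flat$ is assumed involutive, the Frobenius theorem produces the foliation of $M$ by null hypersurfaces $N$ with $T_pN = \xi^\perp_p$, and --- because $\xi$ is null, so that $\xi \in \Gamma(\xi^\perp)$ --- the vector field $\xi$ restricts to a nowhere-vanishing vector field on each leaf $N$, while $g$ restricts on $\xi^\perp$ to a positive-semidefinite corank-one metric $h$ with $h(\xi,-)=0$; this is the carrollian structure on $N$. Given vector fields $X,Y\in\eX(N)$ tangent to a leaf, I would extend them locally to sections $\widetilde X,\widetilde Y\in\Gamma(\xi^\perp)$ of the distribution on a neighbourhood in $M$ and set $\nabla^N_XY := (\nabla_{\widetilde X}\widetilde Y)\big|_N$.

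The first thing to check is that this lands in $TN$ and is independent of the extensions, so that $\nabla^N$ is a genuine affine connection on $N$. Here the only substantive point is that $\nabla_{\widetilde X}\widetilde Y\in\Gamma(\xi^\perp)$: since $\nabla$ is adapted to the bargmannian structure we have $\nabla g = 0$ and $\nabla\xi = 0$, hence $g(\nabla_{\widetilde X}\widetilde Y,\xi) = \widetilde X\, g(\widetilde Y,\xi) - g(\widetilde Y,\nabla_{\widetilde X}\xi) = 0$ because $\widetilde Y\perp\xi$. Independence of the extensions is then the usual locality argument: $(\nabla_{\widetilde X}\widetilde Y)_p$ depends only on $\widetilde X_p = X_p\in T_pN$ and on $\widetilde Y$ along a curve through $p$ with velocity $X_p$, and such a curve can be taken inside $N$. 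Tensoriality in $X$ and the Leibniz rule in $Y$ are inherited from $\nabla$, so $\nabla^N$ is well defined.

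Next I would verify adaptedness and the torsion identity. Adaptedness means $\nabla^N\xi = 0$ and $\nabla^N h = 0$ on $N$. The first is immediate: $\nabla^N_X\xi = (\nabla_X\xi)\big|_N = 0$. For the second, for $X,Y,Z\in\eX(N)$ one has $h(Y,Z) = g(Y,Z)$ and $\nabla^N_XY,\nabla^N_XZ\in\Gamma(TN)$, so $(\nabla^N_Xh)(Y,Z) = X\,g(Y,Z) - g(\nabla_XY,Z) - g(Y,\nabla_XZ) = (\nabla_Xg)(Y,Z) = 0$. For the torsion, involutivity gives $[\widetilde X,\widetilde Y]\in\Gamma(\xi^\perp)$ and its restriction to $N$ is the Lie bracket of $X$ and $Y$ on $N$; therefore $T^{\nabla^N}(X,Y) = \nabla_XY - \nabla_YX - [X,Y]\big|_N = T^\nabla(X,Y)\big|_N$. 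Note this computation also shows $T^\nabla(X,Y)\in\xi^\perp = TN$ whenever $X,Y\in TN$, so the restriction of $T^\nabla$ to $N$ is indeed a $TN$-valued two-form, as the statement requires.

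The proof is essentially bookkeeping; the place where the hypotheses do real work is twofold: adaptedness ($\nabla\xi = 0$) is what keeps $\nabla_XY$ tangent to the leaves, and involutivity of $\xi^\perp$ is what makes the leaves exist and makes the bracket of leafwise fields restrict correctly. Everything else is forced.
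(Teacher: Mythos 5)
Your proposal is correct and follows essentially the same route as the paper: show that adaptedness ($\nabla g=0$, $\nabla\xi=0$, equivalently $\nabla\xi^\flat=0$) forces $\nabla_X Y\in\Gamma(\xi^\perp)$ for $Y\in\Gamma(\xi^\perp)$, so the connection restricts to each leaf, then note that the restricted $\xi$ and $h$ remain parallel and that involutivity makes the bracket (and hence the torsion) restrict. The extra bookkeeping about extensions and well-definedness is fine but not a different argument.
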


\begin{proof}
  Since $\xi$ and $g$ are parallel, it follows that so is $\xi^\flat$:
  \begin{equation*}
    \xi^\flat(\nabla_X Y)  = X \xi^\flat(Y)
  \end{equation*}
  for all $X,Y \in \eX(M)$. In particular, if $Y \in
  \Gamma(\xi^\perp)$, so that $\xi^\flat(Y) = 0$, then $\nabla_X Y \in
  \Gamma(\xi^\perp)$ for all $X \in \eX(M)$. In other words, $\nabla$ induces
  a connection on the distribution or, equivalently, an affine
  connection on every leaf of the associated foliation.  Since $\xi$
  and $g$ are parallel, so are their restriction to the leaves of the
  foliation and hence the induced connection is adapted to the
  carrollian structure.  Finally, notice that if $X,Y \in \Gamma(\xi^\perp)$,
  then
  \begin{equation}\label{eq:torsion-D}
    T^\nabla(X,Y) = \nabla_X Y - \nabla_Y X - [X,Y] \in \Gamma(\xi^\perp),
  \end{equation}
  where we have used that $\xi^\perp$ is involutive to show that
  $T^\nabla(X,Y) \in \Gamma(\xi^\perp)$.  Finally we notice that by
  definition, $T^\nabla$ given by equation~\eqref{eq:torsion-D} is the
  torsion of the induced connection.
\end{proof}

The different classes of null hypersurfaces can be distinguished by
their second fundamental form.  Explicitly, the condition $B=0$ is
equivalent to $\eL_\xi h = 0$:
\begin{equation}\label{eq:bar-perp-killing}
  g(\nabla^g_X \xi, Y) + g(\nabla^g_Y \xi, X) = 0,\quad\forall~X,Y \perp \xi,
\end{equation}
whereas the condition $B = f h$ is equivalent to $\eL_\xi h = f h$:
\begin{equation}\label{eq:bar-perp-ckv}
  g(\nabla^g_X \xi, Y) + g(\nabla^g_Y \xi, X) = \tfrac2{n-1} g(X,Y)
  \div \xi\neq 0,\quad\forall~X,Y \perp \xi,
\end{equation}
where $\div \xi$ is the Levi-Civita divergence of $\xi$.  Finally, the
condition that $\tr B = 0$ is equivalent to $\eL_\xi \mu = 0$.

Therefore we see that the type of carrollian structure induced on the
null hypersurfaces corresponds with the type of
the distribution $\xi^\perp$.  This suggests that we rename the four types
of carrollian structures in Theorem~\ref{thm:car} as \textbf{totally geodesic}
(if $\eL_\xi h =0$), \textbf{minimal} (if $\eL_\xi \mu = 0$),
\textbf{totally umbilical} (if $\eL_\xi h = f h$) and otherwise
\textbf{generic}.

\begin{figure}[h!]
  \centering
  \begin{tikzpicture}[scale=1.3]
    %
    %
    \coordinate[label=right:{\scriptsize $0$}] (b0) at (-1,7); 
    \coordinate[label=left:{\scriptsize $1$}] (b1) at (-1,6); 
    \coordinate[label=left:{\scriptsize $2$}] (b2) at (-2,6); 
    \coordinate[label=right:{\scriptsize $3$}] (b3) at (-1,5); 
    \coordinate[label=above right:{\scriptsize $4$}] (b4) at (0,5); 
    \coordinate[label=left:{\scriptsize $6$}] (b6) at (0,4); 
    \coordinate[label=above right:{\scriptsize $7$}] (b7) at (1,4); 
    \coordinate[label=above left:{\scriptsize $8$}] (b8) at (-1,4); 
    \coordinate[label=above right:{\scriptsize $10$}] (b10) at (2,3); 
    \coordinate[label=above right:{\scriptsize $11$}] (b11) at (0,3); 
    \coordinate[label=above right:{\scriptsize $12$}] (b12) at (1,3); 
    \coordinate[label=above left:{\scriptsize $14$}] (b14) at (-1,3); 
    \coordinate[label=above left:{\scriptsize $16$}] (b16) at (-1,2); 
    \coordinate[label=right:{\scriptsize $17$}] (b17) at (1,2); 
    \coordinate[label=right:{\scriptsize $20$}] (b20) at (2,2); 
    \coordinate[label=below right:{\scriptsize $23$}] (b23) at (1,1); 
    %
    %
    \tikzset{every path/.style={->,thick,gris,shorten >=3pt, shorten <=3pt}}
    \draw (b0) -- (b1);
    \draw (b0) -- (b2);
    \draw (b2) -- (b3);
    \draw (b1) -- (b3);
    \draw (b1) -- (b4);
    \draw (b3) -- (b8);
    \draw (b4) -- (b6);
    \draw (b4) -- (b7);
    \draw (b4) -- (b8);
    \draw (b8) -- (b11); 
    \draw (b8) -- (b12); 
    \draw (b8) -- (b14); 
    \draw (b6) -- (b10); 
    \draw (b6) -- (b11); 
    \draw (b7) -- (b10); 
    \draw (b7) -- (b12); 
    \draw (b14) -- (b16);
    \draw (b14) -- (b17);
    \draw (b11) -- (b16);
    \draw (b11) -- (b20);
    \draw (b12) -- (b17);
    \draw (b12) -- (b20);
    \draw (b10) -- (b20);
    \draw (b16) -- (b23);
    \draw (b17) -- (b23);
    \draw (b20) -- (b23);
    %
    %
    \foreach \point in {b0,b1,b2,b3,b4,b8,b14}
    \filldraw [color=blue!70!black,fill=blue!70!white] (\point) circle (1.5pt);
    \foreach \point in {b7,b12,b17}
    \filldraw [color=red!70!black,fill=red!70!white] (\point) circle (1.5pt);
    \foreach \point in {b6,b11,b16}
    \filldraw [color=green!70!black,fill=green!70!white] (\point) circle (1.5pt);
    \foreach \point in {b10,b20,b23}
    \filldraw [color=black,fill=white] (\point) circle (1.5pt);
    %
    %
    \draw [black,thick] (2.8,6.8) -- (2.8,4.8) -- (5.3,4.8) -- (5.3,6.8) -- cycle;
    \filldraw [color=blue!70!black,fill=blue!70!white] (3,6.5) circle (1.5pt) node [right,black] {totally geodesic}; 
    \filldraw [color=green!70!black,fill=green!70!white] (3,6) circle (1.5pt) node [right,black] {minimal};
    \filldraw [color=red!70!black,fill=red!70!white] (3,5.5) circle (1.5pt) node [right,black] {totally umbilical}; 
    \filldraw [color=black,fill=white] (3,5) circle (1.5pt) node  [right,black] {none of the above};
    \begin{scope}[opacity=0.15]
      \coordinate[label=left:{\scriptsize $5$}] (b5) at (-2,5); 
      \coordinate[label=left:{\scriptsize $9$}] (b9) at (-2,4); 
      \coordinate[label=above left:{\scriptsize $13$}] (b13) at (-2,3); 
      \coordinate[label=left:{\scriptsize $15$}] (b15) at (-3,2); 
      \coordinate[label=left:{\scriptsize $18$}] (b18) at (-2,2); 
      \coordinate[label=below right:{\scriptsize $19$}] (b19) at (0,2); 
      \coordinate[label=below left:{\scriptsize $21$}] (b21) at (-2,1); 
      \coordinate[label=below left:{\scriptsize $22$}] (b22) at (-1,1); 
      \coordinate[label=below right:{\scriptsize $24$}] (b24) at (0,1); 
      \coordinate[label=below right:{\scriptsize $25$}] (b25) at (0,0); 
      \coordinate[label=right:{\scriptsize $26$}] (b26) at (0,-1);
      %
      %
      \tikzset{every path/.style={->,thick,black,shorten >=3pt, shorten <=3pt}}
      \draw (b2) -- (b5);
      \draw (b3) -- (b9);
      \draw (b5) -- (b9);
      \draw (b9) -- (b13); 
      \draw (b8) -- (b13); 
      \draw (b13) -- (b15);
      \draw (b13) -- (b18);
      \draw (b13) -- (b19);
      \draw (b14) -- (b15);
      \draw (b11) -- (b18);
      \draw (b12) -- (b19);
      \draw (b15) -- (b21);
      \draw (b15) -- (b22);
      \draw (b18) -- (b21);
      \draw (b18) -- (b24);
      \draw (b16) -- (b21);
      \draw (b19) -- (b22);
      \draw (b19) -- (b24);
      \draw (b17) -- (b22);
      \draw (b20) -- (b24);
      \draw (b21) -- (b25); 
      \draw (b22) -- (b25); 
      \draw (b23) -- (b25); 
      \draw (b24) -- (b25); 
      \draw (b25) -- (b26); 
      %
      %
      \foreach \point in {b5,b9,b13,b15,b19,b22,b18,b21,b24,b25,b26}
      \filldraw [color=white!70!black,fill=black!70!white] (\point) circle (1.5pt);
    \end{scope}
  \end{tikzpicture}
  \caption{Hasse diagram of bargmannian structures with involutive $\xi^\perp$}
  \label{fig:bar-hasse-involutive}
\end{figure}
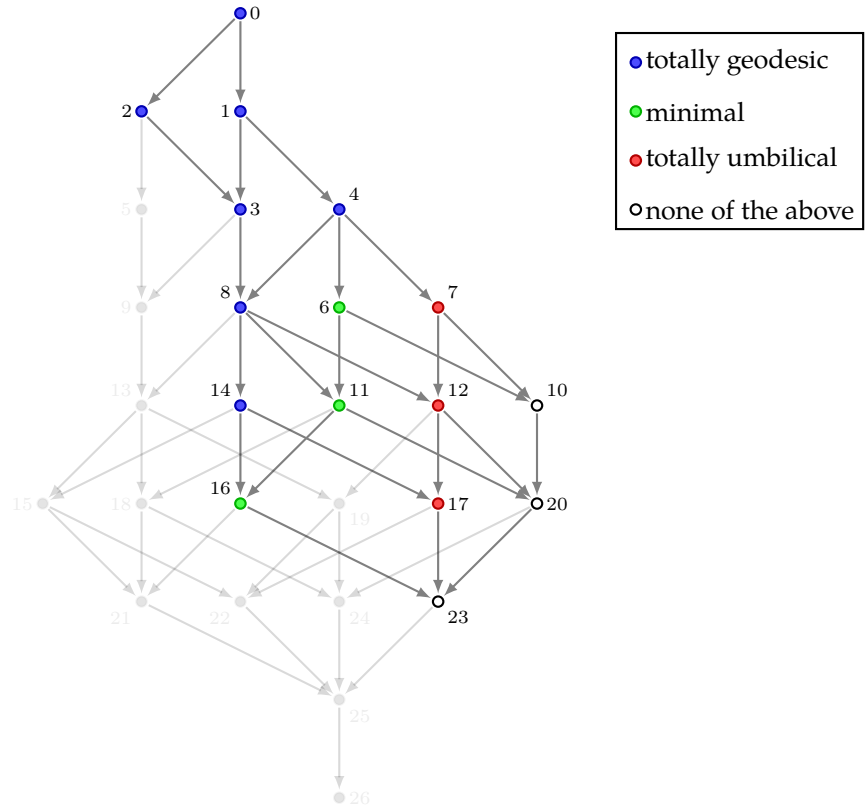

\section{Conclusions}
\label{sec:conclusions}

In this paper we have studied spacetime structures from the point of
view of $G$-structures and studied their intrinsic torsion.  Whereas
this provides no information for the case of lorentzian spacetimes,
the situation for non-lorentzian spacetimes is very different.  As
Theorem~\ref{thm:gal} shows, the classification of galilean structures
by intrinsic torsion coincides with the classification of
Newton--Cartan geometries in \cite{Christensen:2013lma} into what
those authors call torsionless, twistless torsional and torsional
Newton--Cartan geometries.  As Theorem~\ref{thm:car} shows there are
4 types of carrollian structures, which as discussed in
Section~\ref{sec:bar-struc-car}, may be distinguished by the
geometrical properties of the null hypersurfaces of bargmannian manifolds
into which they embed: totally geodesic, totally umbilical, minimal
or generic.  The intersection of the galilean and
carrollian structures consists of the aristotelian structures, and as
Theorem~\ref{thm:ari} shows, there are 16 classes depending on
their intrinsic torsion.  As advocated in \cite{Duval:2014uoa},
bargmannian structures are a subclass of lorentzian structures which are
intimately linked with both galilean and carrollian structures.  The
study of the intrinsic torsion of bargmannian structures is surprisingly
rich and as Theorem~\ref{thm:bar} shows there are 27 bargmannian
structures, many of which can be related to galilean and carrollian
structures in a way made explicit in
Section~\ref{sec:corr-betw-bargm}.  We find that all three classes of
galilean structures can arise as null reductions of bargmannian
structures, whereas all four classes of carrollian structures can be
induced from suitable bargmannian structures by restriction to null
hypersurfaces integrating the distribution $\xi^\perp$.

The above results hold in generic dimension, which means that $n \neq
2,5$.  As shown in Appendix~\ref{sec:some-spec-dimens}, there are only
2 galilean and carrollian $G$-structures in two dimensions, and hence
4 aristotelian structures, whereas there are 11 three-dimensional
bargmannian structures.  Similarly, there are 5 five-dimensional
galilean structures, 32 five-dimensional aristotelian structures and
47 six-dimensional bargmannian structures.

It remains to understand whether all the different classes of
(five-dimensional) galilean, aristotelian and bargmannian structures
can be realised geometrically or whether, as is the case with
$G_2$-structures on $7$-manifolds  \cite{FernandezGray,CMS}, some of
the inclusions between the different classes (e.g., those in
Figure~\ref{fig:bar-hasse-submodules} for bargmannian structures) are
not strict.

The classification of $G$-structures via intrinsic torsion is still
somewhat coarse -- after all, the intrinsic torsion is the first of a
sequence of obstructions to the integrability of the $G$-structure --
but the results in this paper may help to add some structure to the
zoo of non-lorentzian geometries.

\section*{Acknowledgments}

I would like to acknowledge fruitful conversations and correspondence
on these and related topics with Jelle Hartong, James Lucietti, Stefan
Prohazka and Andrea Santi. I am particularly grateful to Jelle Hartong
for organising an online
\href{https://indico.nbi.ku.dk/event/1374/page/564-lectures-on-non-lorentzian-g-structures-july-141517}{lecture
  series} on this topic which I delivered in the framework of the
\href{https://indico.nbi.ku.dk/event/1374/}{NL Zoom meetings 2020}.  I
would like to thank the participants of these meetings for their
interest and the many questions, which I hope have improved the
presentation in this paper.  My own interest in this topic was
re-awakened thanks to correspondence with Dieter Van den Bleeken and I
would also like to record my gratitude to him, especially for the
careful reading of a previous version of this paper.  Last, but by no
means least, I would like to dedicate this paper to Dmitri Alekseevsky
on his eightieth birthday, in hopes that he might derive some pleasure
in seeing some familiar structures in a possibly novel context.

\appendix

\begin{appendices}

\section{Hypersurface orthogonality}
\label{sec:hypers-orth}

It is of course a well-known fact that if a nowhere-vanishing one-form
$\tau \in \Omega^1(M)$ satisfies $d\tau \wedge \tau = 0$ then there
exists a one-form $\omega \in \Omega^1(M)$ such that
$d\tau = \tau \wedge \omega$. The statement is ubiquitous in the
literature, but the proof is not. In this appendix I record a proof of
this fact. Of course the condition simply says that the characteristic
distribution $\ker\tau \subset TM$ is Frobenius integrable and hence
$M$ is foliated by hypersurfaces whose tangent spaces agree with
$\ker \tau$. By abuse of language one says that $\tau$ is hypersurface
orthogonal, a concept taken from riemannian geometry where the vector
field dual to $\tau$ would indeed be orthogonal to the hypersurfaces
integrating $\ker \tau$.

\begin{proposition}\label{prop:hso}
  Let $\tau \in \Omega^1(M)$ be nowhere vanishing.  Then the following
  are equivalent
  \begin{enumerate}
  \item $d\tau \wedge \tau = 0$
  \item $d\tau = \tau \wedge \omega$, for some $\omega \in \Omega^1(M)$.
  \end{enumerate}
\end{proposition}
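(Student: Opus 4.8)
The plan is to prove the two implications separately; the implication $(2) \Rightarrow (1)$ is immediate, while $(1) \Rightarrow (2)$ is the substantive one and I would handle it by a local-to-global argument. For $(2) \Rightarrow (1)$: if $d\tau = \tau \wedge \omega$ for some $\omega \in \Omega^1(M)$, then $d\tau \wedge \tau = \tau \wedge \omega \wedge \tau = -\,\tau \wedge \tau \wedge \omega = 0$ because $\tau \wedge \tau = 0$, and that is all that is needed.

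For $(1) \Rightarrow (2)$ the first step is to establish the conclusion locally. Fix $p \in M$; since $\tau$ is nowhere vanishing we may choose one-forms $\theta^1,\dots,\theta^{n-1}$ on a neighbourhood $U$ of $p$ (writing $n = \dim M$) so that $(\tau,\theta^1,\dots,\theta^{n-1})$ is a coframe on $U$. Expanding the two-form $d\tau$ in this coframe,
\[
  d\tau = \tau \wedge \beta + \sum_{1 \le i < j \le n-1} a_{ij}\,\theta^i \wedge \theta^j
\]
for some $\beta \in \Omega^1(U)$ and functions $a_{ij} \in C^\infty(U)$. Wedging with $\tau$ annihilates the first term and yields $d\tau \wedge \tau = \sum_{i<j} a_{ij}\,\theta^i \wedge \theta^j \wedge \tau$; the hypothesis $d\tau \wedge \tau = 0$ together with the linear independence of the three-forms $\theta^i \wedge \theta^j \wedge \tau$ forces every $a_{ij}$ to vanish, so $d\tau = \tau \wedge \beta$ on $U$. (One could instead invoke the Frobenius theorem to write $\tau = f\,dg$ locally and take $\beta = f^{-1}\,df$, but the coframe computation is self-contained and needs nothing beyond the nonvanishing of $\tau$.)

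The second step is to globalise. Choose an open cover $\{U_\alpha\}$ of $M$ on each member of which the previous step provides $\omega_\alpha \in \Omega^1(U_\alpha)$ with $d\tau = \tau \wedge \omega_\alpha$ on $U_\alpha$, together with a smooth partition of unity $\{\rho_\alpha\}$ subordinate to $\{U_\alpha\}$, and set $\omega := \sum_\alpha \rho_\alpha\,\omega_\alpha \in \Omega^1(M)$, the sum being locally finite. Then
\[
  \tau \wedge \omega = \sum_\alpha \rho_\alpha\,(\tau \wedge \omega_\alpha) = \sum_\alpha \rho_\alpha\,d\tau = \Bigl(\sum_\alpha \rho_\alpha\Bigr) d\tau = d\tau,
\]
which is the desired statement. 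The one place that requires a little care is this gluing: the local solutions $\omega_\alpha$ are far from unique (on overlaps they differ by multiples of $\tau$, since $\tau \wedge (\omega_\alpha - \omega_\beta) = 0$ and $\tau$ is nowhere zero), so they cannot simply be patched; convex-combining them works precisely because the defining relation $d\tau = \tau \wedge \omega_\alpha$ is affine in $\omega_\alpha$ with a "source term" $d\tau$ that does not depend on $\alpha$. I expect no genuine obstacle here — the content of the proposition is elementary — which is presumably why it is so often quoted without proof.
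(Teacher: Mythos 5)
Your proof is correct. The local step is essentially the same as the paper's: complete $\tau$ to a coframe, expand $d\tau$, and observe that $d\tau\wedge\tau=0$ kills the coefficients of $\theta^i\wedge\theta^j$, leaving $d\tau=\tau\wedge\beta$ on a chart. Where you genuinely diverge is in the globalisation. The paper first proves that on overlaps $\overline\omega_\alpha-\overline\omega_\beta=f_{\alpha\beta}\,\tau$, checks the cocycle identity $f_{\alpha\beta}+f_{\beta\gamma}+f_{\gamma\alpha}=0$, splits this cocycle with a partition of unity ($g_\beta=\sum_\alpha\rho_\alpha f_{\alpha\beta}$), and corrects the local forms to $\omega_\alpha=\overline\omega_\alpha+g_\alpha\tau$, which then literally agree on overlaps and glue to a global $\omega$. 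You instead set $\omega=\sum_\alpha\rho_\alpha\omega_\alpha$ outright and verify $\tau\wedge\omega=\sum_\alpha\rho_\alpha\,d\tau=d\tau$, exploiting exactly the point you identify: the defining condition is affine in $\omega$ with an $\alpha$-independent inhomogeneous term, so a convex combination of local solutions is again a solution. Your route is shorter and avoids the \v{C}ech bookkeeping entirely; the paper's route yields slightly more information along the way (local solutions that genuinely agree on overlaps, and the explicit description of the ambiguity $\omega\mapsto\omega+f\tau$), but for the statement as posed your argument is complete and there is no gap.
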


\begin{proof}
  It is clear that (2) implies (1), so we need to prove that (1)
  implies (2).  The idea is to show this locally and then to show that
  the local $\omega$'s glue to a global one-form.

  Since $\tau$ is nowhere-vanishing, we may complete to a local
  coframe $(\theta^1=\tau, \theta^2,\dots,\theta^n)$ defined on some
  chart $(U,\varphi)$ for $M$.  Then
  \begin{equation*}
    d\tau = \sum_{i<j} f_{ij} \theta^i \wedge \theta^j
  \end{equation*}
  for some $f_{ij} \in \F(U)$.  Then
  \begin{equation*}
    \tau \wedge d\tau = \sum_{i<j} f_{ij} \theta^1 \wedge \theta^i
    \wedge \theta^j = \sum_{1<i<j} f_{ij} \theta^1 \wedge \theta^i
    \wedge \theta^j,
  \end{equation*}
  so that $\tau \wedge d\tau = 0$ says that $f_{ij} = 0$ for
  $1<i<j$, and hence
  \begin{equation*}
    d\tau = \sum_{1<j} f_{1j} \theta^1 \wedge \theta^j = \theta^1
    \wedge \sum_{1<j} f_{1j} \theta^j = \tau \wedge \omega,
  \end{equation*}
  for $\omega = \sum_{1<j} f_{1j} \theta^j \in \Omega^1(U)$.  Notice
  that $\omega$ is not unique, since we could always add a component
  along $\tau$.  We will exploit this ambiguity when we glue the
  local $\omega$s.

  Let $\{(U_\alpha,\varphi_\alpha)\}_{\alpha\in A}$ be an atlas for
  $M$.  Then we have just shown that there exists
  $\overline\omega_\alpha \in \Omega^1(U_\alpha)$, where
  $d\tau = \tau \wedge \overline\omega_\alpha$ on $U_\alpha$.
  Since $\tau$ and $d\tau$ are global forms, on a non-empty
  overlap $U_{\alpha\beta}$,
  \begin{equation*}
    \tau \wedge (\overline\omega_\alpha - \overline\omega_\beta)= 0.
  \end{equation*}
  We claim that $\overline\omega_\alpha - \overline\omega_\beta =
  f_{\alpha\beta} \tau$ for some $f_{\alpha\beta} \in
  \F(U_{\alpha\beta})$.  To see this, write
  \begin{equation*}
    \overline\omega_\alpha - \overline\omega_\beta = \sum_{i=1}^n g_i \theta^i,
  \end{equation*}
  for some $g_i \in \F(U_{\alpha\beta})$, so that
  \begin{equation*}
    \tau \wedge (\overline\omega_\alpha - \overline\omega_\beta) =
    \tau \wedge \sum_i g_i \theta^i = \sum_i g_i \theta^1 \wedge
    \theta^i = \sum_{i>1} g_i \theta^1 \wedge \theta^i.
  \end{equation*}
  If $\tau \wedge (\overline\omega_\alpha - \overline\omega_\beta)=
  0$, we see that $g_i=0$ for $i>1$ and hence
  \begin{equation*}
    \overline\omega_\alpha - \overline\omega_\beta= f_{\alpha\beta} \tau,
  \end{equation*}
  where $f_{\alpha\beta} = g_1$.

  On a triple overlap $U_{\alpha\beta\gamma}$, we have that
  \begin{equation*}
    (f_{\alpha\beta}  + f_{\beta\gamma} + f_{\gamma\alpha}) \tau =
    \overline\omega_\alpha - \overline\omega_\beta +
    \overline\omega_\beta - \overline\omega_\gamma +
    \overline\omega_\gamma - \overline\omega_\alpha  = 0,
  \end{equation*}
  and since $\tau$ is nowhere-vanishing,
  \begin{equation}\label{eq:cocycle}
    f_{\alpha\beta}  + f_{\beta\gamma} + f_{\gamma\alpha} = 0.
  \end{equation}
  Let $\{\rho_\alpha\}$ denote a partition of unity subordinate to the
  atlas, with $\rho_\alpha$ supported in $U_\alpha$.  Define $g_\beta
  = \sum_\alpha \rho_\alpha f_{\alpha\beta} \in \F(M)$.  Then
  \begin{align*}
    g_\alpha - g_\beta &= \sum_\gamma (\rho_\gamma f_{\gamma\alpha} - \rho_\gamma f_{\gamma\beta})\\
                       &= \sum_\gamma (\rho_\gamma f_{\gamma\alpha} + \rho_\gamma f_{\beta\gamma}) && \tag{since $f_{\beta\gamma} = - f_{\gamma\beta}$}\\
                       &= - \sum_\gamma \rho_\gamma f_{\alpha\beta} && \tag{by \eqref{eq:cocycle}}\\
                       &= - f_{\alpha\beta}. && \tag{since $\sum_\gamma \rho_\gamma = 1$}
  \end{align*}
  Therefore $\overline\omega_\alpha - \overline\omega_\beta = (g_\beta -
  g_\alpha) \tau$, so that on $U_{\alpha\beta}$,
  \begin{equation*}
    \overline\omega_\alpha + g_\alpha \tau = \overline\omega_\beta + g_\beta \tau.
  \end{equation*}
  Let $\omega_\alpha = \overline\omega_\alpha + g_\alpha \tau\in
  \Omega^1(U_\alpha)$.  Then $\omega_\alpha = \omega_\beta$ on
  $U_{\alpha\beta}$ and hence it glues to a global form $\omega \in
  \Omega^1(M)$.  Notice that on $U_\alpha$,
  \begin{equation*}
    \tau \wedge \omega = \tau \wedge (\overline\omega_\alpha +
    g_\alpha \tau) = \tau \wedge \overline\omega_\alpha = d\tau,
  \end{equation*}
  as desired.
\end{proof}

\section{Some special dimensions}
\label{sec:some-spec-dimens}

In the bulk of the paper we have taken the dimension $n$ to be
generic; but for some special values of $n$ (i.e., $n=2$ and $n=5$)
the discussion needs to be refined.  We will briefly comment on how
the results in the bulk of paper are modified for such values of $n$.

\subsection{Two-dimensional galilean, carrollian and aristotelian structures}
\label{sec:two-dimens-galil}

If we think of galilean and carrollian structures as arising from non-
and ultra-relativistic limits of lorentzian geometry, it is visually
clear that in two dimensions the limits are equivalent simply by
re-interpreting what we call time and space, which are geometrically
equivalent in this dimension.  This would seem to contradict the
results of Sections~\ref{sec:gal-g-struct} and \ref{sec:car-g-struct},
which therefore require modification.

When it comes to galilean structures, $\coker\d$ is now
one-dimensional and the intrinsic torsion is still determined by
$d\tau$.  The main difference is that now $\tau \wedge d\tau = 0$ by
dimension, so we only have two (and not three) galilean structures,
depending on whether or not $d\tau$ vanishes.

Similarly, in the case of carrollian structures $\coker \d$ is again
one-dimensional and the intrinsic torsion is still determined by
$\eL_\xi h$, except that since $h$ is rank-one, there are no non-zero
traceless symmetric tensors.  Hence here too we have only two (and not
four) carrollian structures, depending on whether or not $\eL_\xi h$
vanishes.

Therefore the seeming discrepancy between galilean and carrollian
structures is not there in two dimensions after all.

The classes of aristotelian structures for $n=2$ also simplifies as a
result.  Now the structure group is $O(1) \cong \ZZ_2$ and the
submodules $\eA_1$ and $\eA_3$ are absent.  The submodules $\eA_2$ and
$\eA_4$ are one-dimensional: $G$ acts trivially on $\eA_4$ and via the
``determinant'' on $\eA_2$.  All said, there are four aristotelian
structures for $n=2$, depending on whether either of $d\tau$ and
$\eL_\xi h$ vanishes or not.

\subsection{Three-dimensional bargmannian structures}
\label{sec:three-dimens-bargm}

The classification of bargmannian structures also changes when $n=2$.
Now two submodules are absent: $(P\otimes \pi)_{\wedge^2} \cong
H\otimes \pi \wedge \pi$ and $(P \otimes \pi)_{\odot_0^2} \cong (P
\otimes \pi)_{\odot_0^2} \wedge \zeta$.  This results in a somewhat
simplified version of Figure~\ref{fig:hasse-bar-subspaces}, which we
omit.  There are some coincidences between the twenty-six bargmannian
structures: $\eB_2 = \eB_5$, $\eB_3 = \eB_9$, $\eB_4 = \eB_6$,
$\eB_7=\eB_{10}$, $\eB_8 = \eB_{11} = \eB_{13} = \eB_{18}$, $\eB_{14}
= \eB_{15} = \eB_{16} = \eB_{21}$, $\eB_{12} = \eB_{19} = \eB_{20}$,
$\eB_{17} = \eB_{22}= \eB_{23}= \eB_{24} = \eB_{25}$.  The minimal
structures coincide with the totally geodesic structures and the
``none of the above'' structures (except for the generic structure
$\eB_{26}$) are now totally umbilical.  In summary, there are eleven
three-dimensional bargmannian structures, whose Hasse diagram is depicted 
in Figure~\ref{fig:bar-hasse-3d}.

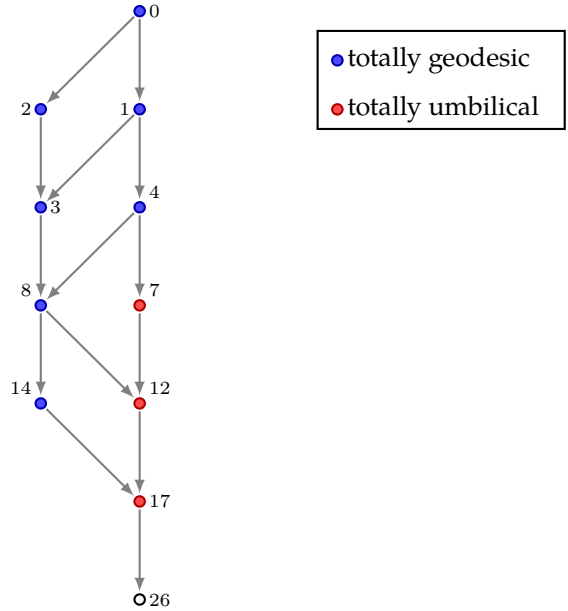
\begin{figure}[h!]
  \centering
  \begin{tikzpicture}[scale=1.3]
    %
    %
    \coordinate[label=right:{\scriptsize $0$}] (b0) at (-1,7); 
    \coordinate[label=left:{\scriptsize $1$}] (b1) at (-1,6); 
    \coordinate[label=left:{\scriptsize $2$}] (b2) at (-2,6); 
    \coordinate[label=right:{\scriptsize $3$}] (b3) at (-2,5); 
    \coordinate[label=above right:{\scriptsize $4$}] (b4) at (-1,5); 
    \coordinate[label=above right:{\scriptsize $7$}] (b7) at (-1,4); 
    \coordinate[label=above left:{\scriptsize $8$}] (b8) at (-2,4); 
    \coordinate[label=above right:{\scriptsize $12$}] (b12) at (-1,3); 
    \coordinate[label=above left:{\scriptsize $14$}] (b14) at (-2,3); 
    \coordinate[label=right:{\scriptsize $17$}] (b17) at (-1,2); 
    \coordinate[label=right:{\scriptsize $26$}] (b26) at (-1,1);
    %
    %
    \tikzset{every path/.style={->,thick,gris,shorten >=3pt, shorten <=3pt}}
    \draw (b0) -- (b1);
    \draw (b0) -- (b2);
    \draw (b2) -- (b3);
    \draw (b1) -- (b3);
    \draw (b1) -- (b4);
    \draw (b3) -- (b8);
    \draw (b4) -- (b7);
    \draw (b4) -- (b8);
    \draw (b7) -- (b12);     
    \draw (b8) -- (b12); 
    \draw (b8) -- (b14); 
    \draw (b14) -- (b17);
    \draw (b12) -- (b17);
    \draw (b17) -- (b26);
    %
    %
    \foreach \point in {b0,b1,b2,b3,b4,b8,b14}
    \filldraw [color=blue!70!black,fill=blue!70!white] (\point) circle (1.5pt);
    \foreach \point in {b7,b12,b17}
    \filldraw [color=red!70!black,fill=red!70!white] (\point) circle (1.5pt);
    \foreach \point in {b26}
    \filldraw [color=black,fill=white] (\point) circle (1.5pt);
    %
    %
    \draw [black,thick] (0.8,6.8) -- (0.8,5.8) -- (3.3,5.8) -- (3.3,6.8) -- cycle;
    \filldraw [color=blue!70!black,fill=blue!70!white] (1,6.5) circle (1.5pt) node [right,black] {totally geodesic}; 
    \filldraw [color=red!70!black,fill=red!70!white] (1,6) circle (1.5pt) node [right,black] {totally umbilical}; 
   \end{tikzpicture}
  \caption{Hasse diagram of three-dimensional bargmannian structures}
  \label{fig:bar-hasse-3d}
\end{figure}

\subsection{Five-dimensional galilean structures}
\label{sec:five-dimens-galil}

When $n=5$, the $\so(4)$-submodule $H\otimes \pi \wedge \pi$ in
$\coker\d$ described in Section~\ref{sec:gal-intr-tors} is not
irreducible, breaking up into selfdual and antiselfdual summands.
This means that if the galilean structure reduces further to a
$G_0$-structure, with $G_0 \cong \SO(4) \ltimes \RR^4$, we have five
$G_0$-submodules of $\coker \d$ and hence five galilean structures
instead of three.  The torsional Newton--Cartan geometries, where
$d\tau \wedge \tau \neq 0$, now come in three flavours: selfdual,
antiselfdual and neither, according to whether the restriction of
$d\tau$ to the four-dimensional oriented sub-bundle $\ker \tau$ is
selfdual, antiselfdual or neither.

\subsection{Five-dimensional aristotelian structures}
\label{sec:five-dimens-arist}

When $n=5$, the $\so(4)$-submodule $\eA_1 \cong \wedge^2 W$ defined in
Section~\ref{sec:ari-intr-tors}, with $W$ the four-dimensional vector
representation of $\so(4)$, is no longer irreducible.  Indeed, it
decomposes into selfdual and antiselfdual summands:
\begin{equation}
  \eA_1 = \eA_1^+ \oplus \eA_1^- = \wedge^2_+ W \oplus \wedge^2_- W.
\end{equation}
If the aristotelian structure reduces further to $G_0=\SO(4)$, then
$\eA_1^\pm$ are $G_0$-submodules and we must refine the classification
of aristotelian structures.  Theorem~\ref{thm:ari} gets modified:
there are not sixteen, but thirty-two aristotelian structures.  Each
of the eight structures in Theorem~\ref{thm:ari} whose intrinsic
torsion have a nonzero component in $\eA_1$ -- namely, those for which
$\tau \wedge d\tau \neq 0$ -- now can be of three distinct types,
depending on whether $d\tau$ is selfdual, antiselfdual or neither when
restricted to the four-dimensional distribution $\ker\tau$.

\subsection{Six-dimensional bargmannian structures}
\label{sec:six-dimens-bargm}

If $n=5$, and if the group of the bargmannian structure reduces to the
identity component $G_0 \cong \SO(4) \ltimes \RR^4$, then
$\SO(4)$-submodule $(P\otimes \pi)_{\wedge^2}$ is no longer
irreducible and decomposes into selfdual and antiselfdual parts.
Under the action of the boosts, it is still the case that the
$\SO(4)$-submodule $P\otimes \zeta$ maps into $(P\otimes
\pi)_{\wedge^2}$ and each of $(P\otimes \pi)_{\wedge_\pm^2}$ maps into
$\Xi_+$.  Therefore all that happens is that every bargmannian structure
$\eB_i$ (except for $\eB_{26}$) which contains $(P\otimes
\pi)_{\wedge^2}$ now comes in two more flavours: $\eB_i^+$ and
$\eB_i^-$, where $(P\otimes \pi)_{\wedge^2}$ is replaced by the
submodules $(P\otimes \pi)_{\wedge_+^2}$ or $(P\otimes
\pi)_{\wedge_-^2}$, respectively.  So now we have twenty additional
structures: $\eB_5^\pm$, $\eB_9^\pm$, $\eB_{13}^\pm$, $\eB_{15}^\pm$,
$\eB_{18}^\pm$, $\eB_{19}^\pm$, $\eB_{21}^\pm$, $\eB_{22}^\pm$,
$\eB_{24}^\pm$  and $\eB_{25}^\pm$.  I omit the rather more involved
Hasse diagram of the 47 six-dimensional bargmannian structures, as I do
their geometric characterisation.

\end{appendices}

\providecommand{\href}[2]{#2}\begingroup\raggedright\endgroup

\end{document}